\definecolor{azure}{rgb}{0.54, 0.17, 0.89}
\newcommand{\colorcomment}[1]{\Comment{ {\color{azure} #1}} }
\newcommand{\maincolorcomment}[1]{{\color{azure}// #1 } }
\def\thm@space@setup{\thm@preskip=2pt
\thm@postskip=2pt \itshape}
\newtheoremstyle{newstyle}      
{} 
{} 
{\mdseries} 
{} 
{\bfseries} 
{.} 
{ } 
{} 
\theoremstyle{definition}
\theoremstyle{remark}
\newtheorem{defn}{Definition}
\newtheorem*{rep@theorem}{\rep@title}
\newcommand{\newreptheorem}[2]{%
\newenvironment{rep#1}[1]{%
 \def\rep@title{#2 \ref{##1}}%
 \begin{rep@theorem}}%
 {\end{rep@theorem}}}
\newcommand{\N}{{\mathcal{N}}}
\renewcommand{\H}{{\mathcal{H}}}
\newcommand{\A}{{\mathcal{A}}}
\newcommand{\C}{{\mathcal{C}}}
\newcommand{\blue}{\textcolor{blue}}
\newcommand{\Z}{{\mathcal{Z}}}
\newcommand{\T}{{\mathcal{T}}}
\newcommand{\cF}{{\mathcal{F}}}
\newcommand{\cN}{{\mathcal{N}}}
\renewcommand{\tt}{\mathcal{T}}
\newcommand{\zz}{\mathcal{Z}}
\newcommand{\I}{{\mathcal I}}
\newcommand{\E}{{\mathcal E}}
\newcommand{\W}{{\mathcal W}}
\newcommand{\J}{{\mathcal J}}
\newcommand{\prot}{\mathrm{\Pi}}
\newcommand{\EXEC}{\textnormal{\textsf{EXEC}}}
\newcommand{\secparam}{\kappa}
\newcommand{\adv}{\mathcal{A}}
\newcommand{\env}{\mathcal{Z}}
\newcommand{\negl}{\textnormal{\textsf{negl}}}
\newcommand{\view}{\textnormal{\textsf{view}}}
\newcommand{\getsr}{{\:{\leftarrow{\hspace*{-3pt}\raisebox{.75pt}{$\scriptscriptstyle\$$}}}\:}}
\newcommand{\killsignal}{\textnormal{\texttt{kill}}}
\newcommand{\corrupt}{\textnormal{\texttt{corrupt}}}
\newcommand{\pp}{\textbf{pp}}
\newcommand{\G}{\mathcal G}
\newcommand{\rs}{\texttt{randSource}}
\newcommand{\vdf}{\textsc{VDF}}
\newcommand{\rvdf}{\textsc{RandVDF}}
\newcommand{\rvdfeval}{\textsc{RandVDF.Eval}}
\newcommand{\rvdfverify}{\textsc{RandVDF.Verify}}
\newcommand{\stake}{\texttt{stake}}
\newcommand{\parentBk}{\texttt{parentBlk}}
\newcommand{\unCnfTx}{\texttt{unCnfTx}}
\newcommand{\op}{\texttt{output}}
\newcommand{\ip}{\texttt{input}}
\newcommand{\pf}{\texttt{proof}}
\newcommand{\s}{\texttt{s}}
\newcommand{\randiter}{\texttt{randIter}}
\newcommand{\state}{\texttt{state}}
\newcommand{\content}{\texttt{content}}
\newcommand{\coin}{\texttt{coin}}
\newcommand{\Time}{\texttt{slot}}
\renewcommand*{\@fnsymbol}[1]{\ensuremath{\ifcase#1\or { }\or \dagger\or \ddagger\or
    \mathsection\or \mathparagraph\or \|\or **\or \dagger\dagger
    \or \ddagger\ddagger \else\@ctrerr\fi}}
\newcommand{\vb}[1] {{\textcolor{azure}{}}}
\newcommand{\protocol}{{\sf   PoSAT}\xspace}
\title{PoSAT: Proof-of-Work Availability and Unpredictability, without the Work}
\author{
Soubhik Deb$^\ddagger$,
Sreeram Kannan$^\ddagger$,
David Tse$^\star$\\
Email: soubhik@uw.edu,
ksreeram@uw.edu, 
dntse@stanford.edu 
}
\institute{
$^\ddagger$University of Washington,\\
$^\star$Stanford University}
\begin{document}
\sloppy

\maketitle
\begin{abstract}
    An important feature of Proof-of-Work (PoW) blockchains is full dynamic availability, allowing miners to go online and offline while requiring only $50\%$ of the online miners to be honest. 
    Existing Proof-of-stake (PoS), Proof-of-Space and related protocols are able to achieve this property only partially, either requiring the additional assumption that adversary nodes are online from the beginning and no new adversary nodes come online afterwards, or use additional trust assumptions for newly joining nodes.   We propose a new PoS protocol \protocol which can provably achieve dynamic availability fully without any additional assumptions. The protocol is based on the longest chain and uses a Verifiable Delay Function for the block proposal lottery to provide an arrow of time. The security analysis of the protocol draws on the recently proposed technique of Nakamoto blocks as well as the theory of branching random walks. An additional feature of \protocol is the complete unpredictability of who will get to propose a block next, even by the winner itself. This unpredictability is at the same level of PoW protocols, and  is stronger than that of existing PoS protocols using Verifiable Random Functions.  

\end{abstract}


\section{Introduction}
\label{sec:intro}

\subsection{Dynamic Availability}

Nakamoto's invention of Bitcoin \cite{bitcoin} in 2008 brought in the novel concept of a permissionless Proof-of-Work (PoW) consensus protocol. Following the longest chain protocol, a block can be proposed and appended to the tip of the blockchain if the miner is successful in solving the hash puzzle.  The Bitcoin protocol has several interesting features as a consensus protocol. An important one is {\em dynamic availability}.  Bitcoin can handle an uncertain and dynamic varying level of consensus participation in terms of mining power. Miners can join and leave as desired without any registration requirement. This is in contrast to most classical Byzantine Fault Tolerant (BFT) consensus protocols, which  assumes a fixed and known number of consensus nodes. Indeed, Bitcoin has been continuously available since the beginning, a period over which the hashrate has varied over a range of $14$ orders of magnitude. Bitcoin has been proven to be secure as long as the attacker has less than $50\%$ of the online hash power (the static power case is considered in \cite{bitcoin,backbone,pss16} and variable hashing power case is considered in \cite{nakamoto_bounded_delay,backbone_var_difficulty}).
    
    
Recently proof-of-stake (PoS)  protocols have emerged as an energy-efficient alternative to PoW. Instead of solving a difficult hash puzzle, nodes participate in a lottery  to win the right to append a block to the blockchain, with the probability of winning  proportional to a node's stake  in the total pool.   
This replaces the resource intense mining process of PoW, while ensuring fair chances to contribute and claim rewards. 

There are broadly two classes of PoS protocols: those derived from classical BFT protocols and those  inspired by Nakamoto's longest chain protocol. Attempts at blockchain design via the BFT approach include Algorand \cite{chen2016algorand,gilad2017algorand}, Tendermint \cite{tendermint}  and  Hotstuff \cite{yin2018hotstuff}. Motivated and inspired by Nakamoto longest chain protocol are the PoS designs of  Snow White \cite{bentov2016snow} and the Ouroboros family  of protocols \cite{kiayias2017ouroboros,david2018ouroboros,badertscher2018ouroboros}. One feature that distinguish the  PoS longest chain protocols from the BFT protocols is that they inherit the dynamic availability of Bitcoin: the chain always grows regardless of the number of nodes online.  But do these PoS longest chain protocols provide the same level of security guarantee as PoW Bitcoin in the dynamic setting?

\subsection{Static vs Dynamic Adversary}

Two particular papers focus on the problem of dynamic availability in PoS protocols: the sleepy model of consensus \cite{sleepy} and Ouroboros Genesis \cite{badertscher2018ouroboros}. In both papers, it was proved that their protocols are secure if less than $50\%$ of the online nodes are adversary. This condition is the same as the security guarantee in PoW Bitcoin, but there is an additional assumption: {\em all adversary nodes are always online starting from genesis and no new adversary nodes can join.}  While this static adversary assumption seems reasonable  (why would an adversary go to sleep?), in reality this can be a very restrictive  condition. In the context of Bitcoin, this assumption would be analogous to the statement that the hash power of the adversary is fixed in the past decade ( while the total hashing power increased $14$ orders of magnitude!) 
More generally, in public blockchains, PoW or PoS, no node is likely to be adversarial during the launch of a new blockchain token - adversaries only begin to emerge later during the lifecycle. 


The static adversary assumption underlying these PoS protocols is not superfluous but is in fact {\em necessary} for their security. Suppose for the $1^{st}$ year of the existence of the PoS-based blockchain, only $10\%$ of the total stake is online. Out of this, consider that all nodes are honest. Now, at the beginning of the $2^{nd}$ year, all $100\%$ of the stake is online out of which $20\%$ is held by adversary. At any point of time, the fraction of  online stake held by honest nodes is greater than $0.8$. However, both Sleepy and Genesis are not secure since the adversary can use its $20\%$ stake to immediately participate in  all past lotteries to win blocks all the way back to the genesis and then grow a chain {\em instantaneously} from the genesis to surpass the current longest chain (Figure \ref{fig:aot}(a)). Thus, due to this ``costless simulation'', newly joined adversary nodes not only increase the current online adversary stake, but effectively increase past online adversary stake as well. See Appendix~\ref{sec:costless_simulation_attack} for further details on how costless simulation renders both sleepy model of consensus and Ouroboros Genesis vulnerable to attacks. In contrast, PoW does not suffer from the same issue because it would take a long time to grow such a chain from the past and that chain will always be behind the current longest chain. Thus, PoW provides an {\em arrow of time}, meaning nodes cannot ``go back in time'' to mine blocks for the times at which they were not online. This property is key in endowing PoW protocols with the ability to tolerate fully dynamic adversaries wherein  both  honest  nodes  and  adversary  can have  varying  participation (Figure \ref{fig:aot}(b)).

\begin{figure*}
    \centering
    \includegraphics[width=\textwidth]{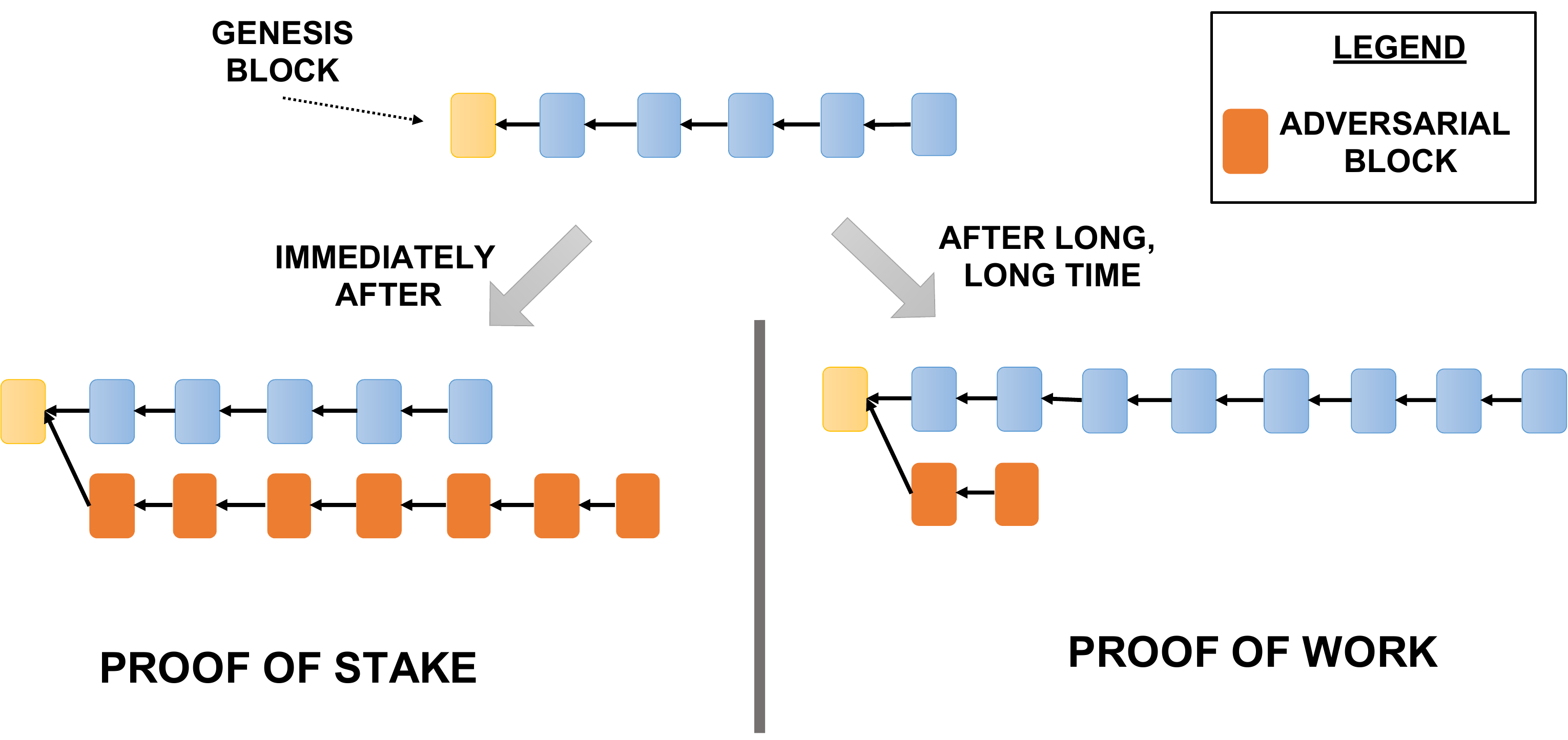}
    \caption{(a) Newly joined nodes in existing PoS protocols can grow a chain from genesis instantaneously. (b) Newly joined miners in PoW protocol takes a long time to grow such a chain and is always behind.}
    \label{fig:aot}
\end{figure*}

We point out that some protocols including Ouroboros Praos \cite{david2018ouroboros} and Snowhite \cite{bentov2016snow} require that nodes discard chains that fork off too much from the present chain. This feature was introduced to handle nodes with expired stake (or nodes that can perform key grinding) taking over the longest chain. While they did not specifically consider the dynamic adversary issue we highlighted, relying on previous checkpoints can potentially solve the aforementioned security threat. However, as was eloquently argued in Ouroboros Genesis \cite{badertscher2018ouroboros},  these checkpoints are unavailable to offline clients and newly joining nodes require advice from a trusted party (or a group inside which a majority is trusted). This trust assumption is too onerous to satisfy in practice and is not required in PoW. Ouroboros Genesis was designed to require no trusted joining assumption while being secure to long-range and key-grinding attacks. However, they are not secure against dynamic participation by the adversary: they are vulnerable to the aforementioned attack.  This opens the following question: 

{\em Is there a fully dynamically available PoS protocol which has full PoW  security guarantee, without additional trust assumptions? }

\subsection{\protocol achieves PoW dynamic availability}


We answer the aforementioned question in the affirmative.  Given that arrow-of-time is a central property of PoW protocols, we design a new PoS protocol, {\sf PoS with Arrow-of-Time} (\protocol),  also having this property using randomness generated from Verifiable Delay Functions (VDF).  VDFs are built on top of iteratively sequential functions, i.e., functions that are only computable sequentially: $f^{\ell}(x) = f \circ f \circ ... \circ f(x)$, along with the ability to provide a short and easily verifiable proof that the computed output is correct. Examples of such functions include (repeated) squaring in a finite group of unknown order \cite{cai1993towards,rivest1996time}, i.e,, $f(x)=2^x$ and (repeated) application of secure hash function (SHA-256) \cite{mahmoody2013publicly}, i.e,, $f(x) = \textsc{Hash}(x)$. While VDFs have been designed as a way for proving the passage of a certain amount of time (assuming a bounded CPU speed), it has been recently shown that these functions can also be used to generate an unpredictable randomness beacon \cite{ephraim2020continuous}. Thus, running the iteration till the random time $L$ when $\rvdf(x) = f^{L}(x) < \tau$ is within a certain threshold will result in $L$ being a geometric random variable. We will incorporate this randomized VDF functionality to create an arrow-of-time in our protocol.

The basic idea of our protocol is to mimic the PoW lottery closely: instead of using the solution of a Hash puzzle based on the parent block's hash as proof of work, we instead use the randomized VDF computed based on the parent block randomness and the coin's public key as the proof of stake lottery. In a PoW system,  we are required to find a string called "\texttt{nonce}" such that $\textsc{Hash}(\texttt{block}, \texttt{nonce}) < \tau$, a hash-threshold. Instead in our PoS system, we require $\rvdf(\rs, pk, \Time) < \tau$, where $\rs$ is the randomness from the parent block, $pk$ is the public key associated with the mining coin and $\Time$ represents the number of iterations of the $\rvdf$ since genesis. There are four differences, the first three are common in existing PoS systems: (1) we use ``$\rs$" instead of ``\texttt{block}" in order to prevent grinding attacks on the content in the PoS system, (2) we use the public-key ``pk" of staking coin instead of PoW ``\texttt{nonce}" to simulate a PoS lottery, (3) we use ``$\Time$" for ensuring time-ordering,
(4) instead of using a \textsc{Hash}, we use the $\rvdf$, which requires sequential function evaluation thus creating an ``arrow of time".

The first two aspects are common to many PoS protocols and is most similar to an earlier PoS protocol \cite{fan2018scalable}, however, crucially we use the $\rvdf$ function instead of a Verifiable random function (VRF) and a time parameter inside the argument used in that protocol. This change allows for full dynamic availability: if adversaries join late, they cannot produce a costless simulation of the time that they were not online and build a chain from genesis instantaneously. It will take the adversary time to grow this chain (due to the sequential nature of the $\rvdf$), by which time, the honest chain would have grown and the adversary will be unable to catch up. Thus, \protocol behaves more like PoW (Figure \ref{fig:aot}(b)) rather than existing PoS based on VRF's (Figure \ref{fig:aot}(a)).  We show that this protocol achieves full dynamic availability: if $\lambda_h(t)$ denotes the honest stake online at $t$, $\lambda_a(t)$ denotes the online adversarial stake at time $t$, it is secure as long as 
\begin{align}
\label{eq:informal}
    \lambda_h(t) > e \lambda_a(t) \qquad \mbox{for all $t$},
\end{align}
where $e$ is Euler's number $2.7182\ldots$. 

We observe that the security of this protocol requires a stronger condition than PoW protocols. 
The reason for this is that an adversary can potentially do parallel evaluation of VDF on {\em all} possible blocks. Since the randomness in each of the blocks is independent from each other, the adversary has many random chances to increase the chain growth rate to out-compete the honest tree. This is a consequence of the nothing-at-stake phenomenon: the same stake can be used to grind on the many blocks. The factor $e$ is the  resulting  amplification factor for the adversary growth rate.
This is avoided in PoW protocols due to the conservation of work inherent in PoW which requires the adversary to split its total computational power among such blocks. 

\begin{figure}
    \centering
    \includegraphics[scale=0.25]{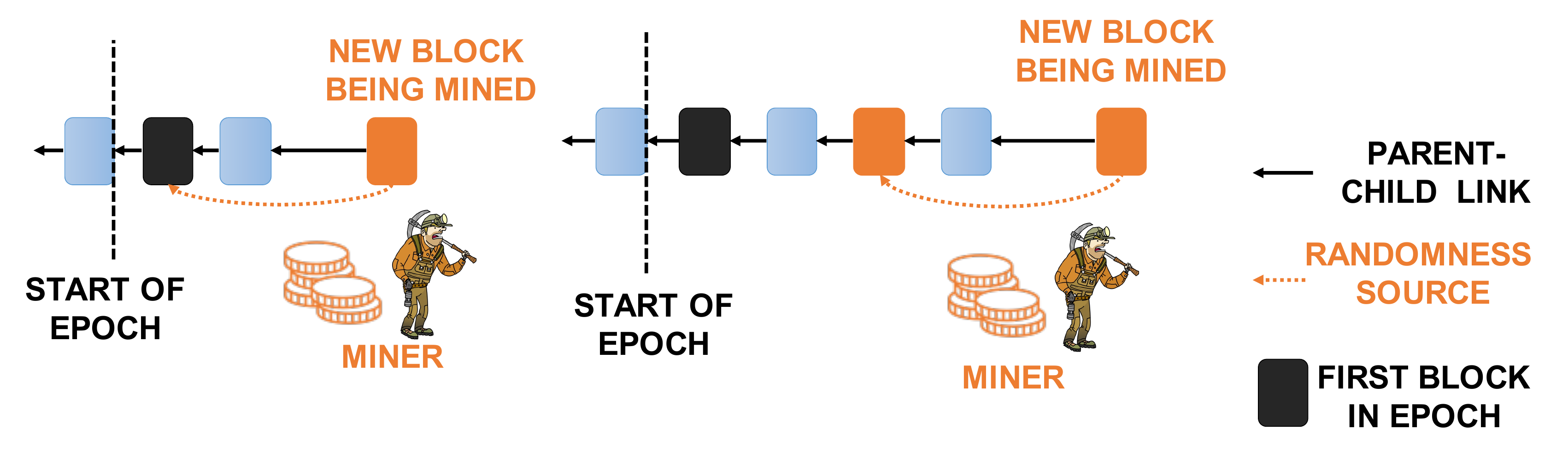}
    \caption{Left: A node uses randomness from the first block of the epoch. Right: Since a node  already won a block in the period, it uses that block's  randomness.}
    \label{fig:mining_link}
\end{figure}

We solve this problem in \protocol by  reducing the rate at which the block randomness is updated and hence reducing the block randomness grinding opportunities of the adversary. Instead of updating the block randomness at every level of the blocktree, we only update it once every $c$ levels (called an epoch). The larger the value of the parameter $c$, the slower the block randomness is updated.  The common source of randomness used to run the VDF lottery  remains the same for $c$ blocks starting from the genesis and is
updated only  when (a) the current block to be generated is at a depth that is a multiple of $c$, or (b) the coin used for the lottery is successful within the epoch of size $c$. The latter condition is necessary to create further independent winning opportunities for the node within the period $c$ once a slot is obtained with that coin. This is illustrated in Figure~\ref{fig:mining_link}. For $c=1$, this corresponds to the protocol discussed earlier. 


The following security theorem is proved about \protocol for general $c$, giving a condition for security (liveness and persistence) under {\em all} possible attacks .

\begin{reptheorem}{thm:main}[Informal]
\protocol with parameter $c$ is secure as long as  
\begin{equation}
\label{eq:cond_thm1}
\frac{\lambda^c_h(t)}{1+\lambda_{\max}\Delta} > \phi_c\lambda_a(t)  \qquad  \mbox{for all $t$},
\end{equation}
where $\lambda^c_h(t)$ is the honest stake this is online at time $t$ and has been online since at least $t-\Theta(c)$, $\Delta$ is the network delay between honest nodes, $\lambda_{\max}$ is a constant such that $\lambda^c_h(t) \leq \lambda_{\max}$ for all $t>0$, $\phi_c$ is a constant, dependent on $c$,  given in (\ref{eq:phi_c}). $\phi_1 = e$ and $\phi_c \rightarrow 1$ as $c \rightarrow \infty$.
 \end{reptheorem}
 
 \begin{table}[h]
\begin{center}
\begin{tabular}{ |c|c|c|c|c|c|c|c|c|c|c| }
 \hline
 $c$      & 1 & 2 & 3 & 4 & 5 & 6 & 7 & 8 & 9 & 10\\ \hline
 $\phi_c$   & e & 2.22547  & 2.01030 & 1.88255  & 1.79545  & 1.73110 & 1.68103 & 1.64060 &1.60705 & 1.57860 \\
 \hline
 $\frac{1}{1+\phi_c}$ & $\frac{1}{1+e}$& 0.31003 & 0.33219 &0.34691 &0.35772 & 0.36615 & 0.37299 &
 0.37870 & 0.38358 & 0.38780
 \\ \hline 
\end{tabular}
\end{center}
\caption{Numerically computed values of the adversary amplification factor $\phi_c$. The ratio $1/(1+\phi_c)$ is the adversarial fraction of stake that can be tolerated by \protocol when $\Delta = 0$.}
\vspace{-1cm}
\end{table}
 
We remark that in our PoS protocol, we have a known upper bound on the rate of mining blocks (by assuming that the entire stake is online). We can use this information to set $1+\lambda_{\max} \Delta$ as close to $1$ as desired by simply setting the mining threshold appropriately. Furthermore, by setting $c$ large, $\phi_c \approx 1$ and thus \protocol can achieve the same security threshold as PoW under full dynamic availability.  
 The constant $\phi_c$ is the amplification of the adversarial chain growth rate due to nothing-at-stake, which we calculate using the theory of branching random walks\cite{shi}. The right hand side of (\ref{eq:cond_thm1}) can therefore be interpreted as the growth rate of a private adversary tree with the adversary mining on every block.  Hence, condition (\ref{eq:cond_thm1}) can be interpreted 
 as the condition that the private Nakamoto attack \cite{bitcoin} does not succeed. However, Theorem \ref{thm:main} is a {\em security theorem}, i.e. it gives a condition under which the protocol is secure under {\em all} possible attacks.  Hence what Theorem \ref{thm:main} says is therefore that among all possible attacks on \protocol, the private attack is the {\em worst} attack. We prove this by using the technique of blocktree partitioning and Nakamoto blocks, introduced in \cite{dembo2020everything}, which reduce all attacks to a union of private attacks.

We note that large $c$ is beneficial from the point of view of getting a tight security threshold. However, we do require $c$ to be finite (unlike other protocols like Ouroboros that continue to work under $c$ being infinite). This is because the latency to confirm a transaction increases linearly in $c$ (see Section~\ref{sec:analysis}). Furthermore, an honest node on coming online has to wait until encountering the next epoch beginning before it can participate in proposing blocks and the worst-case waiting time  increases linearly with $c$. We note that the adversary cannot use the stored blocks in the next epoch, thus having a bounded reserve of blocks. The total number of blocks stored up by an adversary potentially increases linearly in the epoch size, thus requiring the  confirmation depth and thus latency to be larger than $\Theta(c)$.  By carefully bounding this enhanced power of the adversary, for any finite $c$, we show that \protocol is secure.

Assuming $\lambda_{\max}\Delta$ to be small, the comparison of \protocol with other protocols is shown in Table~\ref{tab:1}. Here we use  $\Lambda_a$ to be the largest adversary fraction of the total stake online at any time during the execution ($\Lambda_a = \sup_t \lambda_a(t)$). Protocols whose security guarantee assumes all adversary nodes are online all the time effectively assumes that $\lambda_h(t) > \Lambda_a$. Thus existing protocols  have limited dynamic availability.

\begin{center}
 \begin{tabular}{||c | c | c | c | c | c ||} 
 \hline
 & Ourboros & Snow White /  & Genesis / & Algorand &  \protocol  \\
 &  & Praos & Sleepy & &   \\
 \hline\hline
 Dynamic &  $\lambda_h(t) > \Lambda_a$ & $\lambda_h(t) > \Lambda_a$  & $\lambda_h(t) > \Lambda_a$ & No   &   $\lambda^c_h(t) >  \phi_c \lambda_a(t)$ \\ 
 Availability &  &  & & &  \\ 
 \hline
 Predictability &  Global  & Local & Local & Local & None \\
 \hline
\end{tabular} \label{tab:1}
\end{center}

\subsection{\protocol has PoW Unpredictability}

Another key property of PoW protocols is their ability to be unpredictable: no node (including itself) can know when a given node will be allowed to propose a block ahead of the proposal slot. We point out that $\protocol$ with any parameter $c$ remains unpredictable due to the the unpredictability of the RandVDF till the threshold is actually reached. We refer the reader to Fig.~\ref{fig:mining_link}(a) where if the randomness source is at the beginning of the epoch it is clear that the unpredictability of the randomized VDF implies unpredictability in our protocol. However, in case the miner has already created a block within the epoch (Fig.~\ref{fig:mining_link}(b)), the randomness source is now her previous block. This can be thought of as a continuation of the iterative sequential function  from the beginning of the epoch and hence it is also unpredictable as to when the function value will fall below a threshold. Thus \protocol achieves true unpredictability, matching the PoW gold standard, where even an all-knowing adversary has no additional predictive power. 

The first wave of PoS protocols such as Ouroboros \cite{kiayias2017ouroboros}  are fully predictable as they rely on mechanisms for proposer election that provide global knowledge of all proposers in an epoch ahead of time. The concept of Verifiable Random Functions (VRF), developed in \cite{vrf,vrf2}, was pioneered in the blockchain context in Algorand \cite{chen2016algorand,gilad2017algorand}, as well as applied in Ouroboros Praos \cite{david2018ouroboros} and Snow White \cite{bentov2016snow}. The use of a private leader election using VRF enables no one else other than the proposer to know of the slots when it is allowed to propose blocks.  However, unlike Bitcoin, the proposer itself can predict. Thus, these protocols still allow {\em local} predictability. 
The following vulnerability is caused by local predictability: a rational node may then willingly sell out his slot to an adversary. In Ouroboros Praos, such an all-knowing adversary needs to corrupt only $1$ user at a time (the proposer) adaptively in order to do a double-spend attack. He will first let the chain build for some time to confirm a transaction, and then get the bribed proposers one at a time to build a competing chain. Algorand is more resilient, but even there, in each step of the BFT algorithm, a different committee of nodes is selected using a VRF based sortition algorithm. These nodes are locally predictable as soon as the previous block is confirmed by the BFT - and thus an all-knowing adversary only needs to corrupt a third of a committee. Assuming each committee is comprised of $K$ nodes ($K$  being a constant), the adversary only needs to corrupt $\frac{K}{3N}$ fraction of the nodes. Refer to Appendix~\ref{sec:bribery_attack} for further details.

We summarize the predictability of various protocols in Table~\ref{tab:1}.

\subsection{Related Work}

Our design is based on frequent updates of randomness to run the VDF lottery. PoS protocols that update randomness at each iteration have been utilized in practice as well as theoretically proposed \cite{fan2018scalable} - they do not use VDF and have {\bf neither} dynamic availability nor unpredictability. Furthermore, they still face nothing-at-stake attacks. In fact, the amplification factor of $e$ we discussed earlier has been first observed in a Nakamoto private attack analysis in \cite{fan2018scalable}. This analysis was subsequently extended to a full security analysis against {\em all}
attacks in \cite{pos_revisited_arxiv,dembo2020everything}, where it was shown that the private attack is actually the worst attack. In \cite{pos_revisited_arxiv}, the idea of $c$-correlation was introduced to reduce the rate of randomness update and to reduce the severity of the nothing-at-stake attack; we borrowed this idea from them in the design of our VDF-based protocol, \protocol. 

There have been attempts to integrate VDF into the proof-of-space paradigm \cite{cohen2019chia} as well as into the proof-of-stake paradigm \cite{azouvi2018betting}, \cite{long2019nakamoto}, all using a VRF concatenated with a VDF. But, in \cite{cohen2019chia}, the VDF runs for a fixed duration depending on the input and hence is predictable, and furthermore do not have security proofs for dynamic availability. In \cite{azouvi2018betting}, the randomness beacon is not secure till the threshold of $1/2$ as claimed by the authors since it  has a randomness grinding attack which can potentially expand the adverarial power by at least factor $e$. There are three shortcomings in \cite{long2019nakamoto} as compared to our paper: (1) even under static participation, they only focus on an attack where an adversary grows a private chain, (2) there is no modeling of dynamic availablility and a proof of security and (3) since the protocol focuses only on $c=1$, they can only achieve security till threshold $1/{1+e}$, not till $1/2$. We note that recent work \cite{brown2019formal} formalized that a broad class of PoS protocols suffer from either of the two vulnerabilities: (a) use recent randomness, thus being subject to nothing-at-stake attacks or (b) use old randomness, thus being subject to prediction based attacks (even when only locally predictable). We note that \protocol with large $c$ completely circumvents both vulnerabilities using the additional VDF primitive since it is able to use old randomness while still being fully unpredictable.

We want to point out that dynamic availability is distinct and complementary to {\em dynamic stake}, which implies that the set of participants and their identities in the mining is changing based on the state of the blockchain. We note that there has been much existing work addressing issues on the dynamic stake setting - for example, the $s$-longest chain rule in \cite{badertscher2018ouroboros}, whose adaptation to our setting we leave for future work. We emphasize that the dynamic availability problem is well posed even in the static stake setting (the total set of stakeholders is fixed at genesis).

\subsection{Outline}

The rest of the paper is structured as follows. Section \ref{sec:protocols} presents the VDF primitive we are using and the overall protocol. Section \ref{sec:model} presents the model. Section \ref{sec:analysis} presents the details of the security analysis.

\section{Protocol}
\label{sec:protocols}
\subsection{Primitives}
\label{sec:primitives}
In this section, we give an overview of $\vdf$s and refer the reader  to detailed definitions in Appendix~\ref{sec:vdf}.
\begin{definition}[from \cite{boneh2018verifiable}]
A $\vdf$ $V = (\textsc{Setup}, \textsc{Eval}, \textsc{Verify})$ is a triple of algorithms as follows:
\begin{itemize}
    \item $\textsc{Setup}(\lambda,\tau) \rightarrow \mathbf{pp} = (ek, vk)$ is a randomized algorithm that produces an evaluation
    key $ek$ and a verification key $vk$. 
    \item $\textsc{Eval}(ek, input, \tau) \rightarrow (O, proof)$ takes an  $input \in \mathcal{X}$, an evaluation key $ek$, number of steps $\tau$ and produces an output $O \in \mathcal{Y}$ and a (possibly empty) $proof$. 
    \item $\textsc{Verify}(vk, input, O, proof,\tau) \rightarrow {Yes, No}$ is a deterministic algorithm takes an input, output, proof, $\tau$ and outputs $Yes$ or $No$. 
\end{itemize}
\end{definition}
$\vdf$.\textsc{Eval} is usually comprised of sequential evaluation: $f^{\ell}(x) = f \circ f \circ ... \circ f(x)$  along with the ability to provide a short and easily verifiable proof. In particular, there are three separate functions $\vdf$.\textsc{Start}, $\vdf$.\textsc{Iterate} and $\vdf$.\textsc{Prove} (the first function is used to initialize,  the second one operates for the number of steps and the third one furnishes a proof). This is illustrated in Figure~\ref{fig:traditional_VDF} on the left.
While $\vdf$s have been designed as a way for proving the passage of a certain amount of time, it has been recently shown that these functions can also be used to generate an unpredictable randomness beacon \cite{ephraim2020continuous}. Thus, running the iteration till the random time $L$ when $\rvdf(x) = f^{L}(x) < \tau$ generates the randomness beacon. This is our core transformation to get a randomized $\vdf$. This is shown in Figure~\ref{fig:proposed_VDF} on the right. Instead of running for a fixed number of iterations, we run the $\vdf$ iterations till it reaches a certain threshold. Our transformation is relatively general purpose and most $\vdf$s can be used with our construction. For example, a $\vdf$ (which is based on squaring in a group of unknown order) is an ideal example for our construction \cite{pietrzak2018simple,wesolowski2020efficient}. In the recent paper \cite{ephraim2020continuous}, for that sequential function, a new method for obtaining a short proof whose complexity does not depend (significantly) on the number of rounds is introduced - our protocol can utilize that $\vdf$ as well. They show furthermore that they obtain a continuous $\vdf$ property which implies that partial $\vdf$ computation can be continued by a different party - we do not require this additional power in our protocol.

For the $\rvdf$ in $\protocol$, as illustrated in Fig~\ref{fig:proposed_VDF}, $\Time$ plays a similar role as the timestamps in other PoS protocols like \cite{sleepy}. The $\Time$ basically mentions the number of times the $\rvdf$ has iterated since the genesis and when the speed of the iteration of $\rvdf$ is constant, $\Time$ is an approximation to the time elapsed since the beginning of the operation of the PoS system.

Normally, a $\vdf$ will satisfy {\em correctness} and {\em soundness}. And we require $\rvdf$ to also satisfy correctness and soundness as defined in Appendix~\ref{sec:vdf}. 

\begin{figure}[ht!]
    \centering
   \subfloat[\textsc{$\vdf$.Eval}($input,ek,\tau$)  ]{\label{fig:traditional_VDF}
      \includegraphics[width=.34\textwidth]{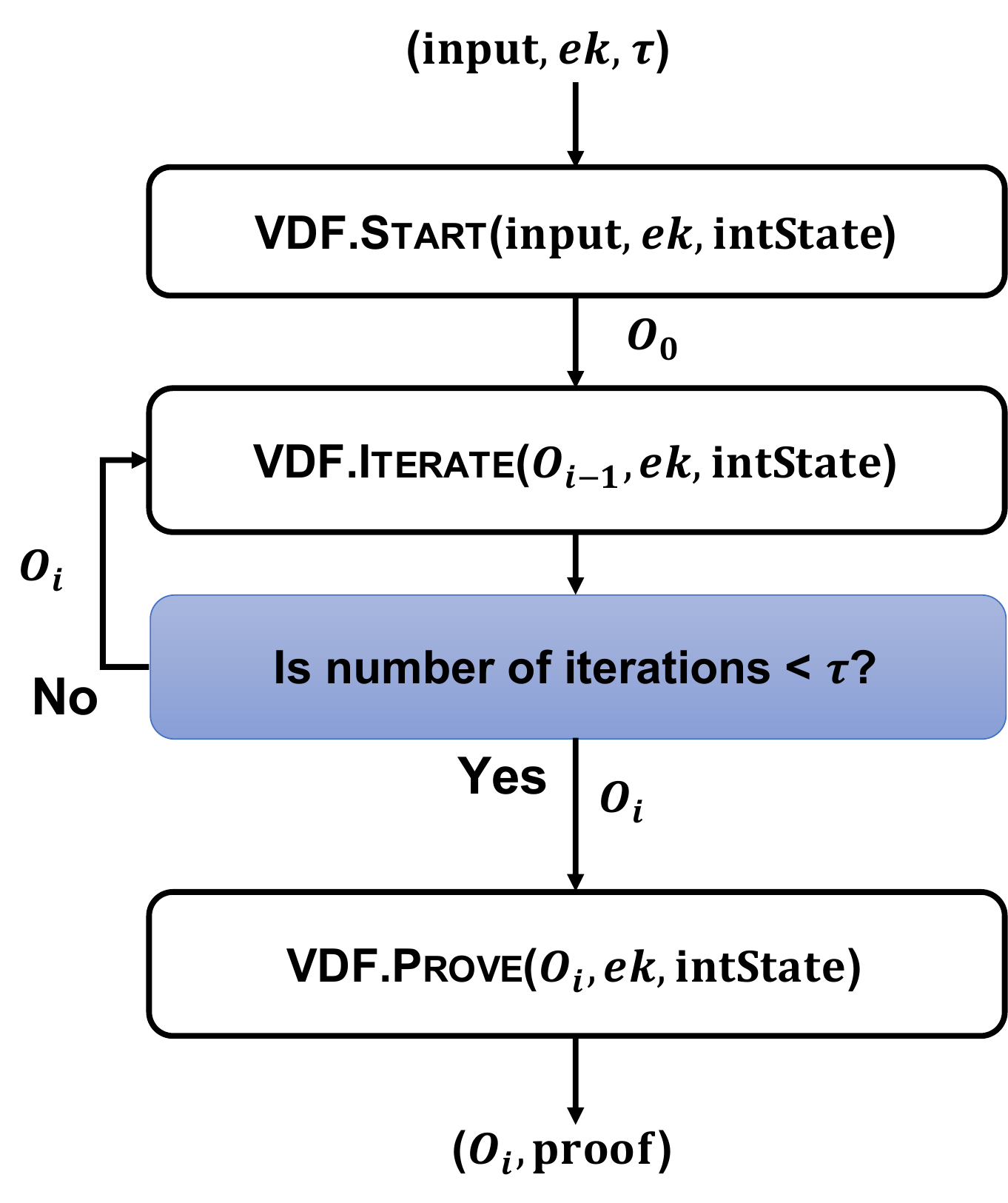}}
~
   \subfloat[ $\rvdf$.\textsc{Eval}($input,ek,s$).]{\label{fig:proposed_VDF}
      \includegraphics[width=.42\textwidth]{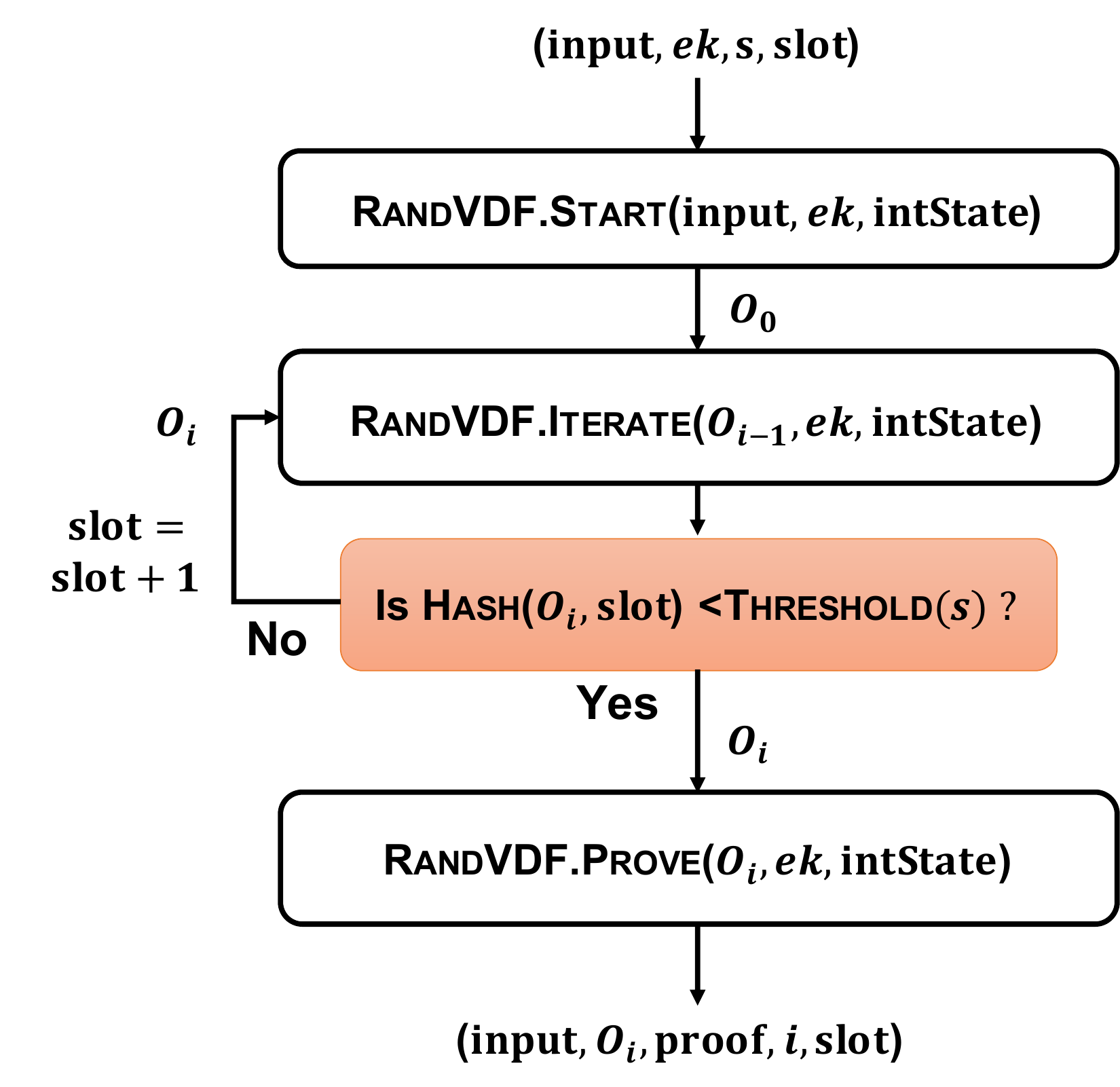}}
   \caption{$\vdf$.{\sc Eval}($\ip,ek,\tau$) requires the number of iterations that $\vdf$.{\sc Iterate} should run. On the other hand, $\rvdf$.{\sc Eval}($\ip,ek,\s,\Time$) requires the expected number of number of iterations $\rvdf$.{\sc Iterate} (denoted by $\s$) must run.}\label{bs1}
\end{figure}

A key feature of VDF is that if the VDF takes $T$ steps, then the prover should be able to complete the proof in time (nearly) proportional to $T$ and the verifier should be able to verify the proof in (poly)-logarithmic time. This makes it feasible for any node that receives a block to quickly verify that the VDF in the header is indeed correctly computed, without expending the same effort that was expended by the prover. We refer the reader interested in a detailed analysis of these complexities  to Section 6.2 in \cite{pietrzak2018simple} for the efficiency calculation or Section 2.3 in \cite{ephraim2020continuous}. 

\subsection{Protocol description}
The pseudocode for the \protocol is given in Algorithm~\ref{alg:PoSAT}.

\begin{algorithm}[H]
	{\fontsize{9pt}{9pt}\selectfont \caption{PoSAT}
	\label{alg:PoSAT}
\begin{algorithmic}[1]
\Procedure{Initialize}{ } \colorcomment{ all variables are global}
\State $\texttt{blkTree} \gets \textsc{Sync}()$ \colorcomment{syncing with peers}
\State  \unCnfTx $  \gets \phi$ \colorcomment{pool of unconfirmed {\sf tx}s}
\State $\parentBk \gets \texttt{blkTree}.\textsc{Tip}()$ \colorcomment{tip of the longest chain in \texttt{blkTree}}
\State $\rs \gets \texttt{None}$ \colorcomment{will be updated at next epoch beginning}
\State $\Time \gets \texttt{None}$ \colorcomment{will be updated at next epoch beginning}
\State \Return False
\EndProcedure
\vspace{1mm}
\Procedure{PosLeaderElection}{coin} 
\State ($\rvdf.ek$, $\rvdf.vk$),($\textsc{Sign}.vk$, $\textsc{Sign}.sk$) $\gets$ \coin.{\sc Keys()}
\State $\stake \gets {\rm coin}.\textsc{Stake}(\textsc{ SearchChainUp}(\parentBk))$ \colorcomment{update the stake}\label{algo:stake}
\State $\s \gets \textsc{UpdateThreshold}(\stake)$ \colorcomment{update the threshold}
\State $\ip \gets \rs$
\State \maincolorcomment{Calling $\rvdfeval$}
\State $(\ip, \op, \pf, \randiter,\Time) \gets \textsc{RandVDF.Eval}(\ip,ek,\s,\Time)$

\State $\rs \gets \op$ \label{algo:randsourceupdate2} \colorcomment{update source of randomness }
\State  $\state \gets \textsc{Hash}(\parentBk)$
\State $\content  \gets \langle \unCnfTx, \coin, \ip, \rs, \pf, \randiter,
\state, \Time \rangle$

\State \Return $\langle \texttt{header}, \content, \textsc{Sign}(\content, $\textsc{Sign}.sk$)  \rangle$

\EndProcedure
\vspace{1mm}

\Procedure{ReceiveMessage}{\texttt{X}} \colorcomment{receives messages from network}
\If {\texttt{X} is a valid {\sf tx}}
    \State $\unCnfTx  \gets  \unCnfTx \cup\; \{\texttt{X}\}$

	\ElsIf{ {\sc IsValidBlock}(\texttt{X})}
		\If { $\parentBk.\textsc{Level}() < \texttt{X}.\textsc{Level}()$ }
            \State \textsc{ChangeMainChain(\texttt{X})} \colorcomment{if the new chain is longer}
            \State $\parentBk \gets \texttt{X}$ \colorcomment{update the parent block to tip of the longest chain} \label{algo:parent-recording}
            \If {$\texttt{X}.\textsc{Level}() \;\%\; c == 0$} 
                \State $\rs \gets \texttt{X}.\content.\rs$ \label{algo:randsourceupdate1}
                
            \Else
                \State $\rs \gets \rs$ \label{algo:randsourceupdate3}
            \EndIf
    		\If{\texttt{participate} == True}                
    		    \State $\rvdf.\textsc{Reset}()$ \colorcomment{reset the $\rvdf$} \label{algo:reset}
    		\EndIf
    		\State \maincolorcomment{Epoch beginning}
    		\If{(\texttt{X}.\textsc{Level}() \% c == 0) \& (\texttt{participate} == False)}
    		    \label{algo:participate1}
                \State $\Time \gets \texttt{X}.\content.\Time$
                \label{algo:participate2}
                \State \texttt{participate} = True
            \EndIf
    	\EndIf

        \EndIf

\EndProcedure
\vspace{1mm}

\Procedure{IsValidBlock}{\texttt{X}} \colorcomment{returns true if a block is valid}
	\If { {\bf not} {\sc IsUnspent}(\texttt{X}.\content.\coin)}
	 \Return False
	\EndIf
	\If{$\textsc{ParentBlk}(\texttt{X}).\content.\Time \geq X.\content.\Time$}     \label{algo:time-ordering}
	    \State \Return False \colorcomment{ensuring time ordering} 
	\EndIf
	\State $\s \gets \textsc{UpdateThreshold}(\textsc{ParentBlk}(\texttt{X}))$
	\If{\textsc{Hash}(\texttt{X}.\content.\{\rs,\Time\}) $>$ \textsc{Threshold}$(\texttt{s})$}
	\Return False
	\EndIf
	\State \maincolorcomment{verifying the work}
	\State \Return $\textsc{RandVDF.Verify}(\texttt{X}.\coin.vk,\texttt{X}. \content.\{\ip,\rs,\pf,\randiter\})$ \label{algo:randvdfverify}
\EndProcedure
\vspace{1mm}

\vspace{1mm}
\Procedure{Main}{ }  \colorcomment{main function}
    \State \texttt{participate} $=$ \textsc{Initialize}()
    \State \textsc{StartThread(ReceiveMessage)} \colorcomment{parallel thread for receiving messages}
    \While{True}
        \If{\texttt{participate} == True}
            \State \texttt{block} =  \textsc{PosLeaderElection}(\coin)
            \State\textsc{SendMessage}(\texttt{block}) 
        \colorcomment{broadcast to the whole network}
        \EndIf
    \EndWhile
\EndProcedure

\end{algorithmic}
}
\end{algorithm}

\subsubsection{Initialization.} An honest coin $n$ on coming online, calls \textsc{Initialize}() where it obtains the current state of the blockchain, \texttt{blkTree}, by synchronizing with the peers via \textsc{Sync}() and initializes global variables. However,
the coin $n$ can start participating in the leader election only after encountering the next epoch beginning, that is, when the depth of the \texttt{blkTree} is a multiple of $c$. This is indicated by setting $\texttt{participate}_n$ to False. Observe that if the coin $n$ is immediately allowed to participate in leader election, then, the coin $n$ would have to initiate $\rvdfeval$ from the $\rs$ contained in the block at the beginning of the current epoch. Due to the sequential computation in  $\rvdf$, the coin $n$ would never be able to participate in the leader elections for proposing blocks at the tip of the blockchain. In parallel, the coin keeps receiving messages and processes them in \textsc{ReceiveMessage}(). On receiving a valid block that indicates epoch beginning,  $\rs_n$, $\Time_n$ and $\texttt{participate}_n$ are updated accordingly (lines~\ref{algo:randsourceupdate1},~\ref{algo:participate1},~\ref{algo:participate2}) for active participation in leader election. 

\subsubsection{Leader election.}
The coin $n$ records the tip of the longest chain of \texttt{blkTree} in $\parentBk_n$ (line~\ref{algo:parent-recording}) and contests leader election for appending block to it. $\rvdf.\textsc{Eval}(\ip_n,\rvdf.ek_n,\texttt{s}_n)$ is used to compute an unpredictable randomness beacon that imparts unpredictability to leader election. The difficulty parameter $\texttt{s}_n$ is set proportional to the current $\stake_n$ of the coin $n$ using $\textsc{UpdateThreshold}(\stake_n)$ and $\rs_n$ is taken as $\ip_n$. $\rvdf.\textsc{Eval}(\ip_n,ek_n,\texttt{s}_n, \Time_n)$ is an iterative function composed of:
\begin{itemize}
    \item $\rvdf.\textsc{Start}(\ip_n,\rvdf.ek_n,\texttt{IntState}_n)$ initializes the iteration by setting initial value of $\op_n$ to be $\ip_n$. Note that $\texttt{IntState}_n$ is the internal state of the $\rvdf$. 
    \item $\rvdf.\textsc{Iterate}(\op_n,\rvdf.ek_n,\texttt{IntState}_n)$ is the iterator function that updates $\op_n$ in each iteration. At the end of each iteration, it is checked whether $\textsc{Hash}(\op_n,\Time_n)$ is less than $\textsc{Threshold}(\texttt{s}_n)$, which is set proportional to $\texttt{s}_n$. 
    If \texttt{No}, $\Time_n$ is incremented by $1$ and current $\op_n$ is taken as input to the next iteration. If \texttt{Yes}, then it means coin $n$ has won the leader election and $\op_n$ is passed as input to $\rvdf.\textsc{Prove}(.)$. Observe that the number of iterations, $\randiter_n$, that would be required to pass this threshold is unpredictable which lends to randomness beacon. Recall that $\Time_n$ is a counter for number of iterations since genesis. In a PoS protocol, it is normally ensured that the timestamps contained in each block of a chain are ordered in ascending order. Here, in $\protocol$, instead we ensure that the $\Time$ in the blocks of a chain are ordered, irrespective of who proposed it. This is referred to as {\em time-ordering}. The reader can refer to Appendix~\ref{sec:block_enumeration_attack} and \ref{sec:long_range_attack} for further details on what attacks can transpire if time-ordering is not ensured.
    The rationale behind setting $\textsc{Threshold}(\texttt{s}_n)$ proportional to $\texttt{s}_n$ is that even if the stake $\texttt{s}_n$ is sybil over multiple coins, the probability of winning leader election in at least one coin remains the same. See Appendix~\ref{sec:sybil_attack} for detailed discussion.
    \item $\rvdf.\textsc{Prove}(\op_n,\rvdf.ek_n,\texttt{IntState}_n)$ operates on $\op_n$ using  $\rvdf.ek_n$ and $\texttt{IntState}_n$ to generate $\pf_n$ that certifies the iterative computation done in the previous step.
\end{itemize}
\noindent The source of randomness $\rs_n$ can be updated in two ways:
\begin{itemize}
    \item a block, proposed by another coin, at epoch beginning is received  (line~\ref{algo:randsourceupdate1}) 
    \item if coin $n$ wins a leader election and proposes its own block (line~\ref{algo:randsourceupdate2}).
\end{itemize}

While computing $\rvdfeval(.)$, if a block is received that updates $\parentBk_n$, then, $\rvdf.\textsc{Reset}()$ (line~\ref{algo:reset}) pauses the ongoing computation, updates $\s_n$ and continues the computation with updated $\textsc{Threshold}(\texttt{s}_n)$. If $\rs_n$ is also updated, then, $\rvdf.\textsc{Reset}()$ stops the ongoing computation of $\rvdfeval(.)$ and calls \textsc{PoSLeaderElection()}.

\subsubsection{Content of the block.}
Once a coin is elected as a leader, all unconfirmed transactions in its buffer are added to the $\content$. The $\content$ also includes the identity $\coin_n$, $\ip_n$, $\rs_n$, $\pf_n$, $\randiter_n$, $\Time_n$ from $\rvdfeval(.)$. The $\state$ variable in the content contains the hash of parent block, which ensures that the content of the parent block cannot be altered. Finally, the header and the content is signed with the secure signature $\textsc{SIGN}.sk_n$ and the block is proposed. When the block is received by other coins, they check that the time-ordering is maintained (line~\ref{algo:time-ordering}) and verify the work done by the coin $n$ using $\rvdfverify(.)$ (line~\ref{algo:randvdfverify}). Note that the leader election is independent of the content of the block and content of previous blocks. This follows a standard practice in existing PoS protocols such as \cite{badertscher2018ouroboros} and \cite{sleepy} for ensuring that a grinding attack based on enumerating the transactions won't be possible. The reader is referred to Appendix~\ref{sec:content_grinding_attack} for further details. However, this allows the adversary to create multiple blocks with the same header but different content. Such copies of a block with the same header but different contents are known as a ``forkable string” in \cite{kiayias2017ouroboros}. We show in the section~\ref{sec:analysis} that the \protocol is secure against all such variations of attacks.

\subsubsection{Confirmation rule.} A block is confirmed if the block is $k-$deep from the tip of the longest chain. The value of $k$ is determined by the security parameter.

\section{Model}
\label{sec:model}

We will adopt a continuous-time  model. Like the $\Delta$-synchronous model in \cite{pss16}, we assume there is a bounded communication delay $\Delta$ seconds between the honest nodes (the particular value of latency of any transmission inside this bound is chosen by the adversary). 

The blockchain is run on a network of $N$ honest nodes and a set of adversary nodes.  Each node holds a certain number of  coins (proportional to their stake). We allow  nodes to join and leave the network, thus the amount of honest/adversarial stake  which is participating in the protocol varies as a function of time. Recall that, as described in section~\ref{sec:protocols}, a coin coming online can only participate in the leader election after encountering the next epoch beginning.  Let $\lambda_h(t)$ be defined as the stake of the honest coins that are online at time $t$ and has encountered at least one epoch beginning. Thus, $\lambda_h(t)$ is the rate at which honest nodes win leader elections. Let $\lambda_a(t)$ be the stake controlled by the adversary.  We will assume there exist constants $\lambda_{\min}, \lambda_{\max} > 0$  such that 
\begin{align}
    \lambda_{\min} \leq \lambda_h(t) \leq \lambda_{\max} \quad \forall \quad t\geq 0.
\end{align}
The existence of $\lambda_{\max}$ is obvious since we are in a proof-of-stake system, and $\lambda_{\max}$ denotes the rate at which the leader elections are being won if every single stakeholder is online. We need to assume a minimum $\lambda_h(t)$ in order to guarantee that within a bounded time, a new block is created.


 An honest node will construct and publicly reveal the block immediately after it has won the corresponding leader election. However, an adversary can choose to not do so. By ``private block", we  refer to a block whose corresponding computation of $\rvdfeval$ was completed by the adversary earlier than when the block was made public. Also, by ``honest block proposed at time $t$", we mean that the computation of $\rvdfeval$ was completed at time $t$ and then the associated honest block was instantaneously constructed and publicly revealed. 



The evolution of the blockchain can be modeled as a process $\{(\T(t), \C(t), \T^{(p)}(t), \C^{(p)}(t)): t \ge 0, 1 \leq p \leq N\}$, $N$ being the number of honest nodes, where:
\begin{itemize}
    \item $\T(t)$ is a tree, and is interpreted as the {\em mother tree} consisting of all the blocks that are proposed by both the honest and the adversary nodes up until time $t$ (including private blocks at the adversary). 
    \item $\T^{(p)}(t)$ is an induced (public) sub-tree of the mother tree $\T(t)$ in the view of the $p$-th honest node at time $t$. 
    \item $\C^{(p)}(t)$ is the longest chain in the tree $\T^{(p)}(t)$, and is interpreted as the longest chain in the local view of the $p$-th honest node.
   \item $\C(t)$ is the common prefix of all the local chains $\C^{(p)}(t)$ for $1 \leq p \leq N$. 
\end{itemize}

The process evolution is  as follows.

\begin{itemize}

    \item {\bf M0}: $\T(0)= \T^{(p)}(0) = \C^{(p)}(0), 1\leq p\leq N$ is a single root block (genesis).

    
    \item {\bf M1}: There is an independent leader election at every epoch beginning, i.e., at every block in the blocktree at level $c,2c,..., \ell c,...$. The leader elections are won by the adversary according to  independent Poisson processes of rate $\lambda_a(t)$ at time $t$, one for every block at the aforementioned levels.  The adversary can use the leader election won at a block at level $\ell c$ at time $t$ to  propose a block at every block in the next $c-1$ levels $\ell c, \ell c +1, ..., \ell c + c - 1$ that are present in the tree $\T(t)$. We refer the reader to Figure~\ref{fig:models} for a visual representation. 
    

    
    \item {\bf M2}: Honest blocks are proposed at a total rate of $\lambda_h(t)$ at time $t$ across all the honest nodes at the tip of the chain held by the mining node $p$,  $\C^{(p)}(t)$. 
    
    \item {\bf M3}:  The adversary can replace $\T^{(p)}(t^-)$ by another sub-tree $\T^{(p)}(t)$ from $\T(t)$ as long as the new sub-tree $\T^{(p)}(t)$ is an induced sub-tree of the new tree $\T^{(p)}(t)$, and can update $\C^{(p)}(t^-)$ to a longest chain in $T^{(p)}(t)$. \footnote{All jump processes are assumed to be right-continuous with left limits, so that $\C(t),\T(t)$ etc. include the new arrival if there is a new arrival at time $t$.} 
    
    \end{itemize}

We highlight the capabilities of the adversary in this model:

\begin{itemize}
\item {\bf A1}: Can choose to propose block on multiple blocks of the tree $\T(t)$ at any time.
\item {\bf A2}: Can delay the communication of blocks between the honest nodes, but no more than $\Delta$ time.
\item {\bf A3}: Can broadcast private blocks at times of its own choosing: when private blocks are made public at time $t$ to node $p$, then these blocks are added to $\T^{(p)}(t^-)$ to obtain $\T^{(p)}(t)$. Note that, under $\Delta$-synchronous model, when private blocks appear in the view of some honest node $p$, they will also appear in the view of all other honest nodes  by time $t+\Delta$. 
\item {\bf A4}: Can switch the chain where the $p$-th honest node is proposing block, from one longest chain to another of equal length, even when its view of the tree does not change, i.e., $\T^{(p)}(t) = \T^{(p)}(t^-)$ but $\C^{(p)}(t) \not = \C^{(p)}(t^-)$. 
\end{itemize}
It is to be noted that we don't consider the adversary to be {\em adaptive} in the sense that, although adversarial and honest nodes can join or leave the system as they wish, an adversary can never turn honest nodes adversarial. In order to defend against an adaptive adversary, key evolving signature schemes can be used \cite{david2018ouroboros}. However, in order to keep the system simple, we don't consider adaptive adversary.


\begin{figure*}[h!]
    \centering
    \includegraphics[width=\textwidth]{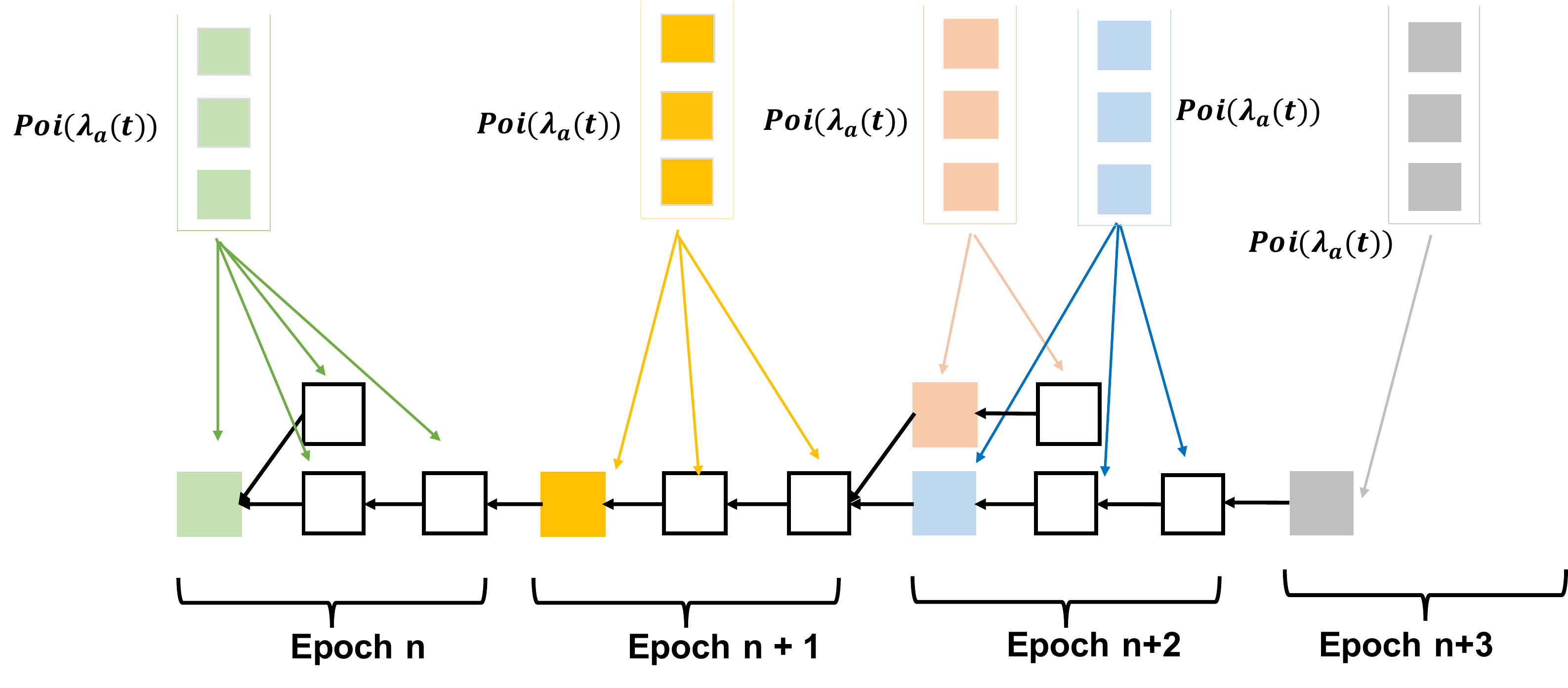}
    \caption{There is a separate randomness generated for every block in the modulo $c$ position. Blocks generated from that randomness at time  $t$ can attach to any block inside the next $c-1$ blocks that are present in the tree $\T(t)$.}
    \label{fig:models}
\end{figure*}

Proving the security (persistence and liveness) of the protocol boils down to  providing a guarantee that the chain $\C(t)$ converges fast as $t \rightarrow \infty$ and that honest blocks enter regularly into $\C(t)$ regardless of the adversary's strategy.

\section{Security Analysis}
\label{sec:analysis}

Our goal is to generate a transaction ledger that satisfies persistence and liveness as defined in \cite{backbone}. Together, persistence and liveness guarantee robust transaction ledger; honest transactions will
be adopted to the ledger and be immutable.
\begin{definition}[from \cite{backbone}]
    \label{def:public_ledger}
    A protocol $\Pi$ maintains a robust public transaction ledger if it organizes the ledger as a blockchain of transactions and it satisfies the following two properties:
    \begin{itemize}
        \item (Persistence) Parameterized by $\tau \in \mathbb{R}$, if at a certain time a transaction {\sf tx} appears in a block which is mined more than $\tau$ time away from the mining time of the tip of the main chain of an honest node (such transaction will be called confirmed), then {\sf tx} will be confirmed by
        all honest nodes in the same position in the ledger.
        \item (Liveness) Parameterized by $u \in \mathbb{R}$, if a transaction {\sf tx} is received by all honest nodes for more than time $u$, then all honest nodes will contain {\sf tx} in the same place in the ledger forever.
    \end{itemize}
\end{definition}

\subsection{Main security result}

To state our main security result, we need to define some basic notations.

Recall that, as described in section~\ref{sec:protocols}, a coin coming online can only participate in the leader election after encountering the next epoch beginning. This incurs a random waiting delay for the coin before it can actively participate in the evolution of the blockchain. Hence, the honest mining rate $\lambda_h(t)$, defined in Section \ref{sec:model}  as the stake of the honest coins that are online at time $t$ and has encountered at least one epoch beginning,  is a (random) process that depends on the dynamics of the blockchain. Hence, we cannot state a security result based on conditions on $\lambda_h(t)$. Instead, let us define 
$\lambda_h^c(t)$ as the stake of the honest coins that are online at time $t$ and has been online since at least time $t-\sigma(c)$, where
\begin{equation}
    \sigma(c) = (c-1)\left(\Delta + \frac{1+\kappa}{\lambda_{\min}}\right).
\end{equation}
Here, $\kappa$ is the security parameter. Intuitively, $\sigma(c)$ is a high-probability worst-case waiting delay, in seconds, of a coin for the next epoch beginning. Note that $\lambda^c_h(t)$ depends only on the stake arrival process and not on the blockchain dynamics.


The theorem below shows that the the private attack threshold
yields the true security threshold:

\begin{theorem}
\label{thm:main}
If 
\begin{align}
\label{eq:cond}
    \frac{\lambda^c_h(t)}{1+\lambda_{\max}\Delta} > \phi_c\lambda_a(t)  \qquad  \mbox{for all $t>0$},
\end{align}
then the \protocol generate transaction ledgers such that each transaction tx satisfies persistence (parameterized by $\tau = \rho$)  and liveness (parameterized by $u=\rho$) in Definition~\ref{def:public_ledger}  with probability at least $1 - e^{-\Omega(\min\{\rho^{1-\epsilon},\kappa\})}$, for any $\epsilon > 0$. The constant $\phi_c$ is defined in (\ref{eq:phi_c}), with $\phi_1 = e$ and $\phi_c \rightarrow 1$ as $c \rightarrow \infty$.
\end{theorem}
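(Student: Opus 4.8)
The plan is to follow the blocktree-partitioning and Nakamoto-block technique of \cite{dembo2020everything}, adapted to the $c$-correlated \vdf\ lottery of {\bf M1} and to the dynamically varying rates $\lambda^c_h(t),\lambda_a(t)$. The argument rests on three pillars: (i) a lower bound on the growth of the longest honest chain $\C^{(p)}(t)$; (ii) an upper bound, via branching random walks, on the growth of \emph{any} adversarial subtree of $\T(t)$, which is what produces the constant $\phi_c$; and (iii) a ``race'' lemma combining the two to certify that certain honest blocks --- Nakamoto blocks --- enter the permanent common prefix $\C(t)$ and do so at a regular rate, from which persistence and liveness follow.

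First I would control honest chain growth. A coin online since time $t-\sigma(c)$ has, with probability $1-e^{-\Omega(\kappa)}$, seen the longest chain advance by at least $c-1$ levels: a Poisson process of rate at least $\lambda_{\min}$ produces enough successes over $\sigma(c)$ even after discarding those colliding within a $\Delta$-window, so the coin has met an epoch beginning and is counted in $\lambda_h(t)$; hence $\lambda_h(t)\ge\lambda^c_h(t)$ uniformly with overwhelming probability. Then the standard $\Delta$-isolated-success argument (an honest success not preceded by another honest success within $\Delta$ necessarily lifts every honest node's longest chain to a new level, by {\bf A2}) together with a Chernoff bound on the thinned honest Poisson process shows that, over any interval $[s,t]$, the longest chain in every honest view grows by at least $\frac{1}{1+\lambda_{\max}\Delta}\int_s^t\lambda^c_h(u)\,du$ minus $O(\kappa)$ fluctuations.

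Second, I would bound the adversarial tree. By {\bf M1} the adversary's leader elections occur only at epoch-boundary blocks, at rate $\lambda_a(t)$, and each win is replayable on the $c$ blocks of that epoch present in $\T(t)$ --- precisely the structure of a branching random walk in which, every $c$ levels, a node spawns a Poisson offspring family sharing a parent. By the speed/first-passage theory of branching random walks \cite{shi}, the deepest chain of such a tree grown for time $t$ has depth at most $\phi_c\lambda_a t\,(1+o(1))$ with an exponential tail suitable for a union bound, where $\phi_c$ is given by (\ref{eq:phi_c}); one checks $\phi_1=e$ and $\phi_c\to 1$ as $c\to\infty$. This justifies reading the right-hand side of (\ref{eq:cond}) as the private-attack growth rate. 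Crucially I must show the bound survives the adversary's extra freedoms {\bf A1}--{\bf A4}: the ``forkable string'' freedom (one \rvdf\ evaluation reused on several parents with distinct content) does not increase tree \emph{depth} since all copies sit at the same level, and the finite stockpile of $O(c)$ blocks precomputed while offline is absorbed by inflating the confirmation depth to $\Theta(c)+\Theta(\rho)$ so that a bounded reserve cannot overturn a sufficiently deep block.

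Finally I would assemble the race. Call an honest block $b$ proposed at time $t$ a \emph{Nakamoto block} if for every $s<t<u$ the honest chain growth on $[s,u]$ strictly exceeds $\phi_c$ times the adversarial successes usable on $[s,u]$ on every subtree branching before $b$; by the two bounds above, condition (\ref{eq:cond}), and a union bound over starting times using the exponential tails, a positive density of honest blocks are Nakamoto blocks, with the gap between consecutive Nakamoto blocks having an exponentially decaying tail and thus being $O(\rho^{1-\epsilon})$ with the stated probability. A Nakamoto block is stabilized --- no adversarial fork avoiding it ever becomes longest --- so it lies in $\C(t)$ for all later $t$; persistence (with $\tau=\rho$) follows because any $\rho$-deep block sits below a Nakamoto block, and liveness (with $u=\rho$) follows because a fresh Nakamoto block, carrying all then-pending transactions, appears within every window of length $\rho$. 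I expect the hardest step to be the second pillar --- pushing the branching-random-walk upper bound through \emph{all} adversarial strategies under $c$-correlation and the offline stockpile, i.e.\ proving that the private attack is genuinely the worst attack in this model and not merely a natural one.
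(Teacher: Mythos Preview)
Your overall architecture matches the paper's: Nakamoto blocks via blocktree partitioning, honest chain growth via $\Delta$-isolated successes, adversarial tree growth via branching random walks giving the constant $\phi_c$, and then a race argument yielding a positive density of stabilized blocks. Where you diverge is in the handling of two technical obstacles, and in both places the paper uses a specific reduction that you do not mention.

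First, you propose to work directly with the time-varying rates, bounding honest growth by $\frac{1}{1+\lambda_{\max}\Delta}\int_s^t\lambda^c_h(u)\,du$ and adversarial growth pointwise. The paper instead performs a \emph{time change} $\alpha(t)=\int_0^t\lambda^c_h(u)/\lambda_h\,du$ that converts the dynamic system (after your $\lambda_h(t)\ge\lambda^c_h(t)$ step, which the paper also does) into a static one with constant rates $\lambda_h,\lambda_a$ and a network delay bounded by $\frac{\lambda_{\max}}{\lambda_h}\Delta$. This time-warp preserves the ordering of events and lets all the concentration and BRW estimates be carried out in a stationary setting, which is what makes the union bounds over $(i,k)$ tractable; working directly with non-stationary rates would force you to redo those estimates nonuniformly in $s$.

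Second, and more substantively, your pillar (ii) describes the adversarial tree as a clean $c$-periodic BRW, but in the Nakamoto-block framework each adversarial subtree $\T_i$ is rooted at an \emph{honest} block $b_i$ sitting at an arbitrary level within its epoch, so the first generation of the BRW is truncated and the distribution of $D_i(t)$ depends on the position of $b_i$ modulo $c$. The paper removes this dependence not by ``absorbing an offline stockpile into confirmation depth'' as you suggest, but by a further monotone coupling: it \emph{gifts} the adversary a free chain of $c-1$ blocks whenever it attaches its first block to an honest root, so that the adversary may immediately start a fresh epoch. This upgraded adversary dominates the real one and has a position-independent BRW structure, which is what Lemma~\ref{lem:step4-lemma} analyzes. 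Your ``offline stockpile'' remark is addressing a different (and easier) issue; the epoch-position dependence is the actual reason the $c-1$ slack enters the proof, and without some device like the gift your BRW bound does not apply uniformly over all honest roots $b_i$.
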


In order to prove Theorem~\ref{thm:main}, we utilize the concept of blocktree partitioning and Nakamoto blocks that were introduced in \cite{dembo2020everything}. We provide a brief overview of these concepts here.

Let $\tau^h_i$ and $\tau^a_i$ be the time when the $i$-th honest and adversary blocks are proposed, respectively; $\tau^h_0 = 0$ is the  time when the genesis block is proposed, which we consider as the $0$-th honest block. 

\begin{defn}
{\bf Blocktree partitioning}
Given the mother tree $\T(t)$, define for the $i$-th honest block $b_i$, the {\em adversary tree} $\T_i(t)$ to be the sub-tree of the mother tree $\T(t)$ rooted at $b_i$  and consists of all the adversary blocks that can be reached from $b_i$ without going through another honest block. The mother tree $\T(t)$ is partitioned into sub-trees $\T_0(t),\T_1(t), \ldots \T_j(t)$, where the $j$-th honest block is the last honest block that was proposed before time $t$.
\end{defn}
The sub-tree $\T_i(t)$ is born at time $\tau^h_i$ as a single block $b_i$ and then grows each time an adversary block is appended to a chain of adversary blocks from $b_i$. Let $D_i(t)$ denote the depth of $\T_i(t)$; $D_i(\tau_i^h) = 0$.  

\begin{defn} \cite{ren}
The $j$-th honest block proposed at time $\tau^h_j$ is called a {\em loner} if there are no other honest blocks proposed  in the time interval $[\tau^h_j - \Delta, \tau^h_j + \Delta]$.
\end{defn}

\begin{defn}
\label{def:fictitious}
Given honest block proposal times $\tau^h_i$'s, define a honest fictitious tree $\T_h(t)$ as a tree which evolves as follows:
\begin{enumerate}
    \item $\T_h(0)$ is the genesis block.
    \item The first honest block to be proposed and all honest blocks within $\Delta$ are all appended to the genesis block at their respective proposal times to form the first level. 
    \item The next honest block to be proposed and all honest blocks proposed within time $\Delta$ of that are added to form the second level (which first level blocks are parents to which new blocks is immaterial) . 
    \item The process repeats.
\end{enumerate}
Let $D_h(t)$ be the depth of $\T_h(t)$.
\end{defn}

\begin{defn}
\label{defn:nak_blk_gen}
({\bf Nakamoto block}) Let us define:
\begin{equation}
\label{eqn:Et}
    E_{ij} = \mbox{event that $D_i(t)  < D_h(t-\Delta) - D_h(\tau^h_i+\Delta)$ for all $t > \tau^h_j + \Delta$}.
\end{equation}
The $j$-th honest block is called a {\em Nakamoto block} if it is a loner and
\begin{equation}
F_j = \bigcap_{i = 0}^{j-1} E_{ij}
\end{equation}
occurs.
\end{defn}
See  Figure 5 in \cite{dembo2020everything} for illustration of the concepts of blocktree partitioning and Nakamoto blocks. 
\begin{lemma} (Theorem 3.2 in \cite{dembo2020everything})
\label{lem:nak_blk}({\bf Nakamoto blocks stabilize})
If the $j$-th honest block is a Nakamoto block, then it will be in the longest chain $\C(t)$ for all $t > \tau^h_j+\Delta$. 
\end{lemma}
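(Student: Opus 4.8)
The plan is to prove the slightly stronger statement that $b_j$ lies on $\C^{(p)}(t)$ for \emph{every} honest node $p$ and every $t>\tau^h_j+\Delta$; since the genesis-to-$b_j$ path is unique in the mother tree, this immediately yields $b_j\in\C(t)$. Suppose it fails, and let $t$ be the first time after $\tau^h_j+\Delta$ at which some honest node $p$ has $b_j\notin\C^{(p)}(t)$ (a first such time exists by the right-continuity of the jump processes). Note that $b_j\in\T^{(p)}(t)$ already, since $b_j$ was broadcast at $\tau^h_j$ and hence delivered to $p$ by $\tau^h_j+\Delta<t$, so $\C^{(p)}(t)$ is a chain of depth at least $\mathrm{depth}(b_j)$ that forks away from the path to $b_j$. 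The goal is to contradict this by showing that $\C^{(p)}(t)$ is in fact strictly shorter than $D_h(t-\Delta)$, which (via the next ingredient) is a lower bound on the depth of honest chains at time $t$.

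\textbf{Ingredient 1 (honest chain growth).} First I would show that the fictitious honest tree lower-bounds the real honest chains: by induction on $k$, the first honest block of fictitious level $k$ has mother-tree depth $\ge k$. The inductive step uses that consecutive fictitious levels are separated by strictly more than $\Delta$ in time (a new level begins only with the first honest block proposed more than $\Delta$ after the start of the previous level), so the proposer of level $k{+}1$'s first block has already received level $k$'s first block and therefore builds at depth $\ge k{+}1$. Hence $\mathrm{depth}(\C^{(q)}(s))\ge D_h(s-\Delta)$ for every honest $q$ and time $s$; in particular $\mathrm{depth}(\C^{(p)}(t))\ge D_h(t-\Delta)$.

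\textbf{Ingredient 2 (structure of the competing chain).} Next I would pin down the shape of $\C^{(p)}(t)$ using the loner property and the minimality of $t$. Since $b_j$ is a loner, no honest block other than $b_j$ is proposed in $[\tau^h_j-\Delta,\tau^h_j+\Delta]$; and by minimality of $t$, every honest block proposed after $\tau^h_j+\Delta$ and present at time $t$ was mined by a node whose longest chain still contained $b_j$, hence descends from $b_j$. As a chain containing a descendant of $b_j$ contains $b_j$, and $b_j\notin\C^{(p)}(t)$, no honest block on $\C^{(p)}(t)$ descends from $b_j$; therefore every honest block on $\C^{(p)}(t)$ has index $<j$. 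Listing these in order as $b_{i_0}=\text{genesis},b_{i_1},\dots,b_{i_m}$ (all $i_r<j$), blocktree partitioning shows that the part of $\C^{(p)}(t)$ strictly between $b_{i_r}$ and $b_{i_{r+1}}$, as well as the part below $b_{i_m}$, consist only of adversary blocks lying in the adversary sub-tree $\T_{i_r}(t)$, resp.\ $\T_{i_m}(t)$, hence have length at most $D_{i_r}(t)$, resp.\ $D_{i_m}(t)$.

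\textbf{Ingredient 3 and the main obstacle.} Combining these, one telescopes the depth of $\C^{(p)}(t)$ along its honest ``anchors'' --- $\mathrm{depth}(b_{i_{r+1}})\le\mathrm{depth}(b_{i_r})+D_{i_r}(t)+1$ and $\mathrm{depth}(\C^{(p)}(t))\le\mathrm{depth}(b_{i_m})+D_{i_m}(t)$ --- and feeds in the defining inequalities of the Nakamoto block: each $E_{i_r j}$ is valid (as $i_r<j$ and $t>\tau^h_j+\Delta$) and gives $D_{i_r}(t)+D_h(\tau^h_{i_r}+\Delta)<D_h(t-\Delta)$. The delicate point, which I expect to be the main obstacle, is carrying out this combination without over-counting the honest-growth term: a naive sum of the $m{+}1$ inequalities is far too lossy. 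The remedy is a refined accounting in which the honest-growth budget accrued between $\tau^h_{i_r}+\Delta$ and $t-\Delta$ is charged to the single sub-tree $\T_{i_r}$ --- equivalently, tracking the ``lead'' $\mathrm{depth}(b_{i_r})-D_h(\tau^h_{i_r}+\Delta)$ and showing by induction on $r$ that it stays $\le 0$. The extra subtlety is that an honest anchor $b_{i_r}$ may itself have been mined atop an adversary-inflated chain, so the induction must peel off honest ancestors one layer at a time, invoking the corresponding event $E_{\,\cdot\, j}$ at each layer --- this is exactly the bookkeeping carried out in \cite{dembo2020everything}. Once this yields $\mathrm{depth}(\C^{(p)}(t))<D_h(t-\Delta)$, it contradicts Ingredient 1; the strict inequality in $E_{ij}$ also rules out the equal-depth tie, so the adversary cannot use capability \textbf{A4} to keep $b_j$ off the chain. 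This contradiction proves the lemma.
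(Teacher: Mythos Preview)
The paper does not give its own proof of this lemma; it is imported verbatim as Theorem~3.2 of \cite{dembo2020everything} and used as a black box. Hence there is no in-paper argument to compare your proposal against --- both you and the paper ultimately defer to the same external reference.

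That said, your outline is a faithful sketch of the argument in \cite{dembo2020everything}: the contradiction setup via a first violation time, the honest-chain-growth lower bound $\mathrm{depth}(\C^{(q)}(s))\ge D_h(s-\Delta)$, the use of the loner property together with minimality of $t$ to force every honest block on the competing chain to have index $<j$, and the blocktree-partitioning decomposition into segments bounded by the $D_{i_r}(\cdot)$ are exactly the pieces of that proof. You are also right that the delicate step is Ingredient~3 --- a naive sum of the $E_{i_r j}$ inequalities is hopelessly lossy --- and that the fix is an inductive accounting along the honest anchors. One caution worth flagging: the specific invariant you name, ``$\mathrm{depth}(b_{i_r})-D_h(\tau^h_{i_r}+\Delta)\le 0$'', cannot be established by directly invoking $E_{i_{r-1},j}$ at time $\tau^h_{i_r}$, since that event only controls $D_{i_{r-1}}(t)$ for $t>\tau^h_j+\Delta$, whereas $\tau^h_{i_r}<\tau^h_j$. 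The actual bookkeeping in \cite{dembo2020everything} routes around this by comparing the competing chain not against $D_h$ in isolation but against the longest chain through $b_j$ in the honest view (which itself has depth $\ge D_h(t-\Delta)$), and by exploiting monotonicity of $D_i(\cdot)$ so that only the single terminal bound $D_{i_m}(t)$ at the late time $t$ is ever needed. Your reference to ``peeling off honest ancestors one layer at a time'' is the right instinct; just be aware that the invariant being maintained is slightly more subtle than the one you wrote.
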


Lemma~\ref{lem:nak_blk} states that Nakamoto
blocks remain in the longest chain forever. The question is whether
they exist and appear frequently regardless of the adversary strategy. If they do, then the protocol has liveness and persistence:
honest transactions can enter the ledger frequently through the
Nakamoto blocks, and once they enter, they remain at a fixed location in the ledger. More formally, we have the following result.

\begin{lemma} (Lemma 4.4 in \cite{dembo2020everything})
\label{lem:nak_secure}
Define $B_{s,s+t}$ as the event that there is no Nakamoto blocks in the time interval $[s,s+t]$ where $t \sim \Omega\left(\left[\frac{c-1}{\phi_c-1}\right]^2\right)$. If \begin{equation}
    P(B_{s,s+ t}) < q_t < 1
\end{equation}
for some $q_t$ independent of $s$ and the adversary strategy, then the \protocol generates   transaction ledgers such that each transaction tx satisfies {\em persistence} (parameterized by $\tau=\rho$) and {\em liveness} (parameterized by $u=\rho$) in Definition~\ref{def:public_ledger} with probability at least $1-q_\rho$.
\end{lemma}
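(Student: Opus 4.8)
The plan is to deduce persistence and liveness directly from two facts that are already available: Lemma~\ref{lem:nak_blk}, which guarantees that a Nakamoto block proposed at time $t^*$ stays in the common chain $\C(t)$ for every $t > t^* + \Delta$; and the hypothesis $P(B_{s,s+t}) < q_t$, which guarantees that a window of length $\rho$ (with $\rho$ large enough that $B$ is well defined, i.e. $\rho \gtrsim [(c-1)/(\phi_c-1)]^2$) fails to contain a Nakamoto block with probability at most $q_\rho$. Since $q_t$ does not depend on $s$ or on the adversary's strategy, I will instantiate this bound with a window one of whose endpoints is determined by the execution (the mining time of a designated block, or the time {\sf tx} reaches all honest nodes); everything else is then a deterministic argument conditioned on the favourable event ``there is a Nakamoto block in the relevant $\rho$-window,'' whose probability is at least $1 - q_\rho$.

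For persistence (parameter $\tau = \rho$) I would take an honest node $p$ whose main chain at time $r$ has its tip mined at $t_{\mathrm{tip}}$, with {\sf tx} sitting in a block $b$ on that chain mined at $t_b$ and $t_{\mathrm{tip}} - t_b > \rho$ (so $r \ge t_{\mathrm{tip}} > t_b + \rho$). Applying the hypothesis to a slightly shrunk version of $[t_b, t_b+\rho]$ yields, except with probability $q_\rho$, a Nakamoto block $b^*$ proposed at some $t^* \in (t_b, t_{\mathrm{tip}} - \Delta)$. Then $r > t^* + \Delta$, so $b^* \in \C(t) \subseteq \C^{(p)}(r)$ by Lemma~\ref{lem:nak_blk}; hence $b$ and $b^*$ lie on a common chain and are comparable, and since $t_b < t^*$ (a block cannot descend from one mined strictly later) $b$ must be an ancestor of $b^*$. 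As $b^* \in \C(t)$ for all $t > t^*+\Delta$ and $\C(\cdot)$ is nondecreasing, $b$ sits in $\C(t)$ at a fixed depth from then on, so {\sf tx} is confirmed at the same ledger position by every honest node.

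For liveness (parameter $u = \rho$) I would take {\sf tx} to have been in every honest buffer since a time $r_0$, with at least $\rho$ elapsed. Applying the hypothesis to $[r_0, r_0+\rho]$ gives, except with probability $q_\rho$, a Nakamoto block $b^*$ proposed at $t^* \ge r_0$. Its (honest) proposer holds {\sf tx} at $t^*$, so the chain ending at $b^*$ contains {\sf tx} --- either an earlier honest block on it already did, or $b^*$ itself includes it, since an honest proposer adds every not-yet-included buffered transaction. By Lemma~\ref{lem:nak_blk} this prefix is permanently in $\C(t)$, and $\C(\cdot)$ only grows, so {\sf tx} occupies a fixed position in every honest ledger forever. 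The two failure events are both contained in ``no Nakamoto block in the $\rho$-window attached to {\sf tx},'' so the overall success probability remains at least $1 - q_\rho$, matching the claim.

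The hard part is not in this lemma: its hypothesis --- a strategy-independent bound $P(B_{s,s+t}) < q_t$ with $q_t$ small --- is exactly where the branching-random-walk analysis of the worst-case adversary growth rate and condition~(\ref{eq:cond}) enter, and that is carried out in the rest of Section~\ref{sec:analysis}. The points inside the present argument that need care are minor by comparison: a short measure-theoretic justification for instantiating the uniform-in-$s$ bound at a random, execution-determined time; the ancestor step, which must exclude a stabilized Nakamoto block lying \emph{below} $b$ using only temporal ordering of proposals and the irreversibility of $\Delta$-stabilized blocks; and the transaction-inclusion claim in the liveness argument, which relies on the protocol rule that honest proposers include all unconfirmed buffered transactions. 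The $\Delta$-slack bookkeeping (placing $b^*$ strictly inside $(t_b+\Delta, t_{\mathrm{tip}}-\Delta)$ and requiring $\rho$ to exceed the $[(c-1)/(\phi_c-1)]^2$ threshold) is harmless, since the guarantee $1-q_\rho$ is vacuous for smaller $\rho$.
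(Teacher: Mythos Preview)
The paper does not actually supply a proof of this lemma: the parenthetical ``(Lemma 4.4 in \cite{dembo2020everything})'' marks it as imported, and the six subsequent steps establish the \emph{hypothesis} $P(B_{s,s+t})<q_t$, not the conclusion (the sentence ``In order to prove Lemma~\ref{lem:nak_secure}, we proceed in six steps'' is a slight misnomer---the steps verify the premise so that the cited lemma can be invoked, and the paper then writes ``combining Lemma~\ref{lem:step6-lemma} with Lemma~\ref{lem:nak_secure} implies Theorem~\ref{thm:main}'').

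Your sketch is the standard route and is essentially what the cited reference does: use the uniform-in-$s$ bound to plant a Nakamoto block in the $\rho$-window attached to {\sf tx}, then invoke Lemma~\ref{lem:nak_blk} to pin the prefix through that block. The three caveats you flag are the right ones. Two small remarks: (i) for the ancestor step in the persistence argument, rather than relying on ``mining time'' (which is delicate for adversarial blocks that may be withheld), it is cleaner in this protocol to use the time-ordering check on \texttt{slot} (line~\ref{algo:time-ordering} of Algorithm~\ref{alg:PoSAT}), which forces \texttt{slot} to be strictly increasing along any valid chain and coincides with wall-clock time for the honest Nakamoto block $b^*$; (ii) for instantiating the bound at an execution-dependent window, the clean fix is to union-bound over a fixed mesh of deterministic windows covering the relevant horizon, which preserves the $1-q_\rho$ guarantee up to constants absorbed into $q_\rho$. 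With those tweaks your argument is complete and matches the approach of \cite{dembo2020everything}.
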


In order to prove Lemma~\ref{lem:nak_secure}, we proceed in six steps as illustrated in Fig.~\ref{fig:proof_flowchart}. 
\begin{figure}[H]
    \includegraphics[width=0.7\textwidth]{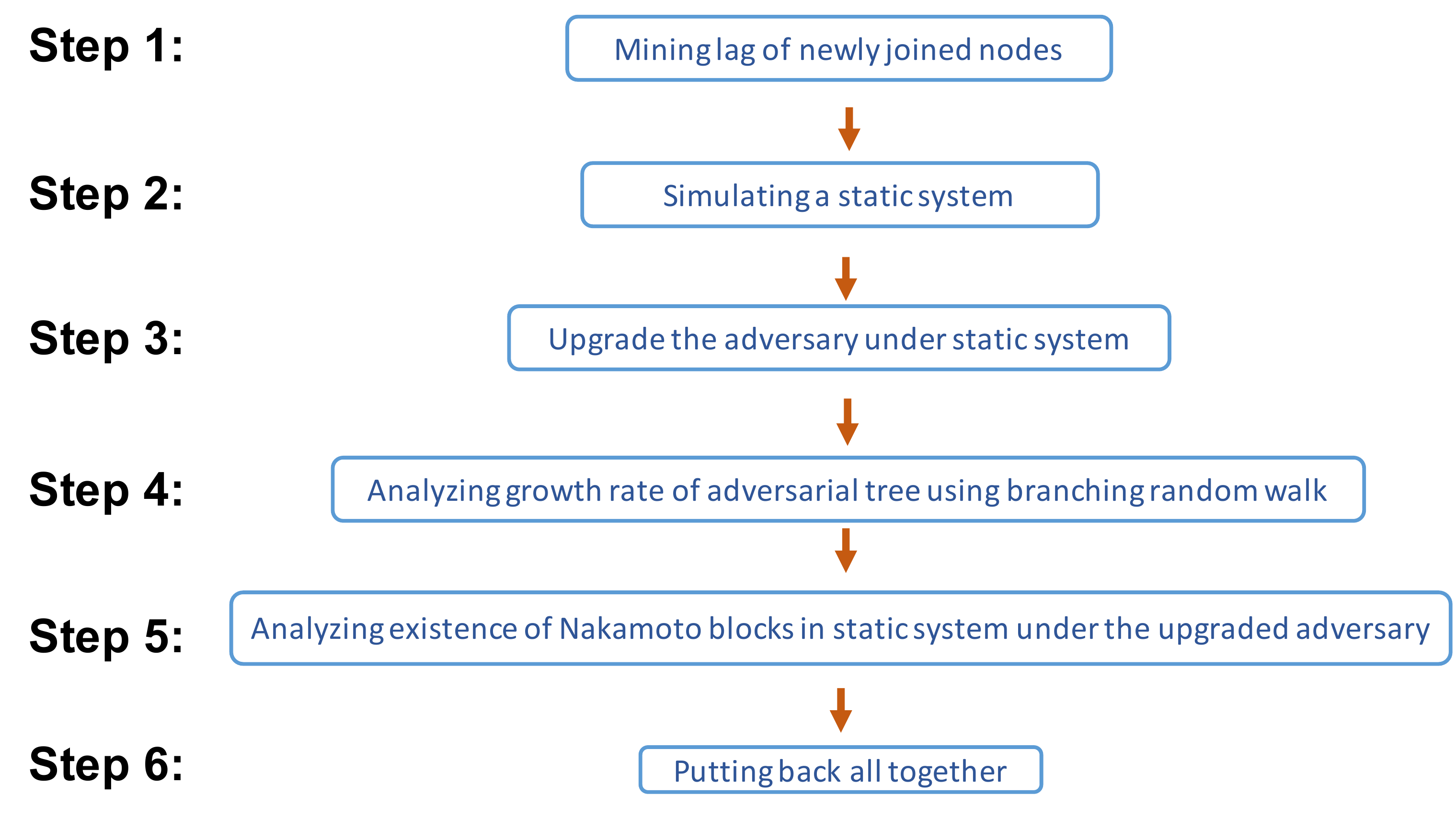}
    \centering
    \caption{Flowchart of the proof for  Lemma~\ref{lem:nak_secure}.}
    \label{fig:proof_flowchart}
\end{figure}

\subsection{Step $1$: Mining lag of newly joined nodes}
From section~\ref{sec:model}, recall that $\lambda_h(t)$ is defined as the stake of the coins that are online at time $t$ but has encountered at least one epoch beginning. That implies, within an epoch, $\lambda_h(t)$ is the effective honest stake that can be used to contribute towards the growth of the longest chain; it remains constant and gets updated only at the epoch beginning. 
In order to analyze the effect of this lag in a honest node to start mining, we simulate a new dynamic available system, $dyn2$, where, at time $t$, an honest coin can contribute towards the growth of the longest chain if it has been online in the original dynamic system since at least time $t-\sigma(c)$, where, $\sigma(c) > 0$. Recall that $\lambda^c_h(t)$ be defined as the stake of the coins that are online at time $t$ in the original dynamic system and has been online since at least $t-\sigma(c)$. Clearly, $\lambda^c_h(t)$ is the rate at which the honest nodes win leader election at time $t$ in $dyn2$. We have the following relationship between the original dynamic available system  and $dyn2$.
\begin{lemma}
\label{lem:step1-lemma}
For the dynamic available system $dyn2$ and for all $s,t > 0$, define $B^{dyn2}_{s,s+t}$ as the event that there are no Nakamoto blocks in the time interval $[s,s+t]$. Let $\kappa_0$ be the solution for the equation $\ln{\left(\frac{\lambda_{\max}}{\lambda_{\min}}(1+\kappa)\right)} = \kappa$. Then, for $\sigma(c) = (c-1)\left(\Delta + \frac{1+\kappa}{\lambda_{\min}}\right)$ and $\kappa >> \kappa_0$, we have 
$$
P(B_{s,s+t}) \leq P(B^{dyn2}_{s,s+t}) +  e^{-\Omega(\kappa)}.
$$
\end{lemma}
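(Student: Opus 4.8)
The statement is a reduction, and the auxiliary system $dyn2$ should be thought of as \protocol modified so that every honest coin sits idle for a \emph{fixed} $\sigma(c)$ seconds after coming online instead of waiting for the random ``next epoch beginning'' event. Since $dyn2$ then runs on the smaller honest rate $\lambda^c_h(\cdot)\le\lambda_h(\cdot)$, it should be no more secure than the original system, and the lemma makes this precise up to an $e^{-\Omega(\kappa)}$ term coming from an exceptional event. The plan has three ingredients: a ``good event'' $G$ on which the two systems couple, a tail bound $P(G^c)\le e^{-\Omega(\kappa)}$, and a monotonicity step that transfers the absence of Nakamoto blocks from the original system to $dyn2$.

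\emph{Step 1 (the good event).} Let $G$ be the event that no honest coin coming online before time $s+t$ ever has to wait more than $\sigma(c)$ seconds until it first encounters an epoch beginning. On $G$, any honest coin that has been online in the original system since a time $u\le t'-\sigma(c)$ has already seen an epoch beginning by time $t'$ and is therefore contributing to the longest chain; hence, on $G$, at every time $t'\le s+t$ the set of eligible honest coins in the original system contains the corresponding set in $dyn2$, so the original system has pointwise at least as much honest mining power. We may then couple the honest-block Poisson clocks coin by coin so that, on $G$, every honest block of the $dyn2$ execution is also an honest block of the original execution at the same time and by the same coin.

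\emph{Step 2 (probability of $G$).} A coin coming online when the longest chain it sees has depth $d$ needs the chain to advance by at most $c-1$ further levels to reach the next multiple of $c$. By the chain-growth property of the longest-chain rule, advancing one level costs at most $\Delta$ (propagation) plus an honest extension of duration $\Exp(\lambda_h)\preceq\Exp(\lambda_{\min})$, so the wait of any coin is stochastically dominated by $(c-1)\Delta$ plus a sum of $c-1$ i.i.d.\ $\Exp(\lambda_{\min})$ variables. Since $\sigma(c)=(c-1)\Delta+(c-1)(1+\kappa)/\lambda_{\min}$, a Chernoff bound for sums of exponentials gives, for a single coin, $P[\text{wait}>\sigma(c)]\le e^{-(c-1)(\kappa-\ln(1+\kappa))}$. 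A union bound over the $O(\lambda_{\max}(s+t))$ blocks (hence epoch beginnings) created before time $s+t$, together with the hypothesis $\kappa\gg\kappa_0$ where $\kappa_0$ solves $\ln\!\big(\tfrac{\lambda_{\max}}{\lambda_{\min}}(1+\kappa)\big)=\kappa$, makes the per-coin exponent dominate the union-bound factor, so $P(G^c)\le e^{-\Omega(\kappa)}$.

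\emph{Step 3 (monotonicity and conclusion).} Running both systems against the same adversary under the coupling of Step 1, the $dyn2$ honest-block stream is a sub-stream of the original one on $G$, so it remains to argue that enriching the honest-block stream (going from $dyn2$ to the original) cannot destroy the occurrence of a Nakamoto block in $[s,s+t]$, i.e.\ $B_{s,s+t}\cap G\subseteq B^{dyn2}_{s,s+t}$; combined with Step 2 this gives
\[
P(B_{s,s+t}) \;\le\; P(B_{s,s+t}\cap G)+P(G^c) \;\le\; P(B^{dyn2}_{s,s+t})+e^{-\Omega(\kappa)}.
\]
This monotonicity is the crux and the expected obstacle: extra honest blocks speed up the honest fictitious tree $D_h$, which only helps the depth-advantage events $E_{ij}$, but they can cost a batch its ``loner'' status and reshuffle the block-tree partition and the adversary sub-tree depths $D_i$, so one must check that over a window of length $t$ the net effect favours the honest side (there is always another loner batch, whose $E_{ij}$'s are only more likely under a faster $D_h$). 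A convenient way to sidestep the sharp monotonicity is to prove instead the weaker but still sufficient statement that the explicit bound $q_t$ derived for $P(B^{dyn2}_{s,s+t})$ in the subsequent steps also bounds $P(B_{s,s+t}\cap G)$ — which holds because that derivation uses $\lambda^c_h(\cdot)$ only as a lower bound on the honest mining rate — and then feed $q_t+e^{-\Omega(\kappa)}$ into Lemma~\ref{lem:nak_secure}.
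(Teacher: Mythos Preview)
Your approach is essentially the paper's: isolate a good event on which every coin that has been online for $\sigma(c)$ has already seen an epoch beginning, bound its failure by a Chernoff argument on $c-1$ level-advance times, and then invoke monotonicity of Nakamoto-block occurrence in the honest rate. Two points of divergence are worth recording.

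\textbf{Origin of $\kappa_0$.} In the paper the ratio $\lambda_{\max}/\lambda_{\min}$ does \emph{not} come from a union bound; it enters the single-interval estimate. The paper writes each level-advance time as $X_d=\Delta+Y_d$ and bounds the MGF via the density bound $f_{Y_d\mid\cdots}(y)\le \lambda_{\max}e^{-\lambda_{\min}y}$, obtaining $E[e^{vY_d}]\le \lambda_{\max}/(\lambda_{\min}-v)$ and, after optimising $v$, $P(E_1)\le \exp\bigl((c-1)[-\kappa+\ln(\tfrac{\lambda_{\max}}{\lambda_{\min}}(1+\kappa))]\bigr)$; the paper then stops, with no union bound over coins or over times in $[s,s+t]$. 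Your stochastic-domination step $Y_d\preceq \Exp(\lambda_{\min})$ is correct and in fact tighter (it gives $E[e^{vY_d}]\le \lambda_{\min}/(\lambda_{\min}-v)$ and hence the exponent $(c-1)(\kappa-\ln(1+\kappa))$ with no $\lambda_{\max}/\lambda_{\min}$), but your attempt to recover the stated $\kappa_0$ by attributing the factor $\lambda_{\max}/\lambda_{\min}$ to a union bound is off: the union-bound prefactor would scale with the number of coin arrivals in $[0,s+t]$, not with $\lambda_{\max}/\lambda_{\min}$.

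\textbf{Monotonicity.} You are right to flag this as the delicate step. The paper's proof literally jumps in one sentence from ``$\lambda^c_h(t)\le\lambda_h(t)$ with probability $1-e^{-O(\kappa)}$'' to the conclusion, without arguing that enlarging the honest rate cannot create new $B_{s,s+t}$ events. Your fallback --- treat $\lambda^c_h(\cdot)$ as a lower bound on the honest rate and carry it through the downstream analysis --- is in fact how the paper is organised in Steps~2--6, so this is a faithful reading of what the proof actually relies on.
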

\noindent The proof is given in Appendix~\ref{sec:step1-proof}.

\subsection{Step $2$: Simulating a static system}
Without loss of generality, we assume that the adversarial power is boosted such that $\lambda_a(t)$ satisfies (\ref{eq:cond}) with equality for all $t$. Let us define $\eta$  such that $\lambda_a(t) = (1-\eta) \lambda_h(t)$ for all $t$. Let $\lambda_h$ be some positive constant. Taking $dyn2$ as the base, we simulate a static system, $ss0$, where both honest nodes and adversary win leader elections with constant rates $\lambda_h$ and $\lambda_a$ satisfying $\lambda_a = (1-\eta)\lambda_h$. This requires, for a local time $t>0$ in $dyn2$, defining a new local time $\alpha(t)$ for $ss0$ such that
\begin{align}
    \label{eq:time-relations}
    \lambda^c_h(u)du = \lambda_h d\alpha \implies \alpha(t) = \int_{0}^{t} \frac{\lambda^c_h(u)}{\lambda_h}du.
\end{align}
Additionally, for every arrival of an honest or adversarial block in $dyn2$ at a particular level at a tree, there is a corresponding arrival in $ss0$ at the same level in the same tree. For a time $t$ in the local clock of $dyn2$, let $\Delta^{ss0}(t)$ be the network delay of $dyn2$ measured with reference to the local clock of $ss0$. Using (\ref{eq:time-relations}), we have
\begin{align}
\label{eq:delta-inequality}
    \frac{\lambda_{\min}}{\lambda_h}\Delta \leq \Delta^{ss0}(t) \leq \frac{\lambda_{\max}}{\lambda_h}\Delta.
\end{align}
We have the following relationship between $dyn2$ and $ss0$.
\begin{lemma}
    \label{lem:step2-lemma}
    Consider the time interval $[s,s+t]$ in the local clock of $dyn2$. For the static system $ss0$, define $B^{ss0}_{\alpha(s),\alpha(s+t)}$ as the event that there are no Nakamoto blocks in the time interval $[\alpha(s),\alpha(s+t)]$ in the local clock of $ss0$. Then,
    $$P(B^{dyn2}_{s,s+t}) = P(B^{ss0}_{\alpha(s),\alpha(s+t)}).$$
\end{lemma}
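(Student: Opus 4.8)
The plan is to exhibit an explicit coupling between the execution of \protocol in $dyn2$ and in $ss0$ under which the two systems carry the same labelled block tree up to the deterministic time reparametrization $\alpha$, and then to observe that the event ``no Nakamoto block in a prescribed time window'' is invariant under this coupling. First I would set up the coupling. The function $\alpha(t)=\int_0^t \lambda^c_h(u)/\lambda_h\,du$ is deterministic, continuous and non-decreasing with $\alpha(0)=0$, and is strictly increasing wherever $\lambda^c_h>0$. I identify the $i$-th honest block of $dyn2$, proposed at $dyn2$-time $u$, with the $i$-th honest block of $ss0$, proposed at $ss0$-time $\alpha(u)$; likewise each adversarial leader-election event that occurs in $dyn2$ at time $u$ at a block on a level that is a multiple of $c$ is identified with the corresponding event in $ss0$ at time $\alpha(u)$, and the adversary is made to perform exactly the same combinatorial choices in both systems (which parent to attach each block to, which blocks to withhold and when to release them, and which longest chain each honest node follows). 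Under this identification the mother tree $\T(t)$ of $dyn2$ and the mother tree of $ss0$ at time $\alpha(t)$ coincide as labelled trees for every $t$, the $\Delta$-delay constraint \textbf{A2} of $dyn2$ translates into the time-varying $\Delta^{ss0}(\cdot)$-delay of $ss0$ because $\Delta^{ss0}$ was defined precisely as the $\alpha$-image of the $\Delta$-window, and conversely any admissible $ss0$-execution pulls back to an admissible $dyn2$-execution; hence the induced map between adversary strategies is a bijection.

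Second I would check that the coupling is measure-preserving, i.e.\ that the coupled $ss0$-execution has exactly the law prescribed for $ss0$. The only randomness is in the honest and adversarial leader-election processes. In $dyn2$ the honest blocks form an inhomogeneous Poisson process with intensity $\lambda^c_h(u)\,du$ and, by \textbf{M1}, the adversarial leader-election clocks are mutually independent (given the tree) Poisson processes of rate $\lambda_a(u)=(1-\eta)\lambda^c_h(u)$, one per block at a level that is a multiple of $c$. By (\ref{eq:time-relations}) the time change $u\mapsto\alpha(u)$ pushes $\lambda^c_h(u)\,du$ forward to $\lambda_h\,d\alpha$ and $(1-\eta)\lambda^c_h(u)\,du$ forward to $\lambda_a\,d\alpha$; by the time-change (mapping) theorem for Poisson processes this turns each of these into a homogeneous Poisson process of the required constant rate, and independence is preserved. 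Therefore the coupled process is a bona fide $ss0$-execution, and for each fixed adversary strategy the joint law of the $dyn2$ block tree equals, through $\alpha$, the joint law of the $ss0$ block tree produced by the corresponding strategy.

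Third, I would argue invariance of the target event. Every ingredient in the definition of a Nakamoto block --- the partition into adversary sub-trees $\T_i$, their depths $D_i$, the honest fictitious-tree depth $D_h$, the loner condition, and the events $E_{ij}$ --- is a function only of the labelled block tree together with comparisons of proposal times against $\Delta$ (or, in $ss0$, against $\Delta^{ss0}$). Since $\alpha$ is monotone and maps every $\Delta$-window onto the corresponding $\Delta^{ss0}$-window, the $j$-th honest block is a Nakamoto block in $dyn2$ iff the corresponding block is a Nakamoto block in $ss0$, and it is proposed in $[s,s+t]$ iff the corresponding block is proposed in $[\alpha(s),\alpha(s+t)]$. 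Consequently $B^{dyn2}_{s,s+t}$ and $B^{ss0}_{\alpha(s),\alpha(s+t)}$ are the same event under the coupling, and taking probabilities (using the strategy bijection from Step one) gives $P(B^{dyn2}_{s,s+t})=P(B^{ss0}_{\alpha(s),\alpha(s+t)})$.

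The step I expect to demand the most care is the second one: stating the Poisson time-change cleanly for the whole family of adversarial clocks simultaneously, and in particular handling intervals on which $\lambda^c_h$ (hence $\alpha'$) vanishes --- there no blocks are born in $dyn2$ and the interval collapses to a point in $ss0$, so one must verify that nothing is lost and that $\alpha$ restricted to its support is the relevant bijection. A minor secondary point is to note that the information available to the adversary is genuinely the same in the two systems (so that the strategy map is a bijection): this holds because $\alpha$ and its generalized inverse are deterministic and publicly known, so any function of the $dyn2$-view can be computed from the coupled $ss0$-view and vice versa.
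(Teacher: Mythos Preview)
Your proposal is correct and takes essentially the same approach as the paper: both rely on the deterministic time change $\alpha$ to identify $dyn2$ with $ss0$ and then observe that all the ingredients of the Nakamoto-block event (depths $D_i$, $D_h$, loner condition) are preserved under this identification. The paper is terser---it builds $ss0$ as a direct simulation of $dyn2$ and only records the key fact that $\alpha$ is strictly increasing so event ordering is preserved---whereas you spell out the Poisson time-change and the strategy bijection explicitly; this extra care is fine but not strictly needed given how the paper has already defined $ss0$.
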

\noindent  The proof for this lemma is given in Appendix~\ref{sec:step2-proof}.

\subsection{Step $3$: Upgrading the adversary}
\label{sec:step-3}
As the occurrence of Nakamoto blocks is a race between the fictitious honest tree and the adversarial trees from the previous honest blocks, we next turn to an analysis of the growth rate of an adversary tree. However, the growth rate of an adversarial tree would now depend on the location of the root honest block within an epoch which adds to the complexity of the analysis. To get around this complexity, we simulate a new static system, $ss1$ in which the adversary, on winning a leader election after evaluating $\rvdfeval$ and appending a block to an honest block (that is, growing a new adversarial tree), is given a gift of chain of $c-1$ extra blocks for which the adversary doesn't have to compute $\rvdfeval$. Thus, the adversary has to compute only one $\rvdfeval$ for the chain of first $c$ blocks in the adversarial tree. At this point, the adversary can assume a new epoch beginning and accordingly update $\rs$. Hereafter, the evolution of $\rs$ follows the rules in $ss0$. Note that the local clock for both the static systems $ss0$ and $ss1$ are same. Now, we have the following relationship between $ss0$ and $ss1$.
\begin{lemma}
    \label{lem:step3-lemma}
    Consider the time interval $[s,s+t]$ in the local clock of $dyn2$. For the static system $ss1$, define $B^{ss1}_{\alpha(s),\alpha(s+t)}$ as the event that there are no Nakamoto blocks in the time interval $[\alpha(s),\alpha(s+t)]$ in the local clock of $ss1$. Then,
    $$P(B^{ss0}_{\alpha(s),\alpha(s+t)}) \leq P(B^{ss1}_{\alpha(s),\alpha(s+t)}).$$
\end{lemma}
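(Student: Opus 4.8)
The plan is to prove Lemma~\ref{lem:step3-lemma} by coupling $ss0$ and $ss1$ on a common probability space on which $ss1$ is, step for step, at least as favorable to the adversary, and then invoking the fact that the event ``no Nakamoto block in a window'' is monotone with respect to the ordering ``the adversary subtrees are deeper''. I would organize it into a coupling of the honest side, a coupling of the adversarial side, and an assembly step.

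For the honest side: in both $ss0$ and $ss1$ the ordered honest proposal times $\tau^h_0=0,\tau^h_1,\tau^h_2,\dots$ form a Poisson process of rate $\lambda_h$ that is independent of the adversary, so I take them to be literally identical in the two systems. Since the fictitious honest tree $\T_h(\cdot)$ of Definition~\ref{def:fictitious}, its depth $D_h(\cdot)$, and the set of loner honest blocks are functions of $\{\tau^h_i\}$ alone and see no adversary block, they are the same random objects in $ss0$ and $ss1$.

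For the adversarial side: I would show that, under an appropriate coupling, for every honest block $b_i$ and every $t$ the depth $D_i^{ss1}(t)$ of the adversary subtree rooted at $b_i$ in $ss1$ satisfies $D_i^{ss1}(t)\ge D_i^{ss0}(t)$. The key points are: (i) from $b_i$'s standpoint the adversary's usable leader-election clock starts afresh at $\tau^h_i$ in both systems, because a win on the governing seed that occurred before $\tau^h_i$ cannot be used to extend $b_i$, which is not yet in the tree by rule M1; (ii) in $ss0$ the in-epoch position of $b_i$ only truncates the first nothing-at-stake window to some size $\le c$, so replacing it by a fresh epoch boundary at $b_i$ (a window of the maximal size $c$) can only help the adversary; and (iii) on top of that, $ss1$ hands the adversary a free chain of $c-1$ blocks below $b_i$ at no $\rvdfeval$ cost. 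Appending adversary blocks never decreases the depth of an adversary subtree, and these extra blocks are invisible to $\T_h$ and to the loner criterion, so one obtains the pointwise domination $D_i^{ss1}(t)\ge D_i^{ss0}(t)$ for every strategy the $ss0$-adversary might follow (formally: given any $ss0$ adversary strategy, build the matching $ss1$ strategy on the coupled space).

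For the assembly: for fixed $D_h$, the event $E_{ij}=\{\, D_i(t) < D_h(t-\Delta)-D_h(\tau^h_i+\Delta)\ \text{ for all } t>\tau^h_j+\Delta \,\}$ of Definition~\ref{defn:nak_blk_gen} is anti-monotone in $D_i$, so if it holds with the larger depths $D_i^{ss1}$ it holds a fortiori with $D_i^{ss0}$. Hence on the coupled space, whenever the $j$-th honest block is a Nakamoto block in $ss1$ (it is a loner and $F_j=\bigcap_{i<j}E_{ij}$ occurs) it is also a Nakamoto block in $ss0$; equivalently $B^{ss1}_{\alpha(s),\alpha(s+t)}\subseteq B^{ss0}_{\alpha(s),\alpha(s+t)}$, which yields $P(B^{ss0}_{\alpha(s),\alpha(s+t)})\le P(B^{ss1}_{\alpha(s),\alpha(s+t)})$, and taking the worst case over adversary strategies on both sides preserves the inequality. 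The step I expect to be the main obstacle is the adversarial coupling, in particular point (iii): one must check carefully that re-aligning epoch boundaries to each subtree root and handing out the free blocks genuinely only enlarges every adversary subtree, i.e. that the induced map from $ss0$-executions to $ss1$-executions loses no adversary option and that the relevant winning-opportunity processes can be coupled to compare favorably. The honest-side coupling, the monotonicity of $E_{ij}$, and the bookkeeping around the time change $\alpha$ are routine.
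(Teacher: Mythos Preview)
Your approach is essentially the paper's: couple the honest sample path identically in both systems, argue that under this coupling every adversary subtree in $ss1$ is at least as deep as its counterpart in $ss0$, and then use the anti-monotonicity of $E_{ij}$ in $D_i$ to transfer the absence of Nakamoto blocks from $ss0$ to $ss1$.

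Two remarks. First, your assembly step has the set inclusion written the wrong way. From ``Nakamoto in $ss1$ $\Rightarrow$ Nakamoto in $ss0$'' one gets, by contraposition on the window, $B^{ss0}_{\alpha(s),\alpha(s+t)}\subseteq B^{ss1}_{\alpha(s),\alpha(s+t)}$, not the reverse; this is exactly the direction that yields $P(B^{ss0})\le P(B^{ss1})$, so your conclusion is right but the displayed inclusion is backwards. Second, the paper makes the domination $D_i^{ss1}\ge D_i^{ss0}$ concrete rather than leaving it as a qualitative coupling: it splits into two cases depending on whether the tip of the honest fictitious tree at the comparison time lies in the same epoch as $b_i$ or not, and in the latter case it explicitly prunes $\mathcal{T}_i^{ss0}$ to the portion from the next epoch boundary onward and grafts that onto the end of the gifted $c$-block chain in $ss1$. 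Your points (ii) and (iii), taken together, amount to the same construction, but you should be aware that (ii) in isolation (re-aligning the epoch boundary to $b_i$ without the gift) is not obviously a monotone upgrade; it is the combination of the gift with the realigned boundary at depth $c$ that does the work, and the pruning argument is how the paper certifies it.
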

\noindent The proof for this lemma is given in Appendix~\ref{sec:step3-proof}.

For analyzing $P(B^{ss1}_{\alpha(s),\alpha(s+t)})$, we first consider an arbitrary static system $ss2$ where both honest nodes and adversary win leader elections with constant rates $\lambda_h$ and $\lambda_a$, respectively, the honest nodes follows $\protocol$, the adversary has similar additional power of gift of chain of $c-1$ blocks as in $ss1$ but the network delay is a constant, say $\Delta'$. For some $s',t' > 0$ in the local clock of the static system $ss2$, we will  determine an upper bound on $P(B_{s',s'+t'}^{ss2})$ in Sections~\ref{sec:adversarial-tree-growth} - \ref{sec:Nakamoto-block-existence} and then use this result to obtain an upper bound on $P(B^{ss1}_{\alpha(s),\alpha(s+t)})$ in Section~\ref{sec:putting-all-together}.

\subsection{Step $4$: Growth rate of the adversarial tree}
\label{sec:adversarial-tree-growth}
For time $t'>0$, let $\hat{\mathcal{T}}_i(t')$ represents the adversarial tree in $ss2$ with $i^{th}$ honest block as its root. The depth $D_i(t')$ at time $t'$ in the local clock of $ss2$ is defined as the maximum depth of the blocks of $\hat{\mathcal{T}}_i(t')$ at time $t'$. In Lemma~\ref{lem:step4-lemma}, we evaluate the tail bound on  $D_i(t')$. 
\begin{lemma}
  \label{lem:step4-lemma}
  For $x>0$ so that $\eta_c\lambda_a t'+x$ is an integer,
  \begin{equation}
    \label{eq:c-tail}
    P(D_i(t')\geq \phi_c\lambda_a t'+cx) \leq  e^{-\theta_c^* t'} e^{(\eta_c\lambda_a t'+x-1)\Lambda_c(\theta_c^*)}g(t') .
  \end{equation}
  where $\phi_c = c\eta_c$, $g(t') = \sum_{i_1 \ge 1} \int_{0}^{t'} \frac{\lambda_a^{i_1}u^{i_1 - 1}e^{-\lambda_a u}}{\Gamma(i_1)} e^{\theta_c^* u} du$, $\Lambda_c(\theta_c) = \log(-\lambda_a^c/\theta_c(\lambda_a-\theta_c)^{c-1})$ and $\theta_c^*$ is the solution for the equation $\Lambda_c(\theta) = \theta {\dot \Lambda_c(\theta)} $
\end{lemma}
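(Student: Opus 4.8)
The plan is to analyze the adversarial tree $\hat{\mathcal{T}}_i(t')$ in $ss2$ as a branching-random-walk-type object, where the "displacement" associated to each node is the (exponential) waiting time for the $\rvdf$ lottery to succeed, and then apply a one-dimensional large-deviation / union-bound argument over the paths (chains) in the tree. Because of the gift of $c-1$ free blocks granted in $ss1$/$ss2$, the adversarial chain grows in "bursts": each successful $\rvdf$ evaluation (an $\mathrm{Exp}(\lambda_a)$ waiting time, since the adversary can grind in parallel over all available blocks at that level, giving rate $\lambda_a$ for the minimum) instantly yields $c$ new levels. So a chain of depth $\approx cn$ in $\hat{\mathcal{T}}_i(t')$ corresponds to $n$ i.i.d.\ $\mathrm{Exp}(\lambda_a)$-like increments plus combinatorial branching at each burst. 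I would set up the generating-function / Laplace-transform bookkeeping: the log-moment generating function of the relevant increment is $\Lambda_c(\theta)=\log\!\big(-\lambda_a^c / (\theta(\lambda_a-\theta)^{c-1})\big)$, which is exactly the cumulant transform one gets from combining one $\mathrm{Exp}(\lambda_a)$ clock with the $c-1$-fold free extension and the branching factor; the tilted optimizer $\theta_c^\ast$ solving $\Lambda_c(\theta)=\theta\dot\Lambda_c(\theta)$ is the point where the Legendre transform has the right slope, i.e.\ it identifies the a.s.\ growth speed $\phi_c\lambda_a = c\eta_c\lambda_a$ of the frontier of the branching random walk (this is where the cited theory of branching random walks \cite{shi} enters — the speed of the leftmost/rightmost particle).

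Concretely, the steps in order: (1) Describe $\hat{\mathcal{T}}_i(t')$ precisely as a continuous-time branching process started at the root honest block $b_i$, noting the first burst may arrive at a time governed by a sum of $i_1\ge 1$ exponentials (accounting for the fact the root is honest and the adversary must first win one lottery off it), which is where the prefactor $g(t')=\sum_{i_1\ge1}\int_0^{t'}\frac{\lambda_a^{i_1}u^{i_1-1}e^{-\lambda_a u}}{\Gamma(i_1)}e^{\theta_c^\ast u}\,du$ comes from. (2) For a fixed target depth $d = \phi_c\lambda_a t' + cx$, write $\{D_i(t')\ge d\}$ as the event that \emph{some} chain in the tree reaches burst-count $n := \eta_c\lambda_a t' + x$ by time $t'$; union-bound over chains, using that the expected number of chains of burst-length $n$ grows like a known branching factor absorbed into $\Lambda_c$. (3) Apply the exponential Markov / Chernoff inequality with tilt $\theta_c^\ast$ to the sum of the $n$ (tilted) increments: $P(\text{a given chain reaches burst }n\text{ by }t') \le e^{-\theta_c^\ast t'} e^{(n-1)\Lambda_c(\theta_c^\ast)} \cdot(\text{root correction})$, and multiply by the expected chain count, which contributes the remaining $\Lambda_c$ factor — collecting terms gives exactly $e^{-\theta_c^\ast t'}e^{(\eta_c\lambda_a t'+x-1)\Lambda_c(\theta_c^\ast)}g(t')$. (4) Verify the integrality hypothesis "$\eta_c\lambda_a t'+x$ is an integer" is what makes "$n$ bursts" well-defined and check the arithmetic $d = cn$ under $\phi_c = c\eta_c$.

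The main obstacle I expect is step (2)–(3): correctly handling the branching (the many parallel chains the adversary can grow — the nothing-at-stake amplification) so that the combinatorial explosion of chains is exactly tamed by the large-deviation decay, i.e.\ proving that the "many-to-one" first-moment bound on the number of long chains, combined with the Chernoff bound on a single chain's arrival time, telescopes cleanly into a single $\Lambda_c(\theta_c^\ast)$ per burst. This is precisely the place where the branching-random-walk machinery of \cite{shi} is doing real work: identifying $\theta_c^\ast$ as the correct tilt is equivalent to the statement that the exponential growth rate of the number of chains is balanced against the exponential cost of any single chain being anomalously long, and getting the constant $\phi_c = c\eta_c$ (rather than something larger) requires this balance to be tight. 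A secondary technical nuisance is the root correction $g(t')$: one must be careful that the first inter-burst time off an honest root has a different law (sum of $\ge1$ exponentials rather than a single one) than subsequent ones, and that this only affects the polynomial-in-$t'$ prefactor and not the exponential rate — which is why $g(t')$ appears multiplicatively and subexponentially.
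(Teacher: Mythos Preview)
Your overall approach is right and matches the paper's: pass to the superblock tree (one superblock per $c$ levels), write $\{D_i(t')\ge \phi_c\lambda_a t'+cx\}=\{Q^*_m\le t'\}$ with $m=\eta_c\lambda_a t'+x$, union-bound $P(Q^*_m\le t')\le\sum_{v\in\mathcal{I}_m}P(Q_v\le t')$, split $Q_v=Q_v^1+Q_v^2$ where $Q_v^1$ is the first-generation displacement, apply Chernoff at tilt $\theta_c^*$ to $Q_v^2$, and sum. The first-generation sum over $i_1\ge 1$ against the Erlang density of $Q_v^1$ is exactly $g(t')$; the later-generation sum collapses to $e^{(m-1)\Lambda_c(\theta_c^*)}$.

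The one genuine imprecision is your description of the tree past the first level. The gift of $c-1$ blocks is a \emph{one-time} upgrade at the root in $ss2$ --- this is why the first superblock level is reachable after only $i_1\ge 1$ exponentials --- and it does \emph{not} recur. From superblock level $k-1$ to $k$ (for $k\ge 2$) the $i_k$-th child arrives after a sum of $i_k$ fresh $\mathrm{Exp}(\lambda_a)$ times with $i_k\ge c$, not ``one exponential plus $c-1$ free levels.'' Accordingly $\Lambda_c(\theta)$ is not ``one Exp clock plus a $c{-}1$-fold free extension plus a branching factor''; it is the log-Laplace transform of the offspring point process,
\[
\Lambda_c(\theta)=\log\sum_{j\ge c}\Bigl(\frac{\lambda_a}{\lambda_a-\theta}\Bigr)^{j}=\log\Bigl(\frac{-\lambda_a^c}{\theta(\lambda_a-\theta)^{c-1}}\Bigr),
\]
which already folds the branching count and the per-child displacement MGF into one sum. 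So in your step~(3) there is no separate ``multiply by the expected chain count'': the single many-to-one sum $\sum_{i_2,\dots,i_m\ge c}(\lambda_a/(\lambda_a-\theta_c^*))^{i_2+\cdots+i_m}=e^{(m-1)\Lambda_c(\theta_c^*)}$ does both jobs at once. If you executed the argument under the ``each VDF success yields $c$ levels'' picture you would compute the wrong $\Lambda_c$; with the corrected picture your steps (2)--(4) go through exactly as in the paper.
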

\noindent Details on the analysis of $\hat{\mathcal{T}}_i(t')$ and the proof of Lemma~\ref{lem:step4-lemma} are
in Appendix~\ref{sec:growth_rate}.

\subsection{Step $5$: Existence of Nakamoto blocks}
\label{sec:Nakamoto-block-existence}
With Lemma~\ref{lem:step4-lemma}, we show below that in the static system $ss2$ in the regime $\phi_c \lambda_a <  \frac{\lambda_h}{1 + \lambda_{h}\Delta'}$, Nakamoto blocks has a non-zero probability of occurrence. 
\begin{lemma}
\label{lem:step5-lemma-1}
If
$$\phi_c \lambda_a <  \frac{\lambda_h}{1 + \lambda_{h}\Delta'},$$
then, in the static system $ss2$, there is a $p > 0$ such that the probability of the $j-$th honest block being a Nakamoto block is greater than $p$ for all $j$.
\end{lemma}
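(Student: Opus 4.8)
The plan is to show that $P(b_j\text{ is a Nakamoto block})\ge p$ for a constant $p>0$ independent of $j$, following the strategy of the analogous statement in \cite{dembo2020everything} but driving the adversarial side with the tail bound of Lemma~\ref{lem:step4-lemma}. Recall from Definition~\ref{defn:nak_blk_gen} that $b_j$ is a Nakamoto block iff it is a loner and $F_j=\bigcap_{i=0}^{j-1}E_{ij}$ holds. In $ss2$ the honest proposal process is Poisson of rate $\lambda_h$, independent of everything adversarial, while the adversarial extensions of the distinct trees $\T_0(\cdot),\dots,\T_{j-1}(\cdot)$ are driven by disjoint, hence independent, Poisson clocks (see {\bf M1}). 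The observation I would exploit first is that the fictitious honest tree depth $D_h(\cdot)$ (Definition~\ref{def:fictitious}) is a deterministic function of the honest proposal times alone, so that \emph{conditionally on the honest proposal times, the depths $D_0(\cdot),\dots,D_{j-1}(\cdot)$ and therefore the events $E_{0j},\dots,E_{j-1,j}$ are mutually independent}. Consequently
\begin{equation*}
P(b_j\text{ is a Nakamoto block})=\mathbb{E}\!\left[\mathbf{1}\{b_j\text{ loner}\}\prod_{i=0}^{j-1}\bigl(1-P(E_{ij}^c\mid\text{honest times})\bigr)\right].
\end{equation*}

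Next I would quantify the race inside each $E_{ij}$. On the honest side, each level of $\T_h$ is created by a batch leader followed by the blocks appended within the $\Delta'$-window, so the level-completion times form a renewal process with i.i.d.\ inter-arrivals of law $\Delta'+\mathrm{Exp}(\lambda_h)$ and mean $1/\beta_h$ with $\beta_h:=\lambda_h/(1+\lambda_h\Delta')$; a Cram\'er bound for this renewal process gives constants $C,c_1>0$ with $P\bigl(D_h(b)-D_h(a)\le\beta_h(b-a)-y\bigr)\le Ce^{-c_1y}$ for all $a<b$, $y\ge 0$. On the adversarial side, Lemma~\ref{lem:step4-lemma} already yields, after absorbing the sub-exponential factor $g$, a bound of the form $P\bigl(D_i(\tau^h_i+s)\ge\phi_c\lambda_a s+cx\bigr)\le C'e^{-c_2x}$, and by hypothesis $\delta:=\beta_h-\phi_c\lambda_a>0$. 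Writing $t=\tau^h_i+s$, the event $E_{ij}^c$ asks that $D_i(\tau^h_i+s)\ge D_h(\tau^h_i+s-\Delta')-D_h(\tau^h_i+\Delta')$ for some $s>\tau^h_j-\tau^h_i+\Delta'$. Restricting to the positive-probability set of honest-time configurations on which $D_h$ is ``typical'' (it grows at rate at least $\beta_h-\epsilon$ past a fixed level, for a small $\epsilon$ with $\beta_h-\epsilon>\phi_c\lambda_a$), comparing the two tails at a midpoint threshold, and taking a union over integer values of $s$ (legitimate since $D_i,D_h$ are nondecreasing and integer-valued), I expect to obtain $P(E_{ij}^c\mid\text{honest times})\le C''e^{-c_3(\tau^h_j-\tau^h_i)}$ for constants $C'',c_3>0$ depending only on $\lambda_h,\lambda_a,\Delta',c$.

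To finish, split the product over $i$ into the \emph{recent} indices with $\tau^h_j-\tau^h_i\le T_0$ (a bounded set on a typical honest configuration) and the \emph{old} ones, for a suitable constant $T_0$. For old $i$, the exponential bound together with the fact that the expected number of honest blocks at distance in $[d,d+1)$ from $\tau^h_j$ is $\lambda_h$ gives $\sum_{\text{old }i}P(E_{ij}^c\mid\text{honest times})\le C''\lambda_h\sum_{d\ge T_0}e^{-c_3 d}=:K'$, which can be made $\le 1/2$, whence $\prod_{\text{old }i}\bigl(1-P(E_{ij}^c\mid\text{honest times})\bigr)\ge e^{-2K'}$. For each recent $i$ one has $P(E_{ij}^c\mid\text{honest times})<1$ — with positive probability the adversary appends nothing to $b_i$ long enough that $D_h$ permanently overtakes the relevant curve — and, since there are only boundedly many recent indices and $ss2$ is time-homogeneous, $\prod_{\text{recent }i}\bigl(1-P(E_{ij}^c\mid\text{honest times})\bigr)$ is bounded below by a positive constant uniformly over $j$ and over admissible honest configurations. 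Intersecting with the loner event, which constrains the honest process only in a bounded window and hence perturbs all of the above by at most bounded factors while retaining positive probability, yields $P(b_j\text{ is a Nakamoto block})\ge p>0$ with $p$ independent of $j$.

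The step I expect to be the main obstacle is the second paragraph: turning the two one-sided estimates — the renewal large-deviation bound for $D_h$ with its $\Delta'$-batching, and the Lemma~\ref{lem:step4-lemma} bound for $D_i$ — into a single bound on $\sup_{t>\tau^h_j+\Delta'}$ that decays exponentially in $\tau^h_j-\tau^h_i$ with constants uniform in $j$; in particular, handling the supremum over $t$ by discretization without degrading the exponential rate, and pinning down the ``typical honest configuration'' event so that it has probability bounded away from $0$ while forcing $D_h$ to stay ahead of every old adversarial tree. The remaining ingredients — the conditional independence, the Poisson and renewal computations, and the bounded-number-of-recent-indices argument — are then routine.
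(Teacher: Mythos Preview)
Your central device — the factorization
\[
P\bigl(F_j\mid\text{honest times}\bigr)=\prod_{i=0}^{j-1}\bigl(1-P(E_{ij}^c\mid\text{honest times})\bigr)
\]
— rests on the claim that, conditionally on the honest arrival times, the adversarial trees $\T_0,\dots,\T_{j-1}$ are driven by \emph{disjoint} Poisson clocks. That is exactly the point where the $c$-correlation bites. In {\bf M1} the independent Poisson processes are indexed by \emph{epoch-beginning blocks} (those at levels $c,2c,\dots$), not by honest blocks; a single arrival of the clock at an epoch-beginning block $B$ entitles the adversary to append a child to \emph{every} descendant of $B$ within the next $c-1$ levels. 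Consequently, whenever two honest blocks $b_i,b_{i'}$ share an epoch-beginning ancestor, the first adversarial block in $\T_i$ and the first in $\T_{i'}$ come from the \emph{same} clock, and in $ss2$ the subsequent gift of $c-1$ levels is triggered simultaneously. Thus $D_i(\cdot)$ and $D_{i'}(\cdot)$ are genuinely dependent given the honest times, and your product formula does not hold for $c>1$. One might hope to repair this by an FKG-type argument (the dependence is positive) or by grouping trees epoch-by-epoch, but neither is in your proposal, and neither is immediate.

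The paper sidesteps the whole issue. It first extends the honest process and the family of adversarial trees to a two-sided stationary system, so that $P(\hat F_j)$ is constant in $j$ and it suffices to show $P(\hat E_0)>0$ for the stationary event $\hat E_0=U_0\cap\bigcap_{i<0<k}\hat B_{ik}^c$. Then, instead of a product, it uses a union bound over the catch-up events $\hat B_{ik}$ after conditioning on a single favourable event $G_n=\{D_m(3n/\lambda_h+\tau^h_m)=0\ \text{for}\ |m|\le n\}$: given $G_n$, the near pairs $(i,k)$ can only catch up if the honest gaps are atypically large, while the far pairs are handled one tree at a time via Lemma~\ref{lem:step4-lemma}. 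The only joint statement needed about several trees is $P(G_n\mid U_0)>0$, which follows simply because ``no Poisson arrivals in a bounded window'' is a positive-probability cylinder event, regardless of how the clocks are shared. So the paper's argument needs only marginal bounds on each $\hat B_{ik}$ and never uses independence across trees — which is precisely the ingredient your approach cannot supply in this model.
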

The proof of this result can be found in Appendix~\ref{sec:nakamoto_block}. 

Having established the fact that Nakamoto blocks occurs with
non-zero frequency, we can bootstrap on Lemma~\ref{lem:step5-lemma-1} to get a bound
on the probability that in a time interval $[s',s' + t']$, there are no
Nakamoto blocks, i.e. a bound on $P(B_{s',s'+t'})$.
\begin{lemma}
\label{lem:step5-lemma-2}
If 
$$\phi_c \lambda_a <  \frac{\lambda_h}{1 + \lambda_{h}\Delta'},$$
then for any $\epsilon > 0$,
there exist constants $\bar a_\epsilon,\bar A_\epsilon$ so that for all $s'\geq 0$ and $t' > \max\left\{\left(\frac{2\lambda_h}{1-\eta}\right)^2\left(\frac{c-1}{\phi_c-1}\right)^2, \left[(c-1)\left(\Delta' + \frac{1}{\lambda_{\min}}\right)\right]^2\right\}$, we have
\begin{equation}
\label{eqn:qst_strong}
P(B^{ss2}_{s',s'+t'}) \leq \bar A_\epsilon \exp(-\bar a_\epsilon t'^{1-\epsilon})
\end{equation}
where $\bar a_\epsilon$ is a function of $\Delta'$.
\end{lemma}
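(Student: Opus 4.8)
The plan is to bootstrap the single‑block estimate of Lemma~\ref{lem:step5-lemma-1} — each honest block of $ss2$ is a Nakamoto block with probability at least $p$ — into a stretched‑exponential tail on $P(B^{ss2}_{s',s'+t'})$. The difficulty is that the event ``the $j$‑th honest block is a Nakamoto block'' is $F_j=\bigcap_{i<j}E_{ij}$ (Definition~\ref{defn:nak_blk_gen}), which refers to adversarial trees $\hat{\mathcal{T}}_i$ rooted at honest blocks arbitrarily far in the past and demands a depth inequality for \emph{all} future times; consequently these events for different $j$ are neither independent nor even finitely dependent. I would deal with this by a truncation‑then‑chunking argument: replace $F_j$ by a local proxy that depends only on the Poisson arrivals (honest and adversarial) in a bounded time strip around $\tau^h_j$, chop $[s',s'+t']$ into genuinely independent blocks and apply the per‑block estimate to the proxy, and then control the small discrepancy between the proxy and $F_j$ using the exponential adversarial‑tree tail of Lemma~\ref{lem:step4-lemma}.

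Concretely, fix a horizon $H>0$ and call the $j$‑th honest block an $H$-Nakamoto block if it is a loner and the inequality $D_i(t) < D_h(t-\Delta') - D_h(\tau^h_i+\Delta')$ holds for every root $i$ with $b_i$ proposed in $[\tau^h_j-H,\tau^h_j]$ and for every $t\in(\tau^h_j+\Delta',\ \tau^h_j+H]$. For a suitable constant multiple $cH'$ of $H$ this event is measurable with respect to the arrivals in the strip $[\tau^h_j-H,\ \tau^h_j+H]$ alone. The first claim is that $P(\text{$j$-th block is $H$-Nakamoto})\geq p/2$ once $H$ exceeds a constant: the relaxation only removes from $F_j$ the contributions of old roots $\tau^h_i<\tau^h_j-H$ and of late times $t>\tau^h_j+H$, and by Lemma~\ref{lem:step4-lemma} together with the strict drift gap $\phi_c\lambda_a<\lambda_h/(1+\lambda_h\Delta')$ the probability that any removed inequality fails is a geometric‑type sum that is at most $p/2$ for $H$ large — the fictitious honest tree $D_h$ has drift strictly above the adversarial growth rate $\phi_c\lambda_a$, so an adversarial tree overtaking it after a lag $H$ is exponentially unlikely, and summing over all roots at times $\le\tau^h_j-H$ preserves exponential decay in $H$.

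Next, partition $[s',s'+t']$ into $m\asymp t'/H$ consecutive sub‑intervals, each of length a fixed constant multiple of $H$, and in each sub‑interval count only $H$-Nakamoto blocks whose parent honest block lies in its central part; by the choice of lengths, the defining event of such a block is measurable with respect to the arrivals in that sub‑interval only, so (conditioning first on the honest‑proposal skeleton, which makes the ``loner'' and $D_h$ data deterministic, then integrating) the events ``sub‑interval $\ell$ contains such a block'' are independent across $\ell$, and each has probability at least a constant $p'>0$ obtained from Lemma~\ref{lem:step5-lemma-1} applied to the honest block nearest the centre. Hence $P(\text{no $H$-Nakamoto block in the central region of any sub‑interval})\leq(1-p')^{m}$. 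Moreover, an $H$-Nakamoto block fails to be a genuine Nakamoto block only if some adversarial tree rooted before its parent's time $-H$ overtakes $D_h$ at a future time, or a tree rooted within its window overtakes after lag $H$; a union bound over the $O(\lambda_{\max}t')$ honest blocks in $[s',s'+t']$, each controlled by Lemma~\ref{lem:step4-lemma}, bounds this interference probability by $C\,t'\,e^{-cH}$ with $C,c$ depending on $\Delta'$ and the gap. Combining, $P(B^{ss2}_{s',s'+t'})\leq(1-p')^{m}+C\,t'\,e^{-cH}$ with $m\asymp t'/H$; taking $H=t'^{\epsilon}$ gives a first term $\leq e^{-c_1 t'^{1-\epsilon}}$ and a second term $\leq C t' e^{-c t'^{\epsilon}}$, and for $t'$ larger than the stated $\max\{\cdot,\cdot\}$ — the first entry guaranteeing that at least one chunk fits and the margin is at the $\Omega\big([(c-1)/(\phi_c-1)]^2\big)$ scale required by Lemma~\ref{lem:nak_secure}, the second guaranteeing $H$ can exceed the epoch‑crossing delay $(c-1)(\Delta'+1/\lambda_{\min})$ — this is dominated by $\bar A_\epsilon\exp(-\bar a_\epsilon t'^{1-\epsilon})$ with $\bar a_\epsilon$ a function of $\Delta'$, as claimed.

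The main obstacle is the truncation step: one must bound, uniformly over how far back the root honest block sits, the probability that an old adversarial tree ever overtakes $D_h$ inside the relevant future window, and show that these bounds are summable — precisely where the exponential tail of Lemma~\ref{lem:step4-lemma} and the strict inequality $\phi_c\lambda_a<\lambda_h/(1+\lambda_h\Delta')$ are both indispensable. A secondary technical point is securing true independence across chunks despite the shared honest‑proposal randomness and the non‑local ``loner'' and $D_h$ conditions; handling this cleanly is what forces the look‑back strip of width $H$ per chunk and hence the unavoidable loss from $t'$ to $t'^{1-\epsilon}$.
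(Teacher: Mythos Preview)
Your decomposition --- localize the Nakamoto-block event to a window of width $H$, chunk $[s',s'+t']$ into $\asymp t'/H$ pieces, use independence across chunks for the local event, and control the leftover long-range catch-ups via Lemma~\ref{lem:step4-lemma} --- is exactly the architecture of the paper's proof. The paper's local event $C_\ell$, its far-range events $B$ and $\tilde B$, and its per-chunk constant bound $P(C_\ell)\le 1-p+o(1)$ from Lemma~\ref{lem:step5-lemma-1} line up one-to-one with your $H$-Nakamoto event, your long-range interference, and your constant $p'$.

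The gap is in the final arithmetic, and it is genuine. With $H=t'^{\epsilon}$ your two terms are $(1-p')^{t'/H}\le e^{-c_1 t'^{1-\epsilon}}$ and $Ct'e^{-cH}=Ct'e^{-c t'^{\epsilon}}$. For $\epsilon<1/2$ the second term is \emph{not} dominated by $e^{-\bar a_\epsilon t'^{1-\epsilon}}$; it decays only like $e^{-c t'^{\epsilon}}$, which is much slower. The obstruction is intrinsic to a one-shot argument: the number of independent chunks is $t'/H$ while the long-range error scales like $e^{-cH}$, and balancing these forces $H\asymp\sqrt{t'}$, i.e.\ only $\epsilon=1/2$. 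The paper proves exactly this base case with $H=\sqrt{t'}$ (its ``step~1''), and then \emph{bootstraps recursively}: once $P(B^{ss2}_{s',s'+t'})\le A_m e^{-b_m t'^{1/m}}$ is in hand, re-chunk into sub-intervals of length $t'^{m/(2m-1)}$ and apply the already-proved stretched-exponential bound to each chunk in place of the constant $1-p'$. This yields the statement with exponent $\tfrac{m}{2m-1}$; iterating gives $m_k=\tfrac{k+1}{k}\to 1$, hence any $\epsilon>0$. Your plan delivers the base case but is missing this iteration, so as written it does not prove the lemma for $\epsilon<1/2$.

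A smaller point: independence across chunks is not obtained in the paper by conditioning on the honest arrival skeleton. Because the protocol is $c$-correlated, adversarial trees rooted in the same epoch share randomness, and the paper secures independence by intersecting each local event with the event $\hat C_\ell$ that the honest fictitious tree grows by $c-1$ levels inside a buffer sub-interval (forcing an epoch boundary between consecutive chunks). Your honest-skeleton conditioning fixes $D_h$ and the loner data but does not by itself decouple the adversarial randomness across chunks; you would still need an epoch-crossing guard of this kind.
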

The proof of this result can be found in Appendix~\ref{sec:B_s_s_plus_t}.

\subsection{Step $6$: Putting back all together}
\label{sec:putting-all-together}
In this section, we use the results from Section~\ref{sec:Nakamoto-block-existence} to upper bound $P(B^{ss1}_{\alpha(s),\alpha(s+t)})$ and hence, $P(B_{s,s+t})$. 

Using equation~\ref{eq:time-relations}, we have $\phi_c \lambda_a(t) <  \frac{\lambda^c_h(t)}{1 + \lambda_{\max}\Delta} \iff \phi_c \lambda_a <  \frac{\lambda_h}{1 + \lambda_{\max}\Delta}$. Then, we have the following lemma:
\begin{lemma}
\label{lem:step6-lemma}
If 
$$\phi_c \lambda_a(t) <  \frac{\lambda^c_h(t)}{1 + \lambda_{\max}\Delta},$$
then for any $\epsilon > 0$
there exist constants $\bar a_\epsilon,\bar A_\epsilon$ so that for all $s\geq 0$ and  $t > \max\left\{\left(\frac{2\lambda_h}{1-\eta}\right)^2\left(\frac{\lambda_h}{\lambda_{\min}}\right)\left(\frac{c-1}{\phi_c-1}\right)^2, \left(\frac{\lambda_h}{\lambda_{\min}}\right)\left[(c-1)\left(\Delta + \frac{1}{\lambda_{\min}}\right)\right]^2\right\}$, we have
\begin{equation}
P(B_{s,s+t}) \leq \bar A_\epsilon \exp(-\bar a_\epsilon t^{1-\epsilon}) + e^{-\Omega(\kappa)}.
\end{equation}
\end{lemma}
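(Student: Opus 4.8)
The plan is to glue together the reductions of Steps~1--3, then remove the last remaining ``dynamic'' feature of $ss1$ --- its time-varying network delay --- by dominating it with the constant-delay system $ss2$ of Steps~4--5, apply Lemma~\ref{lem:step5-lemma-2} there, and finally transport the resulting stretched-exponential estimate back through the time change $\alpha(\cdot)$ to the clock of $dyn2$ (hence of the real protocol).

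\emph{Composing Steps~1--3.} Chaining Lemmas~\ref{lem:step1-lemma},~\ref{lem:step2-lemma} and~\ref{lem:step3-lemma} gives, for all $s,t>0$,
\begin{align*}
P(B_{s,s+t}) &\le P(B^{dyn2}_{s,s+t}) + e^{-\Omega(\kappa)} = P(B^{ss0}_{\alpha(s),\alpha(s+t)}) + e^{-\Omega(\kappa)} \\
&\le P(B^{ss1}_{\alpha(s),\alpha(s+t)}) + e^{-\Omega(\kappa)},
\end{align*}
so it remains to bound $P(B^{ss1}_{\alpha(s),\alpha(s+t)})$, a quantity in a system with constant rates $\lambda_h,\lambda_a$ and the gift-of-$(c-1)$-blocks rule whose only non-static ingredient is the delay $\Delta^{ss0}(\cdot)$.

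\emph{Removing the time-varying delay.} Set $\Delta' := \frac{\lambda_{\max}}{\lambda_h}\Delta$, so that by~(\ref{eq:delta-inequality}) we have $\Delta^{ss0}(t)\le\Delta'$ for all $t$. I would argue that $ss2$ with this constant $\Delta'$ dominates $ss1$ via a coupling: run $ss2$ on the same Poisson leader-election clocks, the same adversary block placements, and the same actual message delays as $ss1$ (all of which are $\le\Delta'$), but evaluate the loner condition and the honest fictitious tree of Definitions~\ref{def:fictitious} and~\ref{defn:nak_blk_gen} with the larger window $\Delta'$. Widening the window only makes ``loner'' harder to satisfy and makes $D_h$ grow no faster, so each $E_{ij}$, hence $F_j$, is no more likely, while the adversarial trees $\hat{\mathcal{T}}_i$ are unchanged; thus any $j$-th honest block that is a Nakamoto block in the coupled $ss2$ run is also one in $ss1$ (this is conservative, since a block Nakamoto w.r.t.\ the inflated $\Delta'$ is a fortiori Nakamoto w.r.t.\ the true delay, so Lemma~\ref{lem:nak_blk} still applies). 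Taking the supremum over adversary strategies yields $P(B^{ss1}_{\alpha(s),\alpha(s+t)}) \le P(B^{ss2}_{\alpha(s),\alpha(s+t)})$.

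\emph{Applying Step~5 and converting clocks.} Since $1+\lambda_h\Delta'=1+\lambda_{\max}\Delta$, the equivalence stated just before the lemma shows that the hypothesis $\phi_c\lambda_a(t)<\lambda^c_h(t)/(1+\lambda_{\max}\Delta)$ is exactly $\phi_c\lambda_a<\lambda_h/(1+\lambda_h\Delta')$, which is the hypothesis of Lemma~\ref{lem:step5-lemma-2} for this $ss2$. Applying that lemma with $s'=\alpha(s)$ and $t'=\alpha(s+t)-\alpha(s)$ gives $P(B^{ss2}_{\alpha(s),\alpha(s+t)})\le\bar A_\epsilon\exp(-\bar a_\epsilon {t'}^{1-\epsilon})$ whenever $t'$ clears the threshold there. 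By~(\ref{eq:time-relations}) and $\lambda^c_h(u)\ge\lambda_{\min}$ we have $t'=\alpha(s+t)-\alpha(s)\ge\frac{\lambda_{\min}}{\lambda_h}t$, so the threshold is met once $t$ exceeds the $\frac{\lambda_h}{\lambda_{\min}}$-inflated bound stated in the lemma (here one also uses that $\Delta'$ is a fixed multiple of $\Delta$ to match the second term), and $\exp(-\bar a_\epsilon {t'}^{1-\epsilon})\le\exp(-\bar a_\epsilon(\lambda_{\min}/\lambda_h)^{1-\epsilon}t^{1-\epsilon})$; absorbing the constant into $\bar a_\epsilon$ and carrying the additive $e^{-\Omega(\kappa)}$ from the first display completes the bound. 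The one genuinely delicate point is the delay-monotonicity coupling --- showing that widening the unification window in the fictitious-tree construction really does stochastically lower-bound the true honest chain growth under the original time-varying delay --- everything else being bookkeeping with $\alpha$ and the constants $\lambda_{\min},\lambda_{\max},\lambda_h,\Delta'$.
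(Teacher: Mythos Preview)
Your proposal is correct and follows essentially the same route as the paper: compose Steps~1--3, replace the time-varying delay in $ss1$ by the constant upper bound $\Delta'=\frac{\lambda_{\max}}{\lambda_h}\Delta$, invoke the static-system estimate (Lemma~\ref{lem:step5-lemma-2}) with this $\Delta'$, and transport back via $\alpha(s+t)-\alpha(s)\ge\frac{\lambda_{\min}}{\lambda_h}t$. The only presentational difference is that where you package the delay replacement as a single coupling inequality $P(B^{ss1})\le P(B^{ss2})$, the paper instead unrolls it into event-level inclusions ($\hat A_{ik}\subseteq\hat B^{ss1}_{ik}$, $V^{ss1}_j\subseteq U_j$, and $X_d\le\Delta'+Y_d$) and then re-runs the proof of Lemma~\ref{lem:step5-lemma-2} directly for $ss1$; those inclusions are precisely the ``delicate monotonicity'' you flagged, made explicit.
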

The proof for this result is given in Appendix~\ref{sec:step6-lemma-proof}. Then, combining Lemma~\ref{lem:step6-lemma} with Lemma~\ref{lem:nak_secure} implies Theorem~\ref{thm:main}.

\section{Discussion} \label{sec:discuss}
In this section, we discuss some of the practical considerations in adopting \protocol. 


A key question in \protocol is what is the right choice of $c$? If $c$ is low, say $10$, then the security threshold is approximately $1.58$. At $c=10$, the protocol is fully unpredictable and the confirmation latency is not too high. Also, any newly joining honest node has to wait for around $10$ inter-block arrivals  before it can participate in leader election. Thus, if there is a block arrival every second, then, the node has to wait for $10$ secs. In any standard blockchain, there is always a bootstrap period for the node to ensure that the state is synchronized with the existing peers and $10$ secs is negligible as compared to the bootstrap period.

In \protocol, a separate $\rvdf$ needs to be run for  each public-key. In a purely decentralized implementation, all nodes may not have the same rate of computing VDF. This may disadvantage nodes whose rate of doing sequential computation is slower. One approach to solve this problem is to build open-source hardware for VDF - this is already under way through the VDF Alliance. Even under such a circumstnace, it is to be expected that nodes that can operate their hardware in idealized circumstances (for example, using specialized cooling equipment) can gain an advantage. A desirable feature of our protocol is that gains obtained by a slight advantage in the VDF computation rate are bounded. For \protocol, a combination of the VDF computation rate and the stake together yields the net power weilded by a node, and as long as a majority of such power is controlled by honest nodes, we can expect the protocol to be safe. 

In our \protocol specification, the difficulty parameter for the computation of $\rvdfeval$ was assumed to be fixed. This threshold was chosen based on the entire stake being online - this was to ensure that forking even when all nodes are present remains small, i.e., $\lambda_{\max} \Delta$ remained small. In periods when far fewer nodes are online, this leads to a slowdown in confirmation latency. A natural way to mitigate this problem is to use a variable mining threshold based on past history, similar to the adaptation inherent in Bitcoin. A formal analysis of Bitcoin with variable difficulty was carried out in \cite{nakamoto_bounded_delay,backbone_var_difficulty}, we leave a similar analysis of our protocol for future work. 

In our protocol statement, we have used the $\rvdf$ directly on the  randomness prevRand and the public key. The $\rvdf$ ensures that any other node can only predict a given node's leadership slot at the instant that it actually wins the VDF lottery. However, this still enables an adversary to predict the leadership slots of nodes that are offline and can potentially bribe them to come online to favor the adversary. In order to eliminate this exposure, we can replace the hash in the mining condition by using a verifiable random function \cite{vrf,vrf2} (which is calculated using the node's secret key but can be checked using the public key). This ensures that an adversary which is aware of all the public state as well as private state of all {\em online} nodes (including their VRF outputs) still cannot predict the leadership slot of any node ahead of the time at which they can mine the block. This is because, such an adversary does not have access to the VRF output of the offline nodes.

There are two types of PoS protocols: one favoring liveness under dynamic availability and other favoring safety under asynchrony. BFT protocols fall into the latter class and lack dynamic availability. One shortcoming of the longest chain protocol considered in the paper is the reduced throughput and latency compared to the fundamental limits; this problem is inherited from the Nakamoto consensus for PoW \cite{bitcoin}. However, a recent set of papers address these problems in PoW (refer Prism \cite{bagaria2019proof}, OHIE \cite{yu2018ohie} and Ledger Combiners \cite{fitzi2020ledger}). Adaptations of these ideas to the PoSAT protocol is left for future work. Furthermore, our protocol, like Nakamoto, does not achieve optimal chain quality. Adopting ideas from PoW protocols with optimal chain quality, such as Fruitchains \cite{pass2017fruitchains}, is also left for future work.

Finally, while we specified \protocol in the context of proof-of-stake, the ideas can apply to other mining modalities - the most natural example is proof-of-space. We note that existing proof-of-space protocols like Chia \cite{cohen2019chia}, use a VDF for a fixed time, thus making the proof-of-space challenge predictable. In proof-of-space, if the predictability window is large, it is possible to use slow-storage mechanisms such as magnetic disks (which are asymmetrically available with large corporations) to answer the proof-of-space challenges. Our solution of using a RandVDF can be naturally adapted to this setting, yielding unpredictability as well as full dynamic availability.

\section*{Acknowledgement}

DT wants to thank Ling Ren for earlier discussions on dynamic availability of Proof-of-Stake protocols. 






\bibliographystyle{acm}
\bibliography{posPrism}

\begin{thebibliography}{10}

\bibitem{azouvi2018betting}
{\sc Azouvi, S., McCorry, P., and Meiklejohn, S.}
\newblock Betting on blockchain consensus with fantomette.
\newblock {\em arXiv preprint arXiv:1805.06786\/} (2018).

\bibitem{badertscher2018ouroboros}
{\sc Badertscher, C., Ga{\v{z}}i, P., Kiayias, A., Russell, A., and Zikas, V.}
\newblock Ouroboros genesis: Composable proof-of-stake blockchains with dynamic
  availability.
\newblock In {\em Proceedings of the 2018 ACM SIGSAC Conference on Computer and
  Communications Security\/} (2018), ACM, pp.~913--930.

\bibitem{bagaria2019proof}
{\sc Bagaria, V., Dembo, A., Kannan, S., Oh, S., Tse, D., Viswanath, P., Wang,
  X., and Zeitouni, O.}
\newblock Proof-of-stake longest chain protocols: Security vs predictability.
\newblock {\em arXiv preprint arXiv:1910.02218\/} (2019).

\bibitem{bentov2016snow}
{\sc Bentov, I., Pass, R., and Shi, E.}
\newblock Snow white: Provably secure proofs of stake.
\newblock {\em IACR Cryptology ePrint Archive 2016\/} (2016), 919.

\bibitem{boneh2018verifiable}
{\sc Boneh, D., Bonneau, J., B{\"u}nz, B., and Fisch, B.}
\newblock Verifiable delay functions.
\newblock In {\em Annual international cryptology conference\/} (2018),
  Springer, pp.~757--788.

\bibitem{brown2019formal}
{\sc Brown-Cohen, J., Narayanan, A., Psomas, A., and Weinberg, S.~M.}
\newblock Formal barriers to longest-chain proof-of-stake protocols.
\newblock In {\em Proceedings of the 2019 ACM Conference on Economics and
  Computation\/} (2019), pp.~459--473.

\bibitem{tendermint}
{\sc Buchman, E., Kwon, J., and Milosevic, Z.}
\newblock The latest gossip on {BFT} consensus, 2018.

\bibitem{cai1993towards}
{\sc Cai, J.-Y., Lipton, R.~J., Sedgewick, R., and Yao, A.-C.}
\newblock Towards uncheatable benchmarks.
\newblock In {\em [1993] Proceedings of the Eigth Annual Structure in
  Complexity Theory Conference\/} (1993), IEEE, pp.~2--11.

\bibitem{chen2016algorand}
{\sc Chen, J., and Micali, S.}
\newblock Algorand.
\newblock {\em arXiv preprint arXiv:1607.01341\/} (2016).

\bibitem{cohen2019chia}
{\sc Cohen, B., and Pietrzak, K.}
\newblock The chia network blockchain.
\newblock {\em https://www.chia.net/assets/ChiaGreenPaper.pdf\/} (2019).

\bibitem{david2018ouroboros}
{\sc David, B., Ga{\v{z}}i, P., Kiayias, A., and Russell, A.}
\newblock Ouroboros praos: An adaptively-secure, semi-synchronous
  proof-of-stake blockchain.
\newblock In {\em Annual International Conference on the Theory and
  Applications of Cryptographic Techniques\/} (2018), Springer, pp.~66--98.

\bibitem{dembo2020everything}
{\sc Dembo, A., Kannan, S., Tas, E.~N., Tse, D., Viswanath, P., Wang, X., and
  Zeitouni, O.}
\newblock Everything is a race and nakamoto always wins.
\newblock {\em ACM CCS, see also arXiv preprint arXiv:2005.10484\/} (2020).

\bibitem{vrf2}
{\sc Dodis, Y., and Yampolskiy, A.}
\newblock A verifiable random function with short proofs and keys.
\newblock In {\em International Workshop on Public Key Cryptography\/} (2005),
  Springer, pp.~416--431.

\bibitem{ephraim2020continuous}
{\sc Ephraim, N., Freitag, C., Komargodski, I., and Pass, R.}
\newblock Continuous verifiable delay functions.
\newblock In {\em Annual International Conference on the Theory and
  Applications of Cryptographic Techniques\/} (2020), Springer, pp.~125--154.

\bibitem{fan2018scalable}
{\sc Fan, L., and Zhou, H.-S.}
\newblock A scalable proof-of-stake blockchain in the open setting (or, how to
  mimic nakamoto's design via proof-of-stake), 2018.
\newblock Cryptology ePrint Archive, Report 2017/656, Version 20180425:201821.

\bibitem{fitzi2020ledger}
{\sc Fitzi, M., Ga{\v{z}}i, P., Kiayias, A., and Russell, A.}
\newblock Ledger combiners for fast settlement.
\newblock In {\em Theory of Cryptography Conference\/} (2020), Springer,
  pp.~322--352.

\bibitem{backbone}
{\sc Garay, J., Kiayias, A., and Leonardos, N.}
\newblock The bitcoin backbone protocol: Analysis and applications.
\newblock In {\em Annual International Conference on the Theory and
  Applications of Cryptographic Techniques\/} (2015), Springer, pp.~281--310.

\bibitem{nakamoto_bounded_delay}
{\sc Garay, J., Kiayias, A., and Leonardos, N.}
\newblock Full analysis of nakamoto consensus in bounded-delay networks.
\newblock Cryptology ePrint Archive, Report 2020/277, 2020.
\newblock \url{https://eprint.iacr.org/2020/277}.

\bibitem{backbone_var_difficulty}
{\sc Garay, J.~A., Kiayias, A., and Leonardos, N.}
\newblock The bitcoin backbone protocol with chains of variable difficulty.
\newblock Cryptology ePrint Archive, Report 2016/1048, 2016.
\newblock \url{https://eprint.iacr.org/2016/1048}.

\bibitem{gilad2017algorand}
{\sc Gilad, Y., Hemo, R., Micali, S., Vlachos, G., and Zeldovich, N.}
\newblock Algorand: Scaling byzantine agreements for cryptocurrencies.
\newblock In {\em Proceedings of the 26th Symposium on Operating Systems
  Principles\/} (2017), ACM, pp.~51--68.

\bibitem{kiayias2017ouroboros}
{\sc Kiayias, A., Russell, A., David, B., and Oliynykov, R.}
\newblock Ouroboros: A provably secure proof-of-stake blockchain protocol.
\newblock In {\em Annual International Cryptology Conference\/} (2017),
  Springer, pp.~357--388.

\bibitem{long2019nakamoto}
{\sc Long, J., and Wei, R.}
\newblock Nakamoto consensus with verifiable delay puzzle.
\newblock {\em arXiv preprint arXiv:1908.06394\/} (2019).

\bibitem{mahmoody2013publicly}
{\sc Mahmoody, M., Moran, T., and Vadhan, S.}
\newblock Publicly verifiable proofs of sequential work.
\newblock In {\em Proceedings of the 4th conference on Innovations in
  Theoretical Computer Science\/} (2013), pp.~373--388.

\bibitem{vrf}
{\sc Micali, S., Rabin, M., and Vadhan, S.}
\newblock Verifiable random functions.
\newblock In {\em 40th Annual Symposium on Foundations of Computer Science
  (Cat. No. 99CB37039)\/} (1999), IEEE, pp.~120--130.

\bibitem{bitcoin}
{\sc Nakamoto, S.}
\newblock Bitcoin: A peer-to-peer electronic cash system.

\bibitem{pss16}
{\sc Pass, R., Seeman, L., and Shelat, A.}
\newblock Analysis of the blockchain protocol in asynchronous networks.
\newblock In {\em Annual International Conference on the Theory and
  Applications of Cryptographic Techniques\/} (2017).

\bibitem{pass2017fruitchains}
{\sc Pass, R., and Shi, E.}
\newblock Fruitchains: A fair blockchain.
\newblock In {\em Proceedings of the ACM Symposium on Principles of Distributed
  Computing\/} (2017), pp.~315--324.

\bibitem{sleepy}
{\sc Pass, R., and Shi, E.}
\newblock The sleepy model of consensus.
\newblock In {\em International Conference on the Theory and Application of
  Cryptology and Information Security\/} (2017), Springer, pp.~380--409.

\bibitem{pietrzak2018simple}
{\sc Pietrzak, K.}
\newblock Simple verifiable delay functions.
\newblock In {\em 10th innovations in theoretical computer science conference
  (itcs 2019)\/} (2018), Schloss Dagstuhl-Leibniz-Zentrum fuer Informatik.

\bibitem{ren}
{\sc Ren, L.}
\newblock Analysis of nakamoto consensus.
\newblock Tech. rep., Cryptology ePrint Archive, Report 2019/943.(2019).
  https://eprint. iacr. org~…, 2019.

\bibitem{rivest1996time}
{\sc Rivest, R.~L., Shamir, A., and Wagner, D.~A.}
\newblock Time-lock puzzles and timed-release crypto.

\bibitem{shi}
{\sc Shi, Z.}
\newblock {\em Branching Random Walks}, vol.~2151 of {\em Lecture Notes in
  Mathematics}.
\newblock Springer Verlag, New York NY, 2015.

\bibitem{pos_revisited_arxiv}
{\sc Wang, X. e.~a.}
\newblock Proof-of-stake longest chain protocol revisited.
\newblock {\em arXiv preprint arXiv:1910.02218v2\/} (2018).

\bibitem{wesolowski2020efficient}
{\sc Wesolowski, B.}
\newblock Efficient verifiable delay functions.
\newblock {\em Journal of Cryptology\/} (2020), 1--35.

\bibitem{yin2018hotstuff}
{\sc Yin, M., Malkhi, D., Reiter, M.~K., Gueta, G.~G., and Abraham, I.}
\newblock Hotstuff: Bft consensus in the lens of blockchain.
\newblock {\em arXiv preprint arXiv:1803.05069\/} (2018).

\bibitem{yu2018ohie}
{\sc Yu, H., Nikolic, I., Hou, R., and Saxena, P.}
\newblock Ohie: blockchain scaling made simple.
\newblock {\em arXiv preprint arXiv:1811.12628\/} (2018).

\end{thebibliography}

\newpage
\appendix
\section*{Appendix}


\section{Suite of Possible Attacks Under Dynamic Availability}
In this section, we describe the suite of possible attacks under dynamic availability in PoS systems. This attacks are possible even under static stake. We also discuss some design recommendations for mitigating against such attacks in PoS systems.

\subsection{Content-grinding attack}
\label{sec:content_grinding_attack}
Referring to Fig~\ref{fig:proposed_VDF}, we note that the content of the block, namely the transactions, were not used in determining whether the $\textsc{Threshold(s)}$ is satisfied or not. If we instead checked whether $\textsc{Hash}(O_i,\Time, \texttt{transactionList}) < \textsc{Threshold(s)} $ instead of  $\textsc{Hash}(O_i,\Time) < \textsc{Threshold(s)} $ the protocol loses security due to the ability of adversary to choose the set of transactions in order to increase its likelihood of winning the leadership certificate.

In such a case, the adversary can get 
unlimited advantage by performing 
such content-grinding by parallel computation over different sets and orders of transactions. 
We note that in PoW the adversary does not gain any advantage by performing such content grinding, since it is equivalent to grinding on the Nonce, which is the expected behavior anyway.

\subsection{Sybil attack}
\label{sec:sybil_attack}
One natural attack in PoS for an adversary to sybil the stake contained in a single coin and distribute it across multiple coins which might increase the probability of winning a leader election from at least one of the coins. We describe next that having difficulty parameter in $\rvdfeval$ proportional to the stake of the coin defends against such sybil attack. 
Let us consider $\textsc{H}$ to be the value of the hash function on the output of VDF in an iteration, $R$ to be range of this hash function and $th$ be the difficulty parameter that is proportional to the stake of the coin. Suppose $p = P(\textsc{H} < th) = \frac{th}{R}$, which is the probability of winning the leader election in each iteration of the VDF. If we sybil the stake into , let's say, three coins with equal stakes, then, the probability of winning leader election for each individual stake in each iteration of VDF is $P(\textsc{H} < \frac{th}{3}) = \frac{th}{3R} = \frac{p}{3}$. This is due to the fact that difficulty parameter $th$ is proportional to the stake. Hence, the probability of winning leader election in each iteration of VDF by at least one coin is $1-\left(1-\frac{p}{3}\right)^{3}$. However, as the VDFs are iterating very fast, we are in the regime $0<p<<1$. Thus, by Binomial series expansion, $1-\left(1-\frac{p}{3}\right)^{3} = 1-(1-3\frac{p}{3} + O(p^2)) = p + O(p^2)$. Hence, this validates our design choice that difficulty parameter is proportional to the stake of the coin. This design choice is not unique to our design and is common in all longest chain based proof-of-stake protocols.

\subsection{Costless simulation attack}
\label{sec:costless_simulation_attack}
Both sleepy model of consensus \cite{sleepy} and Ouroboros Genesis \cite{badertscher2018ouroboros} have a weaker definition of dynamic availability: {\em all adversary nodes are always online starting from genesis and no new adversary nodes can join}, which makes them vulnerable to costless simulation attack as described next. In case of sleepy model of consensus \cite{sleepy}, as shown in Fig~\ref{fig:costless_simulation_sleepy}, suppose that in the $1^{st}$ year of the existence of the PoS system, only $5\%$ of the total stake, all honest, is online and actively participating in evolution of the blockchain. Consider that at the beginning of the $2^{nd}$ year, all $100\%$ of the stake is online with $20\%$ of the stake being controlled by the adversary. 
\begin{figure}
    \centering
    \includegraphics[width=0.75\textwidth]{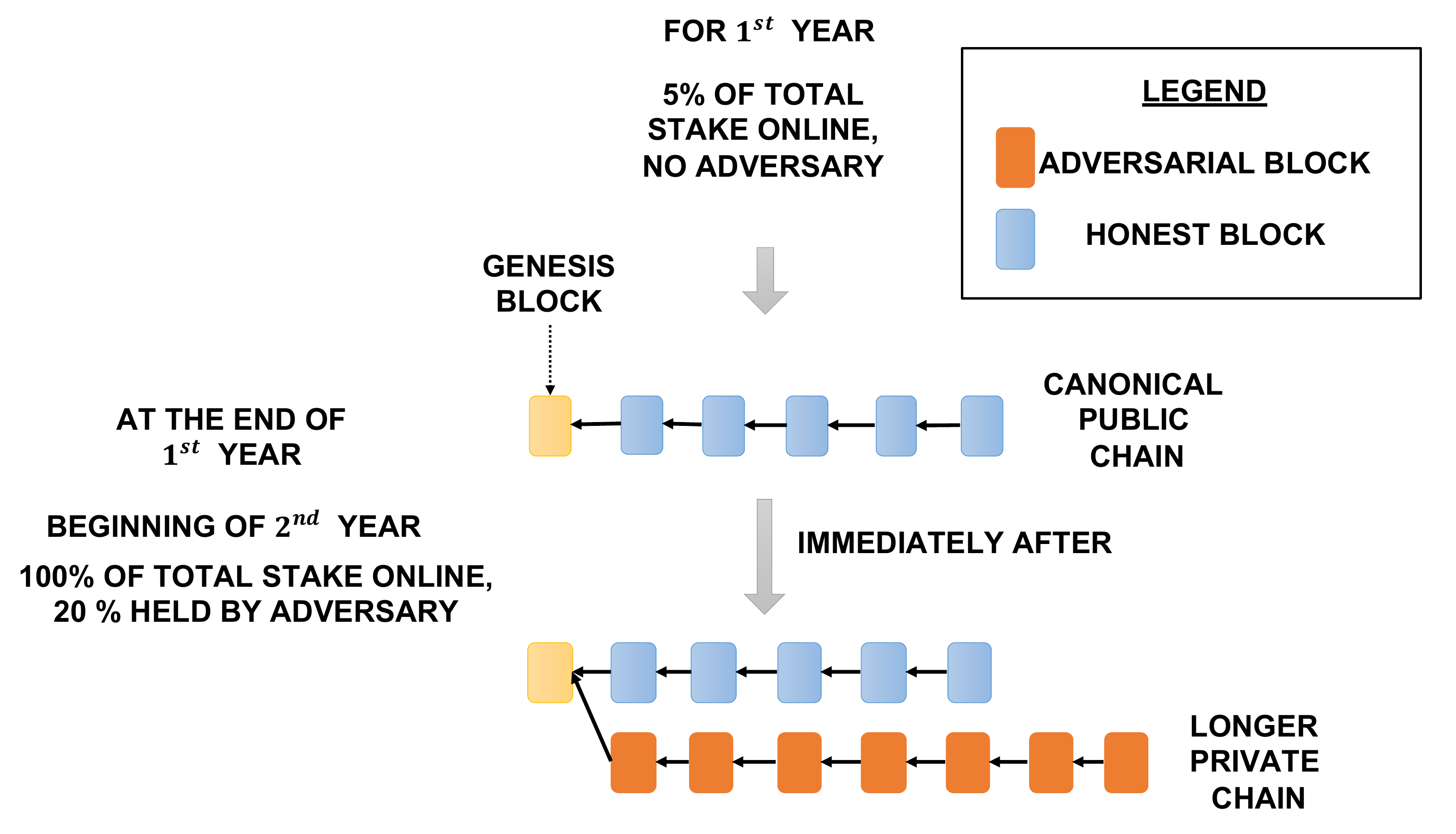}
    \caption{Costless simulation attack for sleep model of consensus \cite{sleepy}.}
    \label{fig:costless_simulation_sleepy}
\end{figure}
The adversary can costlessly simulate (with requirement of little computational time) the eligibility condition in sleepy protocol across large range of values of time $t$, thus, constructing a longer private chain than the canonical public chain. In sleepy, under the  fork-choice rule of choosing the longest chain, the private chain will be selected as the canonical chain once it is revealed by the adversary. 
\begin{figure}
    \centering
    \includegraphics[width=0.75\textwidth]{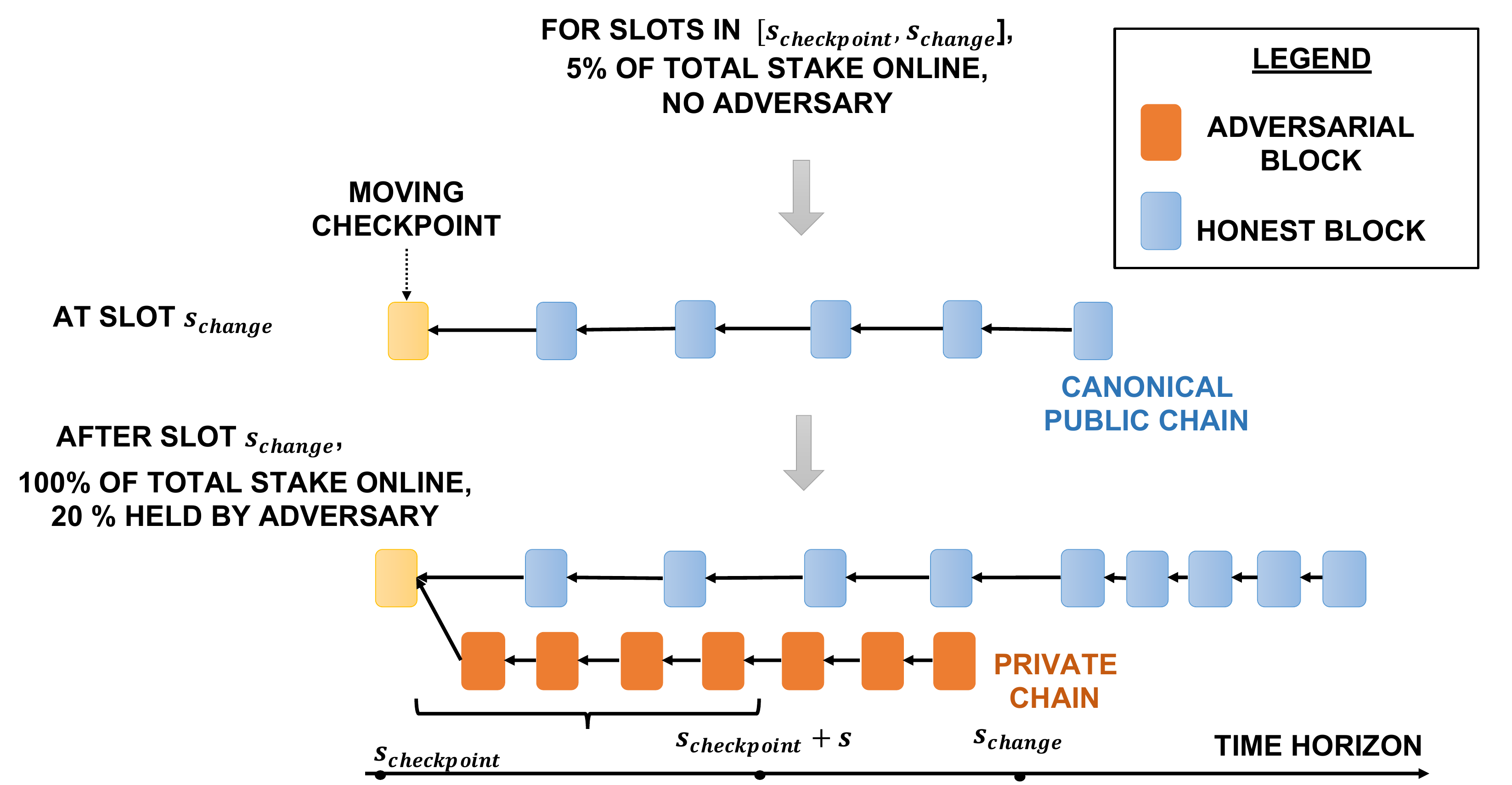}
    \caption{Costless simulation attack for Ouroboros Genesis \cite{badertscher2018ouroboros}.}
    \label{fig:costless_simulation_ouroboros}
\end{figure} 
In case of Ouroboros Genesis \cite{badertscher2018ouroboros}, as shown in Fig~\ref{fig:costless_simulation_ouroboros}, the adversary can utilize the $20\%$ stake under its control after the slot $s_{change}$ to costlessly construct a private chain involving leader elections for the slots starting from the checkpoint slot $s_{checkpoint}$. Observe that in the slots $[s_{checkpoint},s_{change}]$ of the operation of the PoS system, the canonical public chain evolved due to the participation of only $5\%$ stake. Clearly, with high probability, for any $s$ such that $s_{checkpoint} + s < s_{change}$, the private chain has more blocks in the range $[s_{checkpoint},s_{checkpoint} + s]$  as compared to canonical public chain. Under the fork-choice rule of Ouroboros Genesis as described in Fig. 10 of \cite{badertscher2018ouroboros}, the private adversarial chain will be selected as the canonical chain when it is revealed.

If the fork-choice rule is to choose the longest chain, the design recommendation for defending against costless simulation attack is to make it {\em expensive} for the adversary to propose blocks for the past slots and create longer chain. For instance, in \protocol, the adversary would have to initiate its $\rvdf$ from the first block of the epoch where $\rs$ is updated. Due to the sequential nature of the computation of $\rvdf$, with high probability, the adversary won't be able to create a private chain longer than the canonical chain.

\subsection{Bribery attack due to predictability}
\label{sec:bribery_attack}
A key property of PoW protocols is their ability to be unpredictable: no node (including itself) can know when a given node will be allowed to propose a block ahead of the proposal slot. In the existing PoS protocols, there are two notions of predictability depending on how the leader election winner is decided - {\em globally predictable} and {\em locally predictable}. 
\begin{figure}
    \centering
    \includegraphics[width=0.75\textwidth]{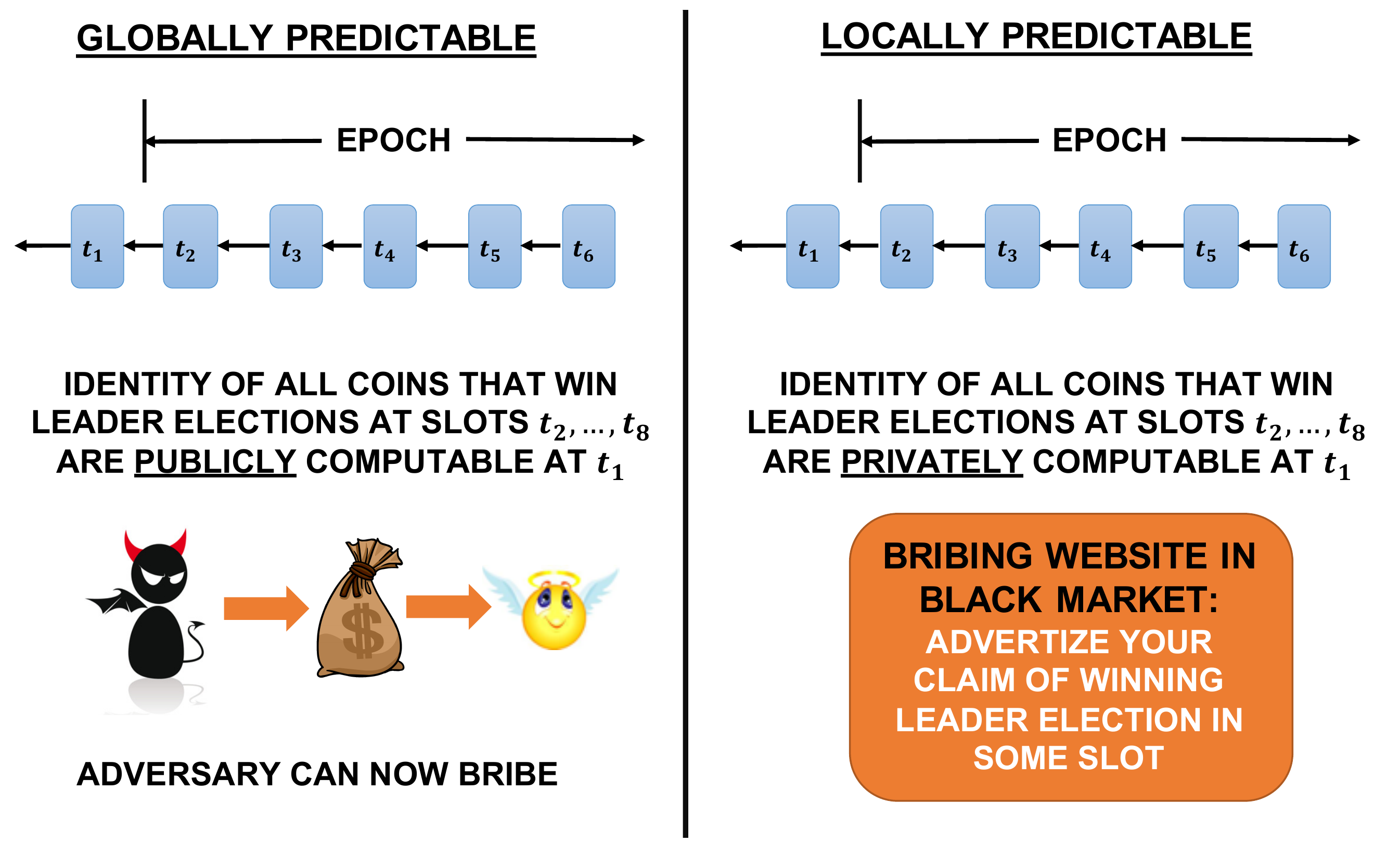}
    \caption{Variations of bribery attack stemming from predictability.}
    \label{fig:bribery_attack}
\end{figure} 
Referring to Fig~\ref{fig:bribery_attack}, using hash function for deciding winner of leader election, as in \cite{sleepy}, renders the identity of winners of leader elections in future slots publicly computable. An adversary can now bribe a coin that is going to propose a block in a future slot to include or exclude a specific transaction of adversary's choice or influence the position on where to append the block. On the other hand, using verifiable random function (VRF) for deciding winner of leader election, as in Ouroboros Praos \cite{david2018ouroboros}, Ouroboros Genesis \cite{badertscher2018ouroboros} and Snow White \cite{bentov2016snow}, mutes the aforementioned public computability. However, a node owning a coin can still locally compute the future slots in which that coin can win the leader election. Now, the node can advertise its future electability in the black market.  

The central idea on how to avoid such predictability is to ensure that a node owning a coin shouldn't learn about winning a leader election with that coin in slot $s$ before the slot $s$. In \protocol, owing to randomness of $\randiter$ in $\rvdf$, the node learns about winning a leader election for that coin in slot $s$ only after completion of sequential execution of the $\rvdf$ at slot $s$.

\subsection{Private attack by enumerating blocks within an epoch}
\label{sec:block_enumeration_attack}
In $\protocol$, at the beginning of each epoch, the $\rs$ is updated. 
\begin{figure}
    \centering
    \includegraphics[width=0.75\textwidth]{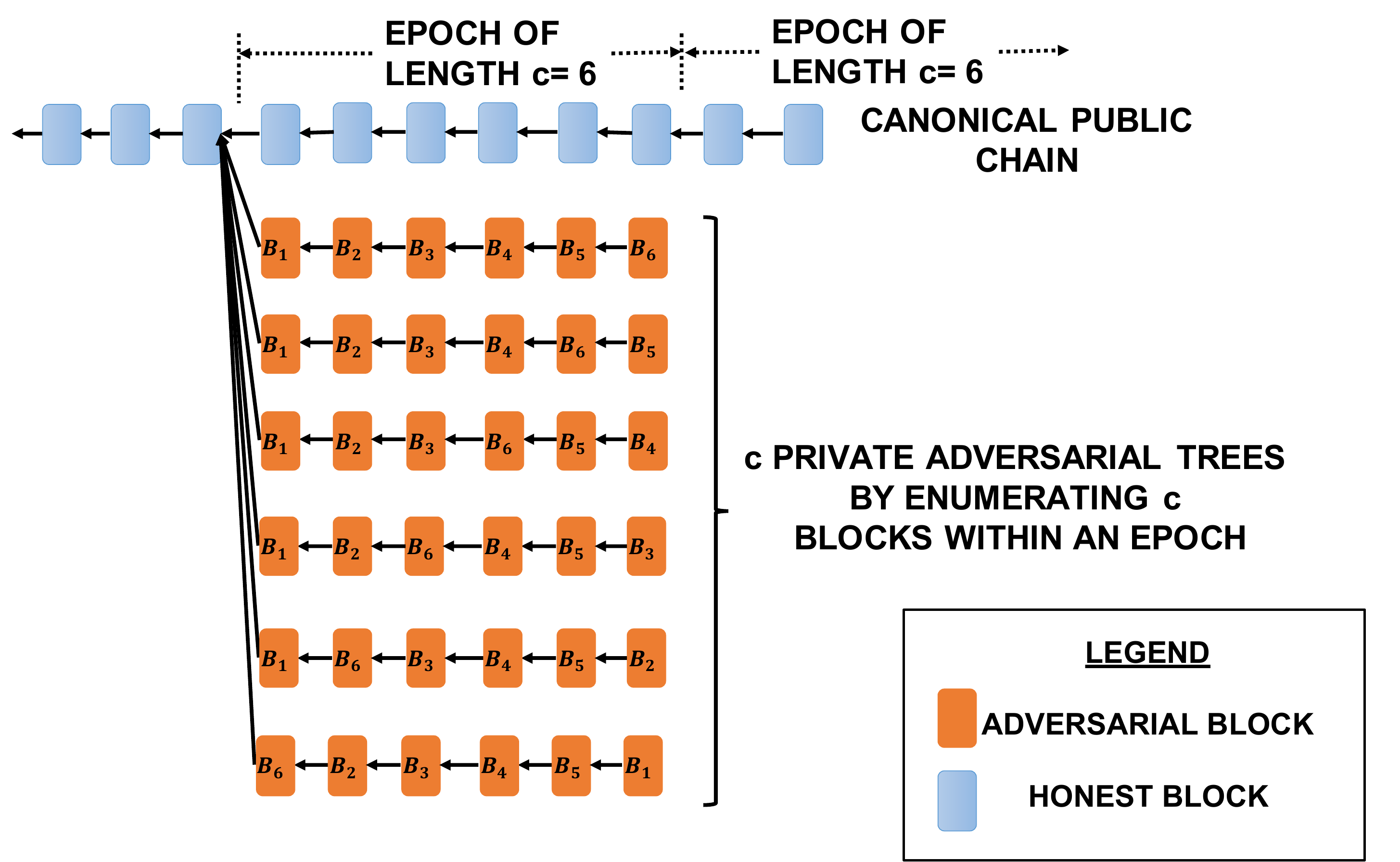}
    \caption{Enumerating blocks when time-ordering is not required.}
    \label{fig:block_enumeration_attack}
\end{figure}
However, if the appropriate guardrail in the form of time-ordering (line~\ref{algo:time-ordering} in  Algorithm~\ref{alg:PoSAT}) is not put into place, then, this $\rs$ update can provide statistical advantage to the adversary in creating longer private chain. To be specific, suppose that $\protocol$ doesn't require the $\Time$ in the blocks of a chain to be ordered in the ascending order. Then, as shown in Fig~\ref{fig:block_enumeration_attack}, the adversary can enumerate over the $c$ blocks in the private adversarial tree to have $c$ different $\rs$ updates for the next epoch. This gives $c$ distinct opportunities to the adversary to evolve the private adversarial tree which gives the aforementioned statistical advantage of order $c$ in terms of inter-arrival time of the adversarial blocks. With the guardrail of time-ordering in place, as in $\protocol$, the aforementioned enumeration is not possible as the $\Time$ contained in the blocks of a chain are required to be in ascending order.

\subsection{Long-range attack by leveraging randomness update}
\label{sec:long_range_attack}
Updating $\rs$ for a new epoch based on solely the last block of the previous epoch, as done in $\protocol$, gives rise to an unique situation in which an adversary can mount a long-range attack to create a longer private adversarial chain. Referring to Fig~\ref{fig:long_range_attack}, an adversary, with sufficiently large probability, can win at least one leader election in each epoch and publicly reveal the block associated with that leader election after appropriate delay so that the block ends up as the last block of that epoch. 
\begin{figure}
    \centering
    \includegraphics[width=0.8\textwidth]{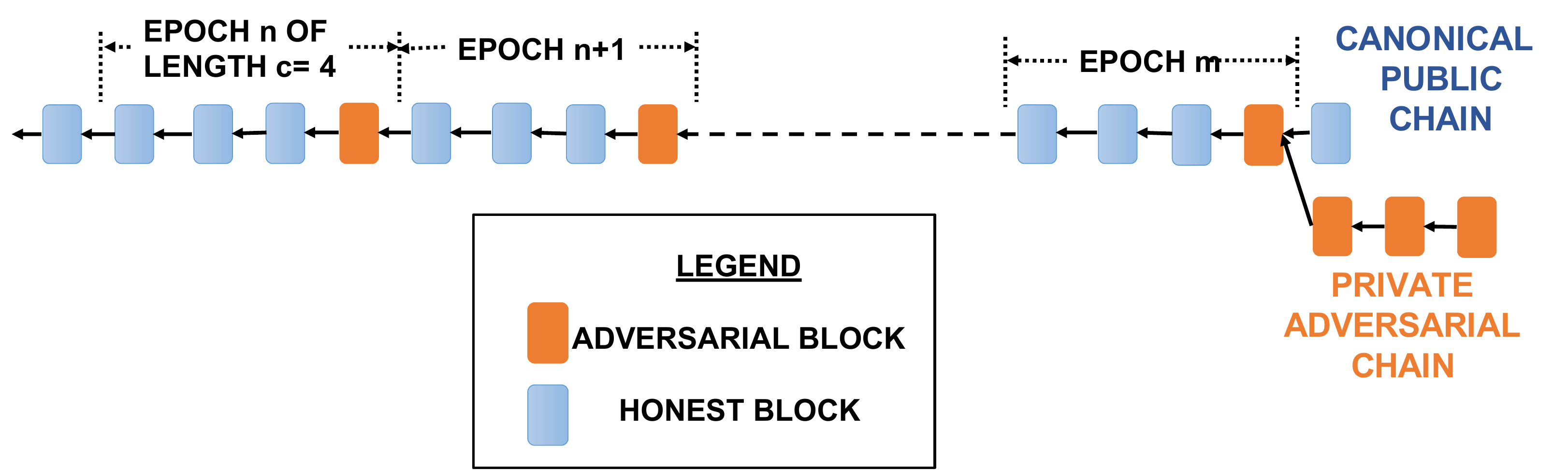}
    \caption{Illustration of the long-range attack. Consider that $m > n$.}
    \label{fig:long_range_attack}
\end{figure}
Observe that this strategy will give power to the adversary to dictate the $\rs$ for each epoch. Moreover, the adversary, on proposing on winning a leader election for an epoch, can just move on to contesting a leader election for the next epoch. Thus, the adversary can privately behave as if $c=1$ whereas the actual $c$ might be greater than $1$. With such a strategy, the adversary can win at least one leader election for many future epochs and publicly reveal the blocks associated with those leader elections in a time-appropriate manner. The adversary can continue this strategy until an appropriate epoch when it is able to win multiple leader elections and wants to do double spending. There are two design recommendations on how to protect against this long-range attack:
\begin{itemize}
    \item requiring time-ordering of the blocks in a chain, as done in $\protocol$, would ensure that even after the adversary behaves as if $c=1$ and wins leader elections for future epochs, the blocks associated with those leader elections would fail the time-ordering (line~\ref{algo:time-ordering} in  Algorithm~\ref{alg:PoSAT}). This completely removes the aforementioned long-range attack.
    \item requiring that the $\rs$ for a new epoch is dependent on all the blocks of the previous epoch. This strategy diminishes the amount of influence that an adversary can have on the $\rs$ update.
\end{itemize}

\section{Supplementary for Section~\ref{sec:primitives}}
\label{sec:vdf}
We give a brief description of VDFs, starting with its definition. 
\begin{definition}[from \cite{boneh2018verifiable}]
A VDF $V = (Setup, Eval, Verify)$ is a triple of algorithms as follows:
\begin{itemize}
    \item $Setup(\lambda,\tau) \rightarrow \mathbf{pp} = (ek, vk)$ is a randomized algorithm that takes a security parameter $\lambda$ and
    a desired puzzle difficulty $\tau$ and produces public parameters $\mathbf{pp}$ that consists of an evaluation
    key $ek$ and a verification key $vk$. We require Setup to be polynomial-time in $\lambda$. By convention, the public parameters specify an input space $\mathcal{X}$ and an output space $\mathcal{Y}$. We assume
    that $\mathcal{X}$ is efficiently sampleable. Setup might need secret randomness, leading to a scheme
    requiring a trusted setup. For meaningful security, the puzzle difficulty $\tau$ is restricted to be
    sub-exponentially sized in $\lambda$.
    \item $Eval(ek, input, \tau) \rightarrow (O, proof)$ takes an  $input \in \mathcal{X}$ and produces an output $O \in \mathcal{Y}$ and a (possibly empty) $proof$. Eval may use random bits to generate the  $proof$ but not to compute $O$. For all $\mathbf{pp}$ generated by $Setup(\lambda,\tau)$ and all $input \in  \mathcal{X}$, algorithm $Eval(ek, input, \tau)$ must run in parallel time $\tau$ with $poly(log(\tau), \lambda)$ processors.
    \item $Verify(vk, input, O, proof) \rightarrow {Yes, No}$ is a deterministic algorithm takes an input, output and proof and outputs $Yes$ or $No$. Algorithm Verify must run in total time polynomial in $log{\tau}$ and $\lambda$. Notice that $Verify$ is much faster than $Eval$.
\end{itemize}
\end{definition}

The definition for correctness and soundness for $\rvdf$ is defined as follows: 
\begin{definition}[Correctness]
A $\rvdf$ $V$ is correct if for all $\lambda, \tau$, parameters $(ek, vk) \stackrel{\$}{\leftarrow} \textsc{Setup}(\lambda)$,
and all $input \in X$ , if $(O,proof) \stackrel{\$}{\leftarrow} \textsc{Eval}(ek, input,\tau)$ then $\textsc{Verify}(vk, input, O, proof) = Yes$.
\end{definition}

\begin{definition}[Soundness]
A $\rvdf$ is sound if for all algorithms $\mathcal{A}$ that run in time $O (poly(t, \lambda))$
\begin{align*}
    Pr\left[\substack{\textsc{Verify}(vk,input,O,proof) = Yes \\
    O \neq \textsc{Eval}(ek,input,\tau)} \bigg\vert \substack{pp = (ek,vk) \stackrel{\$}{\leftarrow} \textsc{Setup}(\lambda) \\ (input,O,proof) \stackrel{\$}{\leftarrow} \mathcal{A}(\lambda,pp,\tau)}\right] \leq negl(\lambda)
\end{align*}
\end{definition}
\section{Proof of Lemma~\ref{lem:step1-lemma}}
First, we prove the following lemma.
\label{sec:step1-proof}
\begin{lemma}
\label{lem:step1-lemma-1}
Define
\begin{align}
    \label{eq:step1-error-event}
    E_1 = \{\text{There is no epoch-beginning within the time interval $[t-\sigma(c),t]$}\}
\end{align}
and, let $\kappa_0$ be the solution for the equation $\ln{\left(\frac{\lambda_{\max}}{\lambda_{\min}}(1+\kappa)\right)} = \kappa$. Then, for $\sigma(c) = (c-1)\left(\Delta + \frac{1+\kappa}{\lambda_{\min}}\right)$ and $\kappa >> \kappa_0$, we have 
$$P(E_1) \leq e^{-O(\kappa)}.$$
\end{lemma}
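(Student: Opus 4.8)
The plan is to reduce the event $E_1$ to a statement about the speed of growth of the honest longest chain, and then to control that by a Chernoff estimate for a Gamma tail. Since $\sigma(1)=0$ makes $E_1$ the null event, assume $c\ge 2$ throughout.

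First I would carry out the reduction. An epoch beginning occurs each time a block whose level is a multiple of $c$ becomes the tip of an honest node's longest chain, and among any $c$ consecutive levels exactly one is a multiple of $c$; hence it suffices to show that, outside an event of probability $e^{-\Omega(\kappa)}$, the honest longest chain advances through at least $c$ consecutive new levels during $[t-\sigma(c),t]$. I would formalize chain growth through the fictitious honest tree $\T_h$ of Definition~\ref{def:fictitious}, whose depth $D_h(\cdot)$ lower-bounds the depth of every honest node's longest chain up to a $\Delta$ lag (a standard fact, cf.\ \cite{dembo2020everything,ren}); so $E_1$ is contained in the event that $D_h$ takes longer than $\sigma(c)-\Delta$ to advance by $c$ levels from its value at time $t-\sigma(c)$. (The one delicate point is that an adversary revealing a pre-built private chain could make the honest tip ``jump over'' multiples of $c$; this cannot persist over a window as long as $\sigma(c)$, since it would require the adversary to hold a chain $c$ or more levels longer than $\T_h$, which the growth-rate gap of the security condition forbids.)

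Next I would estimate the growth time of $\T_h$. The time for $D_h$ to advance one level is stochastically dominated by $\Delta+W$, where $W\sim\Exp(\lambda_{\min})$ — the $\Delta$ for the previous level's propagation window to close, and $W$ because the instantaneous honest proposal rate never falls below $\lambda_{\min}$ — so the time to advance $c$ levels is dominated by $c\Delta+G$, with $G$ the sum of $c$ i.i.d.\ $\Exp(\lambda_{\min})$ variables. Because $\Delta$ is a fixed constant and $c\ge 2$, for $\kappa$ large one has $\sigma(c)-(c+1)\Delta=(c-1)(1+\kappa)/\lambda_{\min}-2\Delta\ge c\kappa/(4\lambda_{\min})$, and the standard Chernoff bound $P\big(\sum_{i\le c}\Exp(\lambda)>a\big)\le(e\lambda a/c)^{c}e^{-\lambda a}$ (valid for $\lambda a\ge c$) yields
\begin{equation}
P(E_1)\;\le\;P\big(c\Delta+G>\sigma(c)-\Delta\big)\;\le\;P\big(G>\tfrac{c\kappa}{4\lambda_{\min}}\big)\;\le\;\big(\tfrac{e\kappa}{4}\big)^{c}e^{-c\kappa/4}\;\le\;e^{-\kappa/8}
\end{equation}
for $\kappa$ sufficiently large.

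Finally, the hypothesis $\kappa\gg\kappa_0$, with $\kappa_0$ solving $\ln(\tfrac{\lambda_{\max}}{\lambda_{\min}}(1+\kappa))=\kappa$, serves to absorb a union-bound prefactor: making the reduction rigorous requires a union over the $O(\lambda_{\max}\sigma(c))=O(\tfrac{\lambda_{\max}}{\lambda_{\min}}c\kappa)$ candidate epoch levels (equivalently honest proposals) that can fall in the window, and $\tfrac{\lambda_{\max}}{\lambda_{\min}}c\kappa\cdot e^{-\kappa/8}$ remains $e^{-\Omega(\kappa)}$ exactly when $\ln(\tfrac{\lambda_{\max}}{\lambda_{\min}}(1+\kappa))\ll\kappa$. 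I expect the real obstacle to lie in the first paragraph — rigorously showing that ``no epoch beginning in the window'' forces slow honest-chain growth against an adversary who may reveal pre-built chains — whereas the Gamma tail estimate is routine.
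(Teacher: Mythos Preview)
Your approach is essentially the paper's: reduce $E_1$ to the event that the fictitious honest tree $\T_h$ needs more than $\sigma(c)$ time to gain enough levels, then bound that by a Chernoff/Gamma tail. The paper writes, for some $d_0$, $E_1=\{\sum_{d=d_0}^{d_0+c-2}X_d>\sigma(c)\}$ with $X_d=\Delta+Y_d$, bounds the conditional density by $f_{Y_d}(y)\le\lambda_{\max}e^{-\lambda_{\min}y}$ so that $\mathbb{E}e^{vY_d}\le\lambda_{\max}/(\lambda_{\min}-v)$, optimizes at $v=\lambda_{\min}\kappa/(1+\kappa)$, and obtains
\[
P(E_1)\le \exp\Big((c-1)\big[-\kappa+\ln\big(\tfrac{\lambda_{\max}}{\lambda_{\min}}(1+\kappa)\big)\big]\Big).
\]
Two corrections to your write-up are worth making. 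First, the paper needs only $c-1$ (not $c$) new levels, since any $c-1$ consecutive levels already contain a multiple of $c$; this matches the factor $(c-1)$ in $\sigma(c)$ exactly and removes the extra $\Delta$ bookkeeping. Second, your last paragraph misdiagnoses $\kappa_0$: it does \emph{not} come from a union bound over candidate epoch levels (there is none in the paper). It comes from the crude density bound above, which injects the factor $\lambda_{\max}/\lambda_{\min}$ into the exponent and forces $\kappa>\kappa_0$ to make the bracket negative. Your cleaner stochastic-domination step $Y_d\le_{\mathrm{st}}\Exp(\lambda_{\min})$ gives $\mathbb{E}e^{vY_d}\le\lambda_{\min}/(\lambda_{\min}-v)$ and hence $P(E_1)\le e^{(c-1)[-\kappa+\ln(1+\kappa)]}$, which is $e^{-\Omega(\kappa)}$ without any $\kappa_0$ hypothesis; so your route is in fact slightly sharper.

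Finally, the worry you flag as the ``real obstacle'' is a non-issue here. An adversary revealing a pre-built chain can only \emph{accelerate} the appearance of a block at a level divisible by $c$, so $E_1$ (no epoch beginning) is contained in the slow-$D_h$-growth event regardless of adversarial behavior; the paper simply asserts this containment (as an equality) without further comment.
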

\begin{proof}
Define $X_d, d > 0,$ as the time it takes for $D_h$ in the original dynamic available system to reach depth $d$ after reaching depth $d-1$. Then, for some $d_0 > 0$, we have 
$
E_1  = \left\{\sum_{d=d_0}^{d_0 + c-2} X_d  > \sigma(c)\right\}.
$
Observe that, due to our blocktree partitioning, $X_d = \Delta + Y_d$, where $Y_d$ is a non-homogeneous exponential random variable. Therefore, by Chernoff bound, for any $v > 0$
\begin{align*}
    &P\left(\sum_{d=d_0}^{d_0 + c-2} X_d  > \sigma(c)\right) \leq \mathbb{E}\left(e^{v\sum_{d=d_0}^{d_0 + c-2} X_d - v\sigma(c)}\right) \\
    &\hspace{2cm} = \mathbb{E}\left(e^{v\sum_{d=d_0}^{d_0 + c-2} Y_d }\right) e^{v(c-1)\Delta - v\sigma(c)}\\
    &\hspace{2cm} \stackrel{(a)}{=} e^{v(c-1)\Delta - v\sigma(c)}\mathbb{E}_{Y_{d_0} \mid X_{d_0-1}}e^{vY_{d_0}}\cdots \mathbb{E}_{Y_{d_0+c-2} \mid X_{d_0-1}\cdots X_{d_0+c-3}}e^{vY_{d_0+c-2}}\\
    &\hspace{2cm} \stackrel{(b)}{\leq}  e^{v(c-1)\Delta - v\sigma(c)}\left(\frac{\lambda_{\max}}{\lambda_{\min} - v}\right)^{c-1} \\                                        
    &\hspace{2cm} \stackrel{(c)}{=} e^{-v\frac{(c-1)(1+\kappa)}{\lambda_{\min}}}\left(\frac{\lambda_{\max}}{\lambda_{\min} - v}\right)^{c-1} \\
\end{align*}
where $(a)$ is due to law of total expectation, $(b)$ is due to the fact that, if $f_{Y_{d_0+i} \mid X_{d_0-1}\cdots X_{d_0+i-1}}(y)$ is the pdf of $Y_{d_0+i}$ given $X_{d_0-1}\cdots X_{d_0+i-1}$, then
$\lambda_{\min} \leq \lambda_h(t) \leq \lambda_{\max}$ implies $f_{Y_{d_0+i} \mid X_{d_0-1}\cdots X_{d_0+i-1}}(y) \leq \lambda_{\max} e^{-\lambda_{\min}y}$, $(c)$ is by putting $\sigma(c) = (c-1)\left(\Delta + \frac{1+\kappa}{\lambda_{\min}}\right)$. Optimizing over $v$ implies that for $v = \lambda_{\min}\left(\frac{\kappa}{1+\kappa}\right)$, we have
\begin{align*}
    &P\left(\sum_{d=d_0}^{d_0 + c-2} X_d  > \sigma(c)\right) \leq e^{(c-1)\left[-\kappa + \ln{\left(\frac{\lambda_{\max}}{\lambda_{\min}}(1+\kappa)\right)}\right]}
\end{align*}
Note that for $\kappa >> \kappa_0$, we have $P\left(\sum_{d=d_0}^{d_0 + c-2} X_d  > \sigma(c)\right) \leq e^{-O(\kappa)}$.
\end{proof}
Recall that, under the design of simulated system $dyn2$, if an honest coin has been online in the original dynamic available system for at least time $\sigma(c)$, then the coin can also contribute to the growth of the canonical chain in $dyn2$. From Lemma~\ref{lem:step1-lemma-1}, we know that for $\sigma(c) = (c-1)\left(\Delta + \frac{1+\kappa}{\lambda_{\min}}\right)$ and $\kappa >> \kappa_0$, this honest coin has encountered at least one epoch-beginning in the original dynamic available system with probability $1-e^{-O(\kappa)}$. That implies, with high probability  $1-e^{-O(\kappa)}$, at time $t$, if an honest coin is contributing to the growth of the canonical chain in $dyn2$, then it is also contributing to the growth of the canonical chain in the original dynamical system. However, observe that at the same time $t$ in the original dynamic available system, there might be other honest coins which became online after $t-\sigma(c)$ and have encountered at least one epoch-beginning. At time $t$, these coins would contribute to the growth of the canonical chain in the original dynamic available system but won't be contributing to the growth of the canonical chain in $dyn2$. Thus, $\lambda^c_h(t) \leq \lambda_h(t)$ with probability $1-e^{-O(\kappa)}$. Consequently, $P(B_{s,s+t}) \leq P(B^{dyn2}_{s,s+t}) +  e^{-O(\kappa)}$.
\section{Proof of Lemma~\ref{lem:step2-lemma}}
\label{sec:step2-proof}
Observe that from (\ref{eq:time-relations}), we have
\begin{align}
    \label{eq:const-hon-mining} \int_{t_1}^{t_2}\lambda^c_h(t) dt  = \lambda_h\left[\alpha(t_2) - \alpha(t_1)\right]
\end{align}
Thus, $\alpha(t)$ is an increasing function in $t$. Then, we have the following lemma.
\begin{lemma}
\label{lemma:Event-Ordering}
The ordering of events in the dynamic available system $dyn2$ is same as in the static system $ss0$.
\end{lemma}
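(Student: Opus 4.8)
The plan is to make explicit the coupling between $dyn2$ and $ss0$ that is built into the construction of $ss0$, and then to read off the ordering statement from the monotonicity of the time change $\alpha$; this is exactly what is needed afterwards to identify $P(B^{dyn2}_{s,s+t})$ with $P(B^{ss0}_{\alpha(s),\alpha(s+t)})$ (Lemma~\ref{lem:step2-lemma}), since whether a Nakamoto block occurs in a given window is a function only of the ordered sequence of block arrivals, message deliveries and adversarial moves.

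First I would record the relevant properties of $\alpha$. By (\ref{eq:time-relations}), $\alpha(t)=\int_0^t \lambda^c_h(u)/\lambda_h\,du$ is continuous and, since the integrand is strictly positive, strictly increasing; hence $\alpha$ is a continuous strictly increasing bijection of $[0,\infty)$ onto its range, with continuous strictly increasing inverse $\alpha^{-1}$. Next I would spell out the coupling: $ss0$ is obtained from $dyn2$ by applying the deterministic time change $t\mapsto\alpha(t)$ to everything. Concretely, (i) an honest (resp.\ adversarial) block is appended at level $\ell$ of a given sub-tree at time $t$ in $dyn2$ iff the matched honest (resp.\ adversarial) block is appended at level $\ell$ of the same sub-tree at time $\alpha(t)$ in $ss0$; and (ii) every message transmission/delivery and every adversarial action of type \textbf{M3} or \textbf{M4} performed at time $t$ in $dyn2$ is performed at time $\alpha(t)$ in $ss0$. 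That this coupling is well defined, and that the resulting $ss0$ has exactly the law described for it, is the time-rescaling theorem for Poisson processes: by construction $\lambda^c_h(u)\,du=\lambda_h\,d\alpha$ and, since $\lambda_a(u)=(1-\eta)\lambda^c_h(u)$, also $\lambda_a(u)\,du=\lambda_a\,d\alpha$, so the non-homogeneous honest and adversarial arrival processes of \textbf{M1}--\textbf{M2} map to independent homogeneous Poisson processes of constant rates $\lambda_h$ and $\lambda_a$ (independence being preserved under a deterministic time change), and the pushed-forward adversary strategy is admissible in $ss0$ because (\ref{eq:delta-inequality}) shows the $\Delta$-delay constraint of $dyn2$ maps to a finite (bounded, time-varying) delay $\Delta^{ss0}(\cdot)$.

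The ordering claim is then immediate from monotonicity: if events $e_1,e_2$ occur in $dyn2$ at times $t_1\le t_2$, the matched events in $ss0$ occur at $\alpha(t_1)\le\alpha(t_2)$, so $e_1$ does not occur after $e_2$ in $ss0$; the converse follows by applying the same argument to $\alpha^{-1}$. Hence the chronologically ordered sequences of labelled events in the two systems coincide. I do not anticipate a genuine obstacle here — the content is essentially ``a deterministic monotone time change preserves chronological order.'' The only points needing care are bookkeeping ones: checking that the coupling produces the correct law for $ss0$ (time-rescaling together with preservation of independence), that the non-Poisson events — message deliveries and the adversary's own moves — are transported along the same $\alpha$ so that (\ref{eq:delta-inequality}) keeps the $ss0$ adversary within its allowed power, and — pedantically — that $\lambda^c_h$ is positive so that $\alpha$ is a genuine bijection rather than merely non-decreasing; ties (simultaneous events) occur only on a null set and are resolved by a fixed convention applied identically in both systems.
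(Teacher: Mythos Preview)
Your proposal is correct and takes essentially the same approach as the paper: both arguments reduce to the observation that $\alpha(t)=\int_0^t \lambda^c_h(u)/\lambda_h\,du$ is strictly increasing (since $\lambda^c_h>0$), so chronological order is preserved under the time change. Your version is considerably more detailed---spelling out the coupling, the time-rescaling of the Poisson processes, and the transport of non-Poisson events---whereas the paper's own proof is a two-line contradiction argument using only the monotonicity of $\alpha$.
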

\begin{proof}
Suppose there are two events $E_1$ and $E_2$ that happen in $dyn2$ such that $t_{E_1} < t_{E_2}$, that is, $E_1$ happen before $E_2$ in $dyn2$. By contradiction, assume that $E_2$ happen before $E_1$ in the frame of reference of the static system $ss0$. By equation~\ref{eq:time-relations}, that implies, $\alpha(t_{E_2}) < \alpha(t_{E_1})$. However, this contradicts the fact that $\alpha(t)$ is an increasing function in $t$.
\end{proof}
Suppose that $B_{s,s+t}$ happens in $dyn2$. This implies that, in $dyn2$, for every honest block $b_j$ proposed at $\tau^h_j \in [s,s+t]$, there exists some minimum time $t_0 > \tau^h_j + \Delta$ and some honest block $b_i$ proposed at $\tau_i^h$ such that 
\begin{align*}
    D_i(t_0) \ge D_h(t_0-\Delta) - D_h(\tau_i^h + \Delta).
\end{align*}
Due to Lemma~\ref{lemma:Event-Ordering}, events in the evolution of the blockchain in $dyn2$ during the interval $[\tau^h_i,t_0]$ happens in the same order in the static system $ss0$ during the time interval $[\alpha(\tau^h_i),\alpha(t_0)]$. That implies the depth of the fictitious honest tree at time $t$ in the local clock of $dyn2$ is same as the depth of the same fictitious honest tree at time $\alpha(t)$ in the local clock of $ss0$. This equivalence also carries over for the adversarial trees. Then, analysing the race between the fictitious honest tree $\T_h(t)$ and the adversarial tree $\T_i(t)$ with respect to the local clock of $ss0$, we can write 
\begin{align*}
    D_i(\alpha(t_0)) \ge D_h(\alpha(t_0-\Delta)) - D_h(\alpha(\tau_i^h + \Delta))
\end{align*}
That implies $b_j$ is not a Nakamoto block in the static system $ss0$ too. Since, $b_j$ is any arbitrary honest block with $\tau^h_j \in [s,s+t]$, therefore this is true for all honest blocks $j'$ with $\tau^h_{j'} \in \{s,s+t\}$. Hence, $B^{dyn2}_{s,s+t} = B^{ss0}_{\alpha(s),\alpha(s+t)}$ which implies $P(B^{dyn2}_{s,s+t}) = P(B^{ss0}_{\alpha(s),\alpha(s+t)})$. This concludes our lemma.

\section{Proof of Lemma~\ref{lem:step3-lemma}}
\label{sec:step3-proof}
For simulating the static system $ss1$, keep the sample path of the progress of the fictitious honest tree in both static systems $ss0$ and $ss1$ same. For some $t>0$ in the local clock of $dyn2$, let $\mathcal{T}_i(t)$ represent the adversarial tree in $ss0$ with $b_i$ as its root. Suppose $B^{ss0}_{\alpha(s),\alpha(s+t)}$ happens in $ss0$ for some $s,t > 0$ defined in the local clock of $dyn2$. That implies, for every honest block $b_j$ proposed at $\alpha(\tau^h_j) \in [\alpha(s),\alpha(s+t)]$, there exists some minimum time $\alpha(t_0) > \alpha(\tau_j^h + \Delta)$ and some honest block $b_i$ proposed at $\alpha(\tau^h_i)$ such that 
$$
D_i(\alpha(t_0)) \geq D_h(\alpha(t_0 - \Delta)) - D_h(\alpha(\tau^h_i + \Delta)).
$$
Now, for any arbitrary $b_j$, there are two cases:
\begin{enumerate}
    \item If the tip of the fictitious honest tree at time $\alpha(t_0-\Delta)$ is in the same epoch as the honest block $b_i$, then, the adversary can duplicate the first block in the adversarial tree $\mathcal{T}_i(t)$ of the static system $ss0$ and attach it to the block $b_i$ of the simulated system $ss1$. However, in $ss1$, the adversary immediately gets a gift of $c-1$ blocks. 
    \item If the fictitious honest tree at time $\alpha(t_0-\Delta)$ is in a different epoch as the honest block $b_i$, then, the adversary can duplicate the $\mathcal{T}_i(t)$ and prune it to contain all the blocks starting from the epoch that comes immediately after the epoch containing $b_i$ in $\mathcal{T}_i(t)$. Then, in $ss1$, the adversary duplicates the first block in the adversarial tree $\mathcal{T}_i(t)$ of the static system $ss0$ and attaches it to the block $b_i$ of the simulated system $ss1$ that immediately gifts a chain of $c-1$ blocks. The adversary  then appends over that chain the pruned $\mathcal{T}_i(t)$.
\end{enumerate}
Both cases clearly imply that at time $\alpha(t_0)$, there is an adversarial tree on $b_i$ in $ss1$ whose depth is greater than $\mathcal{T}_i(t_0)$ in $ss0$. Thus, $b_j$ is not a Nakamoto block in $ss1$. Hence, 
$P(B^{ss0}_{\alpha(s),\alpha(s+t)}) \leq P(B^{ss1}_{\alpha(s),\alpha(s+t)})$.

\section{Growth rate of Adversarial Tree $\hat{\T}_i(t)$}
\label{sec:growth_rate}

We first give a description of the (dual of the) adversarial tree consisting of super-blocks in terms of a Branching Random Walk (BRW). 

Observe that due to the assumption on adversary in $ss2$, each adversarial tree $\hat{\mathcal{T}}_i(t')$ (with $i^{th}$ honest block as its root), when analysed in the local clock of $ss2$, grows statistically in the same way, without any dependence on the level of the root. Without loss of generality, let us focus on the adversary tree $\hat{\mathcal{T}}_0(t')$, rooted at genesis. The genesis block is always at depth $0$ and hence $\hat{\mathcal{T}}_0(0)$ has depth zero. 

We can transform the tree $\hat{\T}_0(t')$ into a new random tree $\hat{\T}^s_0(t')$. Every $c$ generations in $\hat{\T}_0(t')$ can be viewed as a single generation in $\hat{\T}^s_0(t')$. Thus, every block in $\hat{\T}^s_0(t')$, termed as \textit{superblocks}, is representative of $c$ blocks in $\hat{\T}_0(t')$. Consider $B_0$ to be the root of $\hat{\T}^s_0(t')$. The children blocks of $B_0$ in $\hat{\T}^s_0(t')$ are the descendent blocks at level $c$ in $\hat{\T}_0(t')$. We can order these
children blocks of $B_0$ in terms of their arrival times. Then, as the blocks in first $c-1$ levels of $\hat{\T}_0(t')$ are gift, the adversary didn't have to compute $\rvdfeval$ for these blocks. Consider block $B_1$ to be the first block for which $\rvdfeval$ was computed by the adversary.  Therefore, the arrival time $Q_1$ of block $B_1$ is given by $X_1$ where $X_1$ is an exponential random variable in the static system $ss2$. On the other hand, the arrival time of the first child of $B_1$, call it $B_{1,1}$, is given by $Q_{1,1} = Q_1 + X_{1,1} + \cdots + X_{1,c}$, where $X_{1,i}$ is the inter-arrival time between the $(i-1)^{th}$ and $i^{th}$ descendent block of $B_1$. Note that, in the static system $ss2$, all the $X_{1,i}$'s are exponential with parameter $\lambda_a$, and they all are independent. Let the depth of the tree $\hat{\T}^s_0(t')$ be $D^s_0(t')$. 

Each vertices at generation $k\geq 2$ in $\hat{\T}^s_0(t')$ can be labelled as a $k$ tuple of positive integers $(i_1,\ldots,i_k)$ with $i_j \geq c$ for $2\leq j \leq k$: the vertex $v = (i_1,\ldots,i_k)\in \I_k$ 
is the $(i_k-c+1)$-th child of vertex $(i_1,\ldots,i_{k-1})$
at level $k-1$. At $k=1$ generation, we have $i_1 \geq 1$ as the adversary is gifted $c-1$ blocks on proposing the first block for which it computes only one $\rvdfeval$.
Let $\I_k =\{(i_1,\ldots,i_k): i_j \geq 1 \text{ for } i_j = 1 \text{ and } i_j \geq c \text{ for } 2\leq j \leq k\}$, and  set $\I=\cup_{k>0}  \I_k$.  For such $v$ we also let $v^j=(i_1,\ldots,i_j)$,  $j=1,\ldots,k$, denote the  ancestor of $v$ at level $j$, with $v^k=v$. For notation convenience, we set $v^0=0$ as the root of the tree.

Next, let $\{\E_v\}_{v\in \I}$ be an i.i.d. family of exponential random variables of parameter $\lambda_a$. For $v=(i_1,\ldots,i_k)\in \I_k$,
let $\W_v=\sum_{j\leq i_k} \E_{(i_1,\ldots,i_{k-1},j)}$ and
let $Q_v=\sum_{j\leq k} \W_{v^j}$. This creates a labelled tree, with the following
interpretation: for $v=(i_1,\ldots,i_j)$,  the $W_{v^j}$ are the waiting time
for $v^j$ to appear, measured from the appearance of $v^{j-1}$, and 
$Q_v$ is the appearance time of $v$. Observe that starting from any $v \in \I_1$, we obtain a standard BRW. For any $v = (i_1, \cdots, i_k) \in \I_k$, we can write $Q_v = Q^1_v + Q^2_v$ where $Q^1_v$ is the appearance time for the ancestor of $v$ at level $1$ while $Q^2_v = Q_v - Q^1_v$. 

Let $Q^*_k=\min_{v\in \I_k} Q_v$. Note that $Q^*_k$ is the time of appearance of a block at level $k$ and therefore we have
\begin{equation}
  \label{eq:c_grow}
  \{D_0(t') \geq ck\} = \{D_0^s(t') \geq k\} = \{Q^*_k\leq  t'\}.
\end{equation}
Fixing $i_1 \in \mathcal{I}_1$, let  $Q^{2*}_{k,i_1}=\min_{v\in \I_k \text{ s.t. }v^1 = i_1} Q^2_v$. Observe that $Q^{2*}_{k,i_1}$ is the minimum of a standard BRW with its root at the vertex $i_1$. Introduce, for $\theta_c<0$,
the moment generating 
function
\begin{eqnarray*}
\Lambda_c(\theta_c)&=&\log \sum_{\substack{v\in \I_2\\v^1 = i_1}} E(e^{\theta_c Q^2_v})=
\log \sum_{j=c}^\infty E (e^{\sum_{i=1}^j \theta_c \E_i})\\
&=&\log \sum_{j=c}^\infty (E(e^{\theta_c \E_1}))^j  
=\log \frac{E^c(e^{\theta_c \E_1})}{1-E(e^{\theta_c\E_1})}.
\end{eqnarray*}
Due to the exponential law of $\E_1$,
$E(e^{\theta_c \E_1})= \frac{\lambda_a}{\lambda_a-\theta_c}$ and therefore
$\Lambda_c(\theta_c)=\log(-\lambda_a^c/\theta_c(\lambda_a-\theta_c)^{c-1})$. 

An important role is played by $\theta_c^*$, which is the negative solution to the equation $\Lambda_c(\theta_c) = \theta_c {\dot \Lambda_c(\theta_c)}$ and let $\eta_c$ satisfy that
$$\sup_{\theta_c<0} \left(\frac{\Lambda_c(\theta_c)}{\theta_c}\right)= \frac{\Lambda_c(\theta_c^*)}{\theta_c^*}=\frac{1}{\lambda_a \eta_c}.$$
Indeed,  we have the following.
\begin{proposition}
  \label{prop:prop-c}
$$\lim_{k\to\infty} \frac{Q^*_k}{k} = \lim_{k\to\infty} \frac{Q^{2*}_{k,i_1}}{k} = \sup_{\theta_c<0} \left(\frac{\Lambda_c(\theta_c)}{\theta_c}\right)=\frac{1}{\lambda_a \eta_c}, \quad a.s.$$
\end{proposition}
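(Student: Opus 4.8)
The statement is the classical speed theorem for branching random walks, specialized to the superblock tree $\hat{\T}^s_0$. The plan is to: (i) recognize, for each fixed $i_1$, the family $\{Q^2_v : v\in\I_k,\ v^1=(i_1)\}$ as a branching random walk whose reproduction point process has log-Laplace transform $\Lambda_c$; (ii) prove the lower bound $\liminf_k Q^*_k/k \ge \gamma$ with $\gamma := \sup_{\theta<0}\Lambda_c(\theta)/\theta$ by a first-moment (many-to-one) computation; (iii) prove the matching upper bound by invoking Biggins' theorem as developed in \cite{shi}; and (iv) transfer both conclusions to $Q^*_k$ itself via the elementary bound $Q^*_k \le \E_{(1)} + Q^{2*}_{k,1}$. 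I would first record the elementary calculus facts: the supremum defining $\gamma$ is attained at the $\theta_c^*$ solving $\Lambda_c(\theta)=\theta\dot\Lambda_c(\theta)$ (this is exactly $(\Lambda_c(\theta)/\theta)'=0$), and, by the very definition of $\eta_c$ in the text, $\gamma = \Lambda_c(\theta_c^*)/\theta_c^* = 1/(\lambda_a\eta_c)$. To see the supremum is attained at an interior point, note that since $E(e^{\theta\E_1})=\lambda_a/(\lambda_a-\theta)\to 1$ as $\theta\to0^-$ and $\to 0$ as $\theta\to-\infty$, we have $\Lambda_c(\theta)/\theta\to-\infty$ at $0^-$ and $\to 0^+$ at $-\infty$, so $\gamma>0$; strict convexity of $\Lambda_c$ on $(-\infty,0)$ then pins down the critical point uniquely.

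\textbf{Lower bound.} For the lower bound I would fix $a<\gamma$, choose $\theta<0$ with $\Lambda_c(\theta)-\theta a<0$, and apply the many-to-one lemma to the count $N_k(a) := \#\{v\in\I_k : Q_v\le ka\}$: one gets $E[N_k(a)] \le e^{-\theta k a}E\big[\sum_{v\in\I_k} e^{\theta Q_v}\big]$, and the last expectation factorizes as $e^{\tilde\Lambda(\theta)}e^{(k-1)\Lambda_c(\theta)}$, where $\tilde\Lambda(\theta)=\log\!\big(E(e^{\theta\E_1})/(1-E(e^{\theta\E_1}))\big)$ is the (finite, for $\theta<0$) level-one contribution accounting for indices $i_1\ge1$ rather than $\ge c$. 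This decays geometrically in $k$, so Markov plus Borel--Cantelli gives $\liminf_k Q^*_k/k\ge a$ a.s., and letting $a\uparrow\gamma$ closes it; the identical computation restricted to the subtree rooted at the level-one vertex $(i_1)$ gives $\liminf_k Q^{2*}_{k,i_1}/k\ge\gamma$.

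\textbf{Upper bound and wrap-up.} For the reverse inequality I would invoke Biggins' speed theorem for branching random walks (see \cite{shi}) applied to the BRW $\{Q^2_v : v^1=(i_1)\}$ — each vertex has countably infinitely many children and $\Lambda_c$ is finite and smooth on $(-\infty,0)$, so the moment hypotheses hold in a neighbourhood of $\theta_c^*$ — concluding $Q^{2*}_{k,i_1}/k\to\gamma$ a.s. (A self-contained version, if preferred: for $a>\gamma$, truncate the offspring to finitely many children whose displacements lie in a bounded window, chosen so the truncated process stays supercritical with first-moment speed strictly below $a$; a second-moment argument shows it survives with positive probability, and since the full tree contains infinitely many independent copies it survives a.s., so $Q^2_v\le ka$ eventually along a surviving line; then let $a\downarrow\gamma$.) Finally, for every $v$ with $v^1=(1)$ we have $Q_v=\E_{(1)}+Q^2_v$ with $\E_{(1)}<\infty$ a.s., whence $Q^*_k\le\E_{(1)}+Q^{2*}_{k,1}$ and $\limsup_k Q^*_k/k\le\gamma$; combined with the lower bound this gives $\lim_k Q^*_k/k=\lim_k Q^{2*}_{k,i_1}/k=\gamma=1/(\lambda_a\eta_c)$ a.s.

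\textbf{Main obstacle.} The hard part is the upper bound: the first-moment computation only lower-bounds $Q^*_k$, so by itself it does not preclude the front from travelling strictly faster; establishing that the minimum actually achieves speed $\gamma$ requires a survival/second-moment argument on a truncated finite-offspring BRW, or a careful verification of the hypotheses of Biggins' theorem for this infinite-offspring reproduction law. The infinite offspring — $j$ ranging over $\{c,c+1,\dots\}$ with displacement a sum of $j$ i.i.d.\ exponentials — actually aids survival, but makes it essential to note that $\Lambda_c(\theta)<\infty$ for all $\theta<0$, which is precisely what validates the exponential tiltings used throughout.
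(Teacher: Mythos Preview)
Your proposal is correct. The paper itself does not give a proof of this proposition: it states Propositions~\ref{prop:prop-c} and~\ref{prop:prop-c_tight} as known facts from the theory of branching random walks, with the implicit reference being \cite{shi}. Your sketch is precisely the standard route to that result --- the many-to-one/first-moment computation plus Borel--Cantelli for the lower bound, Biggins' speed theorem (or the truncation-and-second-moment substitute you outline) for the upper bound, and the one-line transfer from $Q^{2*}_{k,1}$ to $Q^*_k$ via the a.s.\ finite level-one waiting time $\E_{(1)}$ --- so you are following exactly the path the paper points to, with the details the paper omits filled in.
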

In fact, much more is known. 
\begin{proposition}
  \label{prop:prop-c_tight}
  There exist explicit constants $c_1>c_2>0$ 
so that the sequence
$ Q^*_k-k/\lambda_a \eta_c-c_1\log k$ is tight,
and
$$\liminf_{k\to \infty} Q^*_k-k/\lambda_a \eta_c-c_2\log k=\infty, a.s.$$
\end{proposition}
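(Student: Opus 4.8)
\noindent\emph{Proof sketch.} The statement is exactly the second-order (logarithmic-correction) behaviour of the minimal displacement of a branching random walk (BRW) in the boundary case, so the plan is to reduce Proposition~\ref{prop:prop-c_tight} to that theory. First I would make explicit that $\{Q_v:v\in\I\}$ is a BRW: every \emph{internal} superblock has offspring point process $\{S_m:m\ge c\}$, where $S_m=\sum_{j=1}^m\E_j$ with the $\E_j$ i.i.d.\ $\mathrm{Exp}(\lambda_a)$ — a rate-$\lambda_a$ Poisson process with its first $c-1$ atoms deleted — whose per-step log-Laplace transform is precisely $\Lambda_c(\theta_c)=\log(-\lambda_a^c/\theta_c(\lambda_a-\theta_c)^{c-1})$, finite exactly for $\theta_c<0$, and $Q^*_k=\min_{v\in\I_k}Q_v$. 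Proposition~\ref{prop:prop-c} (Biggins' theorem) already gives $Q^*_k/k\to\gamma:=1/(\lambda_a\eta_c)=\Lambda_c(\theta_c^*)/\theta_c^*$.

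The only non-homogeneous feature is the root, whose offspring process is the un-thinned $\{S_m:m\ge 1\}$ (the adversary's gift of $c-1$ blocks). I would dispose of it by writing $Q^*_k=\min_{m\ge 1}\bigl(S_m+\tilde Q^{*}_{k-1,(m)}\bigr)$, with $\{\tilde Q^{*}_{k-1,(m)}\}_{m\ge1}$ i.i.d.\ copies of the generation-$(k-1)$ minimum of the homogeneous thinned BRW. Since $S_m$ grows linearly in $m$ (it is $\mathrm{Gamma}(m,\lambda_a)$) while the left tail of $\tilde Q^{*}$ is exponentially light, $P(\tilde Q^{*}_{k}\le m_k-x)\lesssim(1+x)e^{\theta_c^* x}$ (note $\theta_c^*<0$), only an $O_P(1)$ number of the subtrees $(m)$ ever matter: a Gamma lower-tail bound on $S_m$ together with this estimate shows that removing the root inhomogeneity changes $Q^*_k$ by $O_P(1)$, and — via a union bound in $m$ that is summable in $k$ once $c_1-c_2>1/|\theta_c^*|$, followed by Borel--Cantelli — changes it by $o(\log k)$ along every subsequence, a.s. Hence it suffices to prove both assertions for the homogeneous BRW.

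For the homogeneous BRW I would apply the sharp results on the minimal displacement in the boundary case — tightness of $Q^*_k$ around its median (Addario-Berry and Reed; Bramson and Zeitouni) and the a.s.\ fluctuation rates of $(Q^*_k-\gamma k)/\log k$ (Hu and Shi) — as presented in \cite{shi}. The key point is that the defining equation of $\theta_c^*$, namely $\Lambda_c(\theta_c^*)=\theta_c^*\dot\Lambda_c(\theta_c^*)$ together with $\gamma=\Lambda_c(\theta_c^*)/\theta_c^*$, is exactly the boundary-case normalization: after the exponential tilt by $\theta_c^*$ and recentering by $\gamma k$, the recentered log-Laplace $\bar\psi(t):=\Lambda_c(-t)+t\gamma$ satisfies $\bar\psi(t^*)=\bar\psi'(t^*)=0$ with $t^*=-\theta_c^*=|\theta_c^*|$, and $\bar\psi''(t^*)=\ddot\Lambda_c(\theta_c^*)\in(0,\infty)$. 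The remaining integrability hypotheses (the $W(\log^+W)^2$-type conditions on the tilted offspring) hold trivially because the offspring atoms $S_m$ sit near $m/\lambda_a$, so the tilted weights decay geometrically in $m$, and the displacement law is non-lattice (it has a density). The cited theorems then yield, with $m_k:=\gamma k-\tfrac{3}{2\theta_c^*}\log k$, that $\{Q^*_k-m_k\}_k$ is tight — the first assertion with $c_1=-\tfrac{3}{2\theta_c^*}=\tfrac{3}{2|\theta_c^*|}>0$ — and that $\liminf_k(Q^*_k-\gamma k)/\log k=-\tfrac{1}{2\theta_c^*}$ a.s., so that any constant $c_2\in\bigl(0,-\tfrac{1}{2\theta_c^*}\bigr)$ gives $\liminf_k\bigl(Q^*_k-\gamma k-c_2\log k\bigr)=+\infty$ a.s.; and $c_1=\tfrac{3}{2|\theta_c^*|}>\tfrac{1}{2|\theta_c^*|}>c_2$, so $c_1>c_2>0$ as claimed.

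The main obstacle is the bookkeeping in the second paragraph — showing that the infinite but geometrically spread-out extra family of subtrees hanging off the root perturbs $Q^*_k$ only at the $O_P(1)$ level for tightness and only at the $o(\log k)$ level for the a.s.\ statement — which requires making quantitative the interplay between the Gamma growth of $S_m$ and the sharp exponential left tail of the BRW minimum. Everything else is a direct application of the known BRW asymptotics, and the verification of their hypotheses is immediate here from the exponentially light tails of the offspring point process.
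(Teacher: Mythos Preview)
The paper does not give a proof of this proposition at all: it is stated immediately after Proposition~\ref{prop:prop-c} with the phrase ``In fact, much more is known,'' and both propositions are meant as citations to the branching-random-walk literature (the reference \cite{shi} appearing in the introduction). So there is no ``paper's own proof'' to compare against; what you have written is a genuine proof where the paper offers only an implicit pointer.

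Your reduction is the right one. The identification of $\theta_c^*$ via $\Lambda_c(\theta_c^*)=\theta_c^*\dot\Lambda_c(\theta_c^*)$ as precisely the boundary-case normalization, followed by an appeal to the Addario-Berry--Reed/Bramson--Zeitouni tightness and the Hu--Shi almost-sure fluctuation results, is exactly how one extracts such a statement from \cite{shi}, and your explicit constants $c_1=3/(2|\theta_c^*|)$, $c_2\in(0,1/(2|\theta_c^*|))$ are the correct ones. The integrability and non-lattice hypotheses are indeed immediate here because the offspring atoms $S_m$ are Gamma$(m,\lambda_a)$.

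One small simplification for the root-inhomogeneity step: you treat it as an infinite family of extra subtrees requiring a summable-in-$m$ union bound, but in fact the inhomogeneous root has offspring $\{S_m:m\ge1\}$ while internal vertices have $\{S_m:m\ge c\}$, so the inhomogeneous tree is obtained from a homogeneous one by adjoining exactly $c-1$ additional independent homogeneous subtrees (rooted at $S_1,\dots,S_{c-1}$, each an $O_P(1)$ shift). A finite minimum of tight sequences is tight, and a finite minimum of sequences each satisfying the Hu--Shi a.s.\ liminf inherits the same liminf, so no Borel--Cantelli argument over $m$ is needed. Your more elaborate route also works, but this shortcut removes the ``main obstacle'' you flag.
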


Note that Propositions \ref{prop:prop-c},\ref{prop:prop-c_tight}
and \eqref{eq:c_grow} imply in particular
that  $D_0(t')\leq c\eta_c\lambda_a t'$ for all large $t'$, a.s., and also that
\begin{equation}
  \label{eq:c_converse}
\mbox{\rm  if
$c\eta_c\lambda_a>\lambda_h$ then } D_i(t')>t'\; \mbox{\rm for all large $t'$, a.s.}.
\end{equation}

Let us define $\phi_c := c\eta_c$, then $\phi_c \lambda_a$ is the growth rate of private $c$-correlated NaS tree. With all these preparations, we can give a simple proof for
Lemma~\ref{lem:step4-lemma}.
\begin{proof}
  Consider $m=\eta_c\lambda_a t'+x$.
  Note that by \eqref{eq:c_grow},
  \begin{align}
    P(D_0^s(t')\geq m)&=P(Q^*_{ m}\leq t')
    \leq \sum_{v\in \I_{m}} P(Q_v\leq t') = \sum_{v\in \I_{m}}  P(Q^1_v + Q^2_v \leq t') \nonumber  \\
    &= \sum_{v\in \I_{m}} \int_0^{t'} p_{Q^1_v}(u) P(Q^2_v \leq t' - u) du \nonumber \\
    \label{eq:c-bound} 
    &= \sum_{i_1 \geq 1} \sum_{i_2\geq c,\ldots,i_{m}\geq c} \int_0^{t'} p_{Q^1_v}(u) P(Q^2_v \leq t' - u) du
\end{align}
  For $v=(i_1,\ldots,i_k)$, set $|v_{-1}|=i_2+\cdots+i_k$. Then,
  we have that $Q^2_v$ has the same law as 
  $\sum_{j=1}^{|v_{-1}|} \E_j$. Thus, by Chebycheff's inequality,
  for $v\in \I_{m}$,
  \begin{equation}
    \label{eq:upper_1}
    P(Q^2_v\leq  t'-u)\leq Ee^{\theta_c^* Q^2_v} e^{-\theta_c^*(t'-u)}=\left(\frac{\lambda_a}
  {\lambda_a-\theta_c^*} \right)^{|v_{-1}|} e^{-\theta_c^* (t'-u)}.
\end{equation}
  And
  \begin{eqnarray}
    \label{eq:upper_2}
    \sum_{i_2\geq c,\ldots,i_{m}\geq c}\left(\frac{\lambda_a}
  {\lambda_a-\theta_c^*} \right)^{|v_{-1}|}
  &=&\sum_{i_2\geq c,\ldots,i_{m}\geq c} 
  \left(\frac{\lambda_a}
  {\lambda_a-\theta_c^*} \right)^{\sum_{j=2}^m i_j}\nonumber \\
  &=&\left(\sum_{i\geq c}\left(\frac{\lambda_a}
  {\lambda_a-\theta_c^*} \right)^i\right)^{m-1}= e^{(m-1)\Lambda_c(\theta_c^*)}.
\end{eqnarray}
Combining \eqref{eq:upper_1}, \eqref{eq:upper_2} and \eqref{eq:c-bound}
yields
\begin{eqnarray}
P(D_0^s(t') \ge m) &\leq& e^{-\theta_c^* t'} e^{(m-1)\Lambda_c(\theta_c^*)} \sum_{i_1 \ge 1}  \int_{0}^{t'} p_{Q^1_v}(u)e^{\theta_c^* u} du \nonumber \\
&=&  e^{-\theta_c^* t'} e^{(m-1)\Lambda_c(\theta_c^*)} \sum_{i_1 \ge 1} \int_{0}^{t'} \frac{\lambda_a^{i_1}u^{i_1 - 1}e^{-\lambda_a u}}{\Gamma(i_1)} e^{\theta_c^* u} du \nonumber \\
&=& e^{-\theta_c^* t'}e^{(m-1)\Lambda_c(\theta_c^*)}g(t').
\end{eqnarray}
where $g(t') = \sum_{i_1 \ge 1} \int_{0}^{t'} \frac{\lambda_a^{i_1}u^{i_1 - 1}e^{-\lambda_a u}}{\Gamma(i_1)} e^{\theta_c^* u} du$.
\end{proof}
From proposition~\ref{prop:prop-c}, we have
\begin{equation}
    \label{eq:phi_c}
    \phi_c = \frac{c\theta_c^*}{\lambda_a} \left(\frac{1}{\log{\left(\frac{-\lambda^c_a}{\theta^*_c(\lambda_a - \theta^*_c)^{c-1}}\right)}}\right),
\end{equation}
where $\theta_c^*$ is the unique negative solution of 
\begin{align}
    \Lambda_c(\theta_c) &= \theta_c {\dot \Lambda_c(\theta_c)} 
    \label{eqn:relation}
\end{align}
Note that $g(t')$ is an increasing function on $t'$ and
\begin{align}
    \label{eq:bound-g-t}
    \lim_{t' \rightarrow \infty} g(t') = \sum_{i_1 \geq 1} \left(\frac{\lambda_a}{\lambda_a - \theta_c^{\star}}\right)^{i_1} = \frac{\lambda_a}{- \theta_c^{\star}}
\end{align}

\section{Proofs}
\subsection{Definitions and Preliminary Lemmas}
\label{sec:preliminary-lemmas}
In this section, we define some important events which will appear frequently in the analysis and provide some useful lemmas.

Let $V_j$ be the event that the $j-$th honest block $b_j$
is a loner, i.e.,
\begin{align*}
    V_j = \{\tau^h_{j-1} < \tau^h_{j} - \Delta'\} \bigcap \{\tau^h_{j+1} > \tau^h_{j} + \Delta'\}
\end{align*}
\noindent Let $\hat{F}_j = V_j \bigcap F_j$ be the event that $b_j$ is a Nakamoto block. Then, we can define the following ``potential" catch up event in $ss2$: 
\begin{equation}
\label{eqn:hat_Bik}
    \hat{B}_{ik} = \{D_i(\tau^h_{k} + \Delta')  \ge  D_h(\tau^h_{k-1}) - D_h(\tau^h_i+\Delta')\},
\end{equation}
which is the event that the adversary launches a private attack starting from honest block $b_i$ and catches up the fictitious honest chain right before honest block $b_k$ is mined. 
\begin{lemma}  
\label{lem:F_j}
For each $j$,
\begin{equation}
    P(\hat{F}_j^c) = P(F_j^c \cup V_j^c) \leq P\left(\left(\bigcup_{(i,k): 0 \leq i < j <k} \hat{B}_{ik}\right) \cup V_j^c\right).
\end{equation}
\end{lemma}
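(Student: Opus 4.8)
The plan is to prove the stated identity $\hat F_j^c = F_j^c \cup V_j^c$ directly from De Morgan's law, since by definition $\hat F_j = V_j \cap F_j$, and then to obtain the inequality from the set inclusion $F_j^c \subseteq \bigcup_{0\le i<j<k}\hat B_{ik}$. Once this inclusion is in hand, adding $V_j^c$ to both sides gives $F_j^c\cup V_j^c \subseteq \bigl(\bigcup_{0\le i<j<k}\hat B_{ik}\bigr)\cup V_j^c$, and monotonicity of $P$ finishes the argument. Note that the loner event $V_j$ plays no role in the containment; it appears only because it sits symmetrically on both sides, so it is cleanest to discard it up front and work with $F_j^c$ alone.

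To establish the inclusion, I would fix an outcome in $F_j^c$. Since $F_j=\bigcap_{i=0}^{j-1}E_{ij}$ (with $\Delta$ replaced by the delay $\Delta'$ of the static system $ss2$), there is an index $i\in\{0,\dots,j-1\}$ with $E_{ij}^c$ occurring, i.e.\ there is a time $t>\tau^h_j+\Delta'$ such that $D_i(t)\ge D_h(t-\Delta')-D_h(\tau^h_i+\Delta')$. Consider $g(t):=D_i(t)-D_h(t-\Delta')+D_h(\tau^h_i+\Delta')$. Because $D_i(\cdot)$ and $D_h(\cdot-\Delta')$ are non-decreasing, integer-valued, right-continuous jump processes, $g$ is right-continuous and piecewise constant, so the infimum $t_0$ of the non-empty set $\{t>\tau^h_j+\Delta':g(t)\ge 0\}$ is actually attained, $g(t_0)\ge 0$, and $t_0\ge\tau^h_j+\Delta'$. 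In particular $D_i(t_0)\ge D_h(t_0-\Delta')-D_h(\tau^h_i+\Delta')$.

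Now set $L:=D_h(t_0-\Delta')$. Since $t_0-\Delta'\ge\tau^h_j$ we have $L\ge D_h(\tau^h_j)$, the level of $b_j$ in the fictitious honest tree $\T_h$. Let $b_{k-1}$ be the honest block with \emph{largest} index lying at level $L$ of $\T_h$, and let $b_k$ be the next honest block, which is then the representative (first block) of level $L+1$. Then $D_h(\tau^h_{k-1})=L$, while $\tau^h_k$ is the creation time of level $L+1$, hence $\tau^h_k>t_0-\Delta'$ (otherwise $D_h(t_0-\Delta')\ge L+1$). Moreover $b_k$ sits at level $L+1>D_h(\tau^h_j)=$ level of $b_j$, and the level is non-decreasing in the honest-block index, so $k>j$; thus $0\le i<j<k$. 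Using $\tau^h_k+\Delta'>t_0$ and monotonicity of $D_i$,
\[
D_i(\tau^h_k+\Delta')\ \ge\ D_i(t_0)\ \ge\ D_h(t_0-\Delta')-D_h(\tau^h_i+\Delta')\ =\ D_h(\tau^h_{k-1})-D_h(\tau^h_i+\Delta'),
\]
which is precisely the event $\hat B_{ik}$. Hence the outcome lies in $\bigcup_{0\le i<j<k}\hat B_{ik}$, completing the inclusion and therefore the lemma.

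The one place requiring care is the bookkeeping with $\T_h$: one must use that $D_h$ increments only at the proposal time of the representative block of a level, which is exactly what forces the choice of $b_{k-1}$ as the last block of level $L$ to make $b_k$ the representative of level $L+1$ and thus $\tau^h_k>t_0-\Delta'$. The right-continuity/step-function argument guaranteeing that the catch-up time $t_0$ is realized, and the identification of $t_0$, are otherwise routine; I would also flag explicitly that all $\Delta$'s are the static-system delay $\Delta'$ so the reader does not conflate the two.
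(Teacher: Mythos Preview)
Your proof is correct and rests on the same idea as the paper's: the continuous-time catch-up condition in $E_{ij}^c$ is discretized at honest arrival times to exhibit a pair $(i,k)$ with $k>j$ for which $\hat B_{ik}$ holds. The paper runs the argument in the contrapositive direction (showing $V_j\cap\bigcap_{k>j}\hat B_{ik}^c\subseteq V_j\cap E_{ij}$ via a chain of event identities that replace ``for all $t'>\tau^h_j$'' by ``for all $k>j$ at $t'=(\tau^h_k)^-$''), but your explicit construction of $k$ from a catch-up time $t_0$ is just the constructive form of that same discretization.
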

\begin{proof}
\begin{eqnarray*}
&& P(V_j \cap E_{ij}) \\
&=& P(V_j \cap \mbox{\{$D_i(t')  < D_h(t'-\Delta') - D_h(\tau^h_i+\Delta')$ for all $t' > \tau^h_j + \Delta'$\}})\\
&= & P(V_j \cap \mbox{$\{D_i(t' +\Delta')  < D_h(t') - D_h(\tau^h_i+\Delta')$ for all $t' > \tau^h_j\}$ })\\
&= & P(V_j \cap \mbox{$\{D_i({\tau^h_k}^- +\Delta')  < D_h({\tau^h_k}^-) - D_h(\tau^h_i+\Delta')$ for all $k > j\}$}) \\
&= & P(V_j \cap \mbox{$\{D_i({\tau^h_k} +\Delta')  < D_h({\tau^h_{k-1}}) - D_h(\tau^h_i+\Delta')$ for all $k > j\}$}) .
\end{eqnarray*}
Since $\hat{F}_j = F_j \cap V_j = \bigcap_{0 \leq i < j} E_{ij} \cap V_j$, by the definition of $\hat{B}_{ik}$ we have $P(\hat{F}_j) \geq P\left(\left(\bigcap_{(i,k): 0 \leq i < j <k} \hat{B}_{ik}^c\right) \cap V_j\right)$. Taking complement on both side, we can conclude the proof.
\end{proof}

Let $R_m = \tau^h_{m+1} - \tau^h_{m}$. Then, $V_j$ and $\hat{B}_{ik}$ can be re-written as:
\begin{align}
    V_j &= \{\Delta' < R_{j-1}  \} \bigcap \{R_{j} > \Delta'\}\\
    \hat{B}_{ik} &=  \Bigg\{D_i(\tau^h_i+ \sum_{m=i}^{k-1}R_m + \Delta')  \ge  D_h(\tau^h_{k-1}) - D_h(\tau^h_i + \Delta')\Bigg\} \nonumber 
\end{align}
\begin{remark}
\label{rem:R_m}
By time-warping, $R_m$ is an IID exponential random variable with rate $\lambda_h$.
\end{remark}

Define $X_d$, $d > 0$, as the time it takes in the local clock of static system $ss2$ for $D_h$ to reach depth $d$ after reaching depth $d-1$. In other words, $X_d$ is the difference between the times $t_1$ and $t_2$, where $t_1$ is the minimum time $t'$ in the local clock of $ss2$ such that $D_h(t') = d$, and, $t_2$ is the minimum time $t'$ in the local clock of $ss2$ such that $D_h(t') = d-1$.

Also, let $\delta^h_j = \tau^h_j - \tau^h_{j-1}$ and $\delta^a_j = \tau^a_j - \tau^a_{j-1}$ denote the inter-arrival time for honest and adversary arrival events in the local clock of static system $ss2$, respectively.

\begin{proposition}
\label{prop:min-rate}
Let $Y_d$, $d \geq 1$, be i.i.d random variables, exponentially distributed with rate $\lambda_h$. Then, each random variable $X_d$ can be expressed as $\Delta'+Y_d$.
\end{proposition}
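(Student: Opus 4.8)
The plan is to unwind the definition of the honest fictitious tree $\T_h$ and expose the renewal structure hidden in its depth process $D_h$. First I would recall from Definition~\ref{def:fictitious} that, in $\T_h$, a new level is opened exactly when an honest block is proposed more than $\Delta'$ after the first block of the current level --- equivalently, more than $\Delta'$ after the instant at which $D_h$ last incremented. Writing $T_d$ for the time at which $D_h$ first reaches depth $d$ (so that $X_d = T_d - T_{d-1}$), this reading gives $T_1 = \tau^h_1$ and, for $d \geq 2$, $T_d = \min\{\,\tau^h_j : \tau^h_j > T_{d-1} + \Delta'\,\}$. Hence $X_d = \Delta' + Y_d$ with $Y_d := T_d - (T_{d-1} + \Delta')$ the waiting time, measured from $T_{d-1}+\Delta'$, until the next honest block proposal.

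Next I would argue that the $Y_d$ are i.i.d.\ $\Exp(\lambda_h)$. The key input is that in the static system $ss2$ the honest proposal times form a homogeneous Poisson process of rate $\lambda_h$ --- this is exactly Remark~\ref{rem:R_m}, and it follows from model rule \textbf{M2}, since honest winning events are a superposition of independent Poisson clocks whose rates do not depend on the tree topology, while the fictitious-tree construction only reads off these times. Now $T_{d-1}$ is a stopping time of this Poisson process (it is an arrival time determined by earlier arrivals), hence so is $T_{d-1} + \Delta'$; by the strong Markov property (memorylessness) of the Poisson process, the arrivals strictly after $T_{d-1}+\Delta'$ again form a rate-$\lambda_h$ Poisson process that is independent of the history up to $T_{d-1}+\Delta'$. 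Since $X_1,\dots,X_{d-1}$ are measurable with respect to that history, $Y_d$ is $\Exp(\lambda_h)$ and independent of $X_1,\dots,X_{d-1}$; iterating over $d$ gives the asserted i.i.d.\ property.

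I do not expect a real obstacle: the statement is essentially a bookkeeping claim, and the only point requiring care is the precise verification that $T_{d-1}+\Delta'$ is a stopping time, which is what legitimizes the Poisson restart and simultaneously yields both the exponential law of $Y_d$ and its independence from the past. One small caveat to state explicitly is the level $d=1$: there $X_1 = \tau^h_1 \sim \Exp(\lambda_h)$ directly, with no preceding $\Delta'$ window, so the identity $X_d = \Delta' + Y_d$ should be understood as exact for $d \geq 2$ (the missing initial $\Delta'$ is immaterial, as the proposition is only ever invoked inside sums $\sum_{d} X_d$ over a block of consecutive generic levels); alternatively one can adopt the convention of a fictitious $\Delta'$ grace window after the genesis so that the formula holds verbatim for all $d \geq 1$.
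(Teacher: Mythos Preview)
Your proposal is correct and follows the standard argument. The paper itself does not supply a proof of Proposition~\ref{prop:min-rate} (it simply cites Proposition~C.1 of \cite{dembo2020everything}), but the proof it gives for the analogous Proposition~\ref{prop:ss1-bound-1} is exactly your argument: identify from Definition~\ref{def:fictitious} that level $d$ opens at the first honest arrival after $T_{d-1}+\Delta'$, and then invoke memorylessness of the rate-$\lambda_h$ Poisson process at that stopping time to obtain $Y_d\sim\Exp(\lambda_h)$ independent of the past.
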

See Proposition $C.1$ in \cite{dembo2020everything} for the proof.

\begin{proposition}
\label{prop:bound-1}
For any constant $a$, 
$$P(\sum_{d=a}^{n+a} X_d > n(\Delta' + \frac{1}{\lambda_h})(1+\delta)) \leq e^{-n\Omega(\delta^2(1+\Delta'\lambda_h)^2)}$$
\end{proposition}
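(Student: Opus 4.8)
The plan is to reduce the estimate to a standard Chernoff bound for a sum of i.i.d.\ exponential random variables, via Proposition~\ref{prop:min-rate}. That proposition writes $X_d = \Delta' + Y_d$ with $\{Y_d\}$ i.i.d.\ $\mathrm{Exp}(\lambda_h)$, so $\sum_{d=a}^{n+a} X_d$ equals a deterministic term $\Theta(n)\cdot\Delta'$ plus $S := \sum_d Y_d$, itself a sum of $\Theta(n)$ i.i.d.\ exponentials with $\mathbb{E}[S] = \Theta(n)/\lambda_h$. Subtracting the deterministic part, the event in the proposition becomes, up to an additive $O(\Delta' + 1/\lambda_h)$ shift of the threshold,
\begin{equation*}
    \Big\{\, S \;>\; \tfrac{n}{\lambda_h}\,(1+\delta') \,\Big\}, \qquad \delta' := \delta\,(1+\Delta'\lambda_h).
\end{equation*}
The key algebraic observation is that $n(\Delta' + \tfrac{1}{\lambda_h})(1+\delta) - n\Delta' = \tfrac{n}{\lambda_h}\big(1 + \delta(1+\Delta'\lambda_h)\big) = \tfrac{n}{\lambda_h}(1+\delta')$, which is what introduces the factor $(1+\Delta'\lambda_h)$ into the exponent. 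The additive $O(\Delta' + 1/\lambda_h)$ slack and the off-by-one in the number of summands are dominated by the $\Theta(n\delta')$ deviation once $n$ exceeds a constant depending only on $\delta,\Delta',\lambda_h$; for smaller $n$ the asserted bound is trivial after adjusting the constant hidden in $\Omega(\cdot)$.

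So it suffices to bound $P\big(S \ge (1+\delta')\,\mathbb{E}[S]\big)$. For this I would apply the exponential Markov inequality with $\mathbb{E}[e^{\theta Y_d}] = \lambda_h/(\lambda_h-\theta)$ for $0 < \theta < \lambda_h$, obtaining $P(S \ge x) \le e^{-\theta x}\big(\lambda_h/(\lambda_h-\theta)\big)^{n}$. Substituting $x = (1+\delta')n/\lambda_h$ and optimizing (the minimizer is $\theta = \lambda_h\,\delta'/(1+\delta')$) yields the clean rate $P\big(S \ge (1+\delta')n/\lambda_h\big) \le \exp\!\big(n[\ln(1+\delta') - \delta']\big)$. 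Finally, invoking the elementary inequality $\ln(1+x) - x \le -c_0\,x^2$, valid with an absolute constant $c_0>0$ on any fixed bounded range of $x$ (for instance $c_0 = 1/6$ on $[0,1]$, which is the regime in which the proposition is used, $\delta$ being a small deviation parameter), gives
\begin{equation*}
    P\big(S \ge (1+\delta')n/\lambda_h\big) \;\le\; e^{-c_0 n \delta'^2} \;=\; e^{-n\,\Omega\left(\delta^2(1+\Delta'\lambda_h)^2\right)},
\end{equation*}
since $\delta'^2 = \delta^2(1+\Delta'\lambda_h)^2$.

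I do not anticipate a genuine obstacle: this is a one-dimensional large-deviations estimate for an Erlang sum, and each step is a textbook Chernoff computation. The only points needing care are bookkeeping --- matching the number of summands and the $\Delta'$-offset against the factor $n$ in the threshold, and fixing the bounded range of $\delta'$ on which $\delta' - \ln(1+\delta') \ge c_0\delta'^2$ holds uniformly (if one insists on allowing $\delta'$ unbounded, one instead uses $\ln(1+\delta') - \delta' \le -\delta'/2$ for $\delta' \ge 3$, which decays even faster in $n$) --- neither of which affects the stated conclusion.
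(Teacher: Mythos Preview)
Your proposal is correct and follows exactly the approach the paper indicates: the paper's proof is the single sentence ``proved using Chernoff bound and Proposition~\ref{prop:min-rate},'' and your write-up supplies precisely those details, including the key algebraic step that explains the appearance of $(1+\Delta'\lambda_h)^2$ in the exponent. There is nothing to add.
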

Proposition~\ref{prop:bound-1} is proved using chernoff bound and Proposition~\ref{prop:min-rate}.

\begin{proposition}
\label{prop:bound-2}
Probability that there are less than
$$n\frac{\lambda_a(1-\delta)}{\lambda_h}$$ 
adversarial arrival events for which $\rvdfeval$ has been computed in the interval $\tau^h_0$ to $\tau^h_{n+1}$ is upper bounded by
$$e^{-n\Omega(\delta^2\frac{\lambda_a}{\lambda_h})}$$
\end{proposition}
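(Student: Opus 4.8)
The plan is to decouple the (random) length of the time window $[\tau^h_0,\tau^h_{n+1}]$ from the (random) number of adversarial leader-election wins that land inside it. First I would use the time-warping of Remark~\ref{rem:R_m}: in the static system $ss2$ the honest arrival times are a homogeneous rate-$\lambda_h$ Poisson process, so $L := \tau^h_{n+1}-\tau^h_0 = \sum_{m=0}^{n} R_m$ is a sum of $n+1$ i.i.d.\ $\mathrm{Exp}(\lambda_h)$ variables, i.e.\ a $\mathrm{Gamma}(n+1,\lambda_h)$ variable with mean $(n+1)/\lambda_h$. A standard lower-tail Chernoff bound for a sum of i.i.d.\ exponentials then gives, for the ``good'' event $G := \{L \ge (1-\delta/2)(n+1)/\lambda_h\}$, the estimate $P(G^c) \le e^{-(n+1)\,\Omega(\delta^2)}$.

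Next I would condition on the honest arrival times. In $ss2$ the adversarial leader-election wins --- which are exactly the events at which an $\rvdfeval$ is actually computed --- arrive according to a rate-$\lambda_a$ Poisson clock that is independent of the honest arrival process. Hence, conditionally on $L$, the number $N_a$ of such wins in $[\tau^h_0,\tau^h_{n+1}]$ is $\mathrm{Poisson}(\lambda_a L)$, and on $G$ its mean is $\mu := \lambda_a L \ge (1-\delta/2)(n+1)\lambda_a/\lambda_h \ge (1-\delta/2)\,n\lambda_a/\lambda_h$. Since $(1-\delta)n\lambda_a/\lambda_h \le (1-\delta/2)^2 n\lambda_a/\lambda_h \le (1-\delta/2)\mu$, applying the Poisson lower-tail Chernoff bound $P(\mathrm{Poisson}(\mu)\le(1-\delta/2)\mu)\le e^{-\mu\delta^2/8}$ and using $\mu \ge \tfrac12 n\lambda_a/\lambda_h$ (for $\delta\le 1$) yields, on $G$, $P\big(N_a < n\lambda_a(1-\delta)/\lambda_h \,\big|\, \text{honest times}\big) \le e^{-\Omega(\delta^2 n\lambda_a/\lambda_h)}$.

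Finally I would take a union bound over $G^c$ and this conditional deviation: since without loss of generality $\lambda_a/\lambda_h\le 1$, the term $e^{-(n+1)\Omega(\delta^2)}$ is absorbed into $e^{-\Omega(\delta^2 n\lambda_a/\lambda_h)}$, giving the claimed bound $e^{-n\,\Omega(\delta^2\lambda_a/\lambda_h)}$. The one point that needs (mild) care --- the main obstacle --- is the justification that conditioning on the honest arrival times leaves the adversarial arrival rate at $\lambda_a$, which is immediate from the independence of the Poisson clocks driving honest and adversarial leader elections in the model (M1 and M2 of Section~\ref{sec:model}); granted that, the argument reduces to two standard concentration inequalities, with the slack between $n$ and $n+1$ (and the generosity of targeting a fraction $1-\delta$ of the true mean factor $1$) swallowing all lower-order constants.
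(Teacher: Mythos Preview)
Your proposal is correct and is essentially what the paper has in mind: the paper's entire proof is the single sentence ``Proposition~\ref{prop:bound-2} is proven using the Poisson tail bounds,'' and you have filled in exactly the natural two-step argument (lower-tail concentration for the Erlang window length $L=\sum_{m=0}^n R_m$, then a conditional Poisson lower-tail bound on the independent rate-$\lambda_a$ adversarial clock over that window, followed by a union bound). Your care about absorbing the $e^{-(n+1)\Omega(\delta^2)}$ term via $\lambda_a/\lambda_h\le 1$ is appropriate and consistent with the paper's standing assumption $\phi_c\lambda_a<\lambda_h/(1+\lambda_h\Delta')$.
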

\noindent Proposition~\ref{prop:bound-2} is proven using the Poisson tail bounds.

\begin{proposition}
\label{prop:bound-3}
For $n > \frac{c-1}{\phi_c-1}$, define $B_n$ as the event that there are at least $n$ adversarial block arrivals for each of which adversary computed $\rvdfeval$ while $D_h$ grows from depth $0$ to $n+c-1$:
$$B_n = \{\sum_{i=1}^{n+c-1} X_i \geq \sum_{i=0}^n \delta^a_i \}$$
If 
$$\phi_c\lambda_a < \frac{\lambda_h}{1+\lambda_{h} \Delta'},$$
then,
$$P(B_n) \leq  e^{-A_1 n}e^{-A_2}$$,
\begin{align*}
    A_1 &= -w \Delta' + \ln{\left(\frac{\lambda_a+w}{\lambda_a}\right)} + \ln{\left(\frac{\lambda_h-w}{\lambda_h}\right)}\\
    A_2 &=-(c-1)w\Delta' + (c-1)\ln{\left(\frac{\lambda_h-w}{\lambda_h}\right)}
\end{align*}
such that $A_1 + \frac{A_2}{n} > 0$
and,                                                                                                                                                                                                      
\begin{align*}
    w &= \frac{\lambda_h - \lambda_a}{2} + \frac{2n+c-1}{2(n+c-1)\Delta'} -\\
    & \frac{\sqrt{[(n+c-1)\Delta'(\lambda_a-\lambda_h)]^2 + (2n+c-1)^2+2(n+c-1)\Delta'[(c-1)(\lambda_a+\lambda_h)+2(n+c-1)\Delta'\lambda_a\lambda_h]}}{2(n+c-1)\Delta'}
\end{align*}
\end{proposition}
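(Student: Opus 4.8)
The plan is to prove this as a two‑sided Chernoff bound on the difference of the two sums defining $B_n$, after reducing each side to a sum of independent exponential random variables, and then to optimize the tilting parameter.

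First I would reduce the event. By Proposition~\ref{prop:min-rate}, $X_d = \Delta' + Y_d$ with the $Y_d$ i.i.d.\ $\Exp(\lambda_h)$, independent of the adversarial arrival process; and, by the Poisson structure of the adversarial $\rvdfeval$-completions in $ss2$ that also underlies Proposition~\ref{prop:bound-2}, the $\delta^a_i$ are i.i.d.\ $\Exp(\lambda_a)$ and independent of the $Y_d$. Thus, setting $S := (n+c-1)\Delta' + \sum_{i=1}^{n+c-1} Y_i - \sum_{i=0}^{n}\delta^a_i$, we have $B_n=\{S\ge 0\}$. Applying Markov's inequality to $e^{wS}$ for a parameter $w\in(0,\lambda_h)$ and using $\mathbb{E}[e^{wY_1}]=\lambda_h/(\lambda_h-w)$ and $\mathbb{E}[e^{-w\delta^a_1}]=\lambda_a/(\lambda_a+w)$ gives
\[
  P(B_n)\;\le\; e^{w(n+c-1)\Delta'}\Bigl(\frac{\lambda_h}{\lambda_h-w}\Bigr)^{n+c-1}\Bigl(\frac{\lambda_a}{\lambda_a+w}\Bigr)^{n+1}.
\]
Discarding the single surplus factor $\lambda_a/(\lambda_a+w)\le 1$ and splitting the remaining powers into a term raised to the power $n$ and a term depending only on $c$ recovers exactly $P(B_n)\le e^{-A_1 n - A_2}$ with the $A_1,A_2$ of the statement, for every admissible $w$.

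It then remains to choose $w$ so that $A_1 n + A_2 > 0$. Optimizing the exponent $A_1 n + A_2$ over $w$ leads to the stationarity condition $\frac{d}{dw}\!\left[w(n+c-1)\Delta' + (n+c-1)\ln\frac{\lambda_h}{\lambda_h-w} + (n+1)\ln\frac{\lambda_a}{\lambda_a+w}\right] = 0$, which after clearing denominators is a quadratic equation in $w$; its unique root in $(0,\lambda_h)$ is the (unwieldy) closed form displayed in the statement. To check that this $w$ makes the exponent strictly positive, rewrite
\[
  A_1 n + A_2 \;=\; n\ln\frac{\lambda_a+w}{\lambda_a} \;-\; (n+c-1)\Bigl(w\Delta' + \ln\frac{\lambda_h}{\lambda_h-w}\Bigr),
\]
so positivity is equivalent to $\ln\frac{\lambda_a+w}{\lambda_a} > \bigl(1+\tfrac{c-1}{n}\bigr)\bigl(w\Delta' + \ln\frac{\lambda_h}{\lambda_h-w}\bigr)$. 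As $w\downarrow 0$ the ratio of the two sides tends to $\dfrac{\lambda_h}{\lambda_a(1+\lambda_h\Delta')}$, which is $>\phi_c$ exactly under the hypothesis $\phi_c\lambda_a < \lambda_h/(1+\lambda_h\Delta')$ (using $\phi_c\ge 1$), while $\phi_c > 1+\tfrac{c-1}{n}$ is precisely the hypothesis $n>\tfrac{c-1}{\phi_c-1}$; hence a sufficiently small $w$ — and in particular the optimal quadratic root — gives $A_1 n + A_2>0$. I expect the main obstacle to be the bookkeeping of this last step: confirming that the quadratic root really lies in $(0,\lambda_h)$ (this is the source of the messy closed form), and carefully controlling the order-$(c-1)$ finite-$n$ corrections so that the clean small-$w$ asymptotics above certify positivity at the chosen $w$ and not merely in the limit. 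The rest is a routine exponential-moment computation for sums of independent exponentials.
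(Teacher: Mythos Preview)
Your approach is essentially the same as the paper's: a Chernoff/Markov bound using independence of the $Y_d\sim\Exp(\lambda_h)$ and $\delta^a_i\sim\Exp(\lambda_a)$, followed by optimizing the tilting parameter $w$ via a quadratic. One small inconsistency to clean up: you discard the surplus factor $\lambda_a/(\lambda_a+w)$ to reach $e^{-A_1n-A_2}$, but then write the stationarity condition with the coefficient $(n+1)$ rather than $n$ on the $\ln\tfrac{\lambda_a}{\lambda_a+w}$ term; the closed-form $w$ displayed in the statement is precisely the root of the quadratic obtained with coefficient $n$ (i.e.\ optimizing after the discard), so use $n$ there to match. Your positivity argument via the small-$w$ limit (ratio $\to \lambda_h/[\lambda_a(1+\lambda_h\Delta')]>\phi_c>1+\tfrac{c-1}{n}$) is in fact more explicit than what the paper records, which simply asserts $w>0$ and $A_1+\tfrac{A_2}{n}>0$ from $\lambda_a(1+\tfrac{c-1}{n})<\phi_c\lambda_a<\lambda_h/(1+\lambda_h\Delta')$ without further detail; so the ``main obstacle'' you anticipate is not handled any more carefully in the paper than in your sketch.
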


\begin{proof}
Using Chebychev inequality and proposition~\ref{prop:min-rate}, for any $t>0$, we have 
\begin{align*}
    P(B_n) &\leq E\left[\prod_{j=0}^n e^{-w\delta_i^a}\right]E\left[\prod_{j=1}^{n+c-1} e^{wX_i}\right] \\
    &\leq \left[\frac{\lambda_a}{\lambda_a+w}\right]^{n}\left[\frac{e^{w\Delta'}\lambda_h}{\lambda_h-w}\right]^{n+c-1}\\
    &= e^{-n\left[-\left(\frac{n+c-1}{n}\right)w\Delta' + \left(\frac{n+c-1}{n}\right)\ln\left(\frac{\lambda_h-w}{\lambda_h}\right) + \ln{\left(\frac{\lambda_a+w}{\lambda_a}\right)}\right]}
\end{align*}
Optimizing over $w$, we have 
\begin{align*}
    &\frac{d}{dw} \left[-\left(\frac{n+c-1}{n}\right)w\Delta' + \left(\frac{n+c-1}{n}\right)\ln\left(\frac{\lambda_h-w}{\lambda_h}\right) + \ln{\left(\frac{\lambda_a+w}{\lambda_a}\right)}\right] = 0 \\
    &(n+c-1)\Delta' w^2 + [(n+c-1)\Delta'(\lambda_a - \lambda_h) - (2n+c-1)]w \\
    &\hspace{3cm}+ [n\lambda_h - (n+c-1)\lambda_a - (n+c-1)\Delta'\lambda_a\lambda_h] = 0\\
    &w = \frac{\lambda_h - \lambda_a}{2} + \frac{2n+c-1}{2(n+c-1)\Delta'} -\\
    &\quad \frac{\sqrt{[(n+c-1)\Delta'(\lambda_a-\lambda_h)]^2 + (2n+c-1)^2+2(n+c-1)\Delta'[(c-1)(\lambda_a+\lambda_h)+2(n+c-1)\Delta'\lambda_a\lambda_h]}}{2(n+c-1)\Delta'}
\end{align*}
Note that for $n > \frac{c-1}{\phi_c-1}$, we have $\lambda_a\left(1 + \frac{c-1}{n}\right) < \phi_c \lambda_h < \frac{\lambda_h}{1+\Delta'\lambda_h}$. That implies $w > 0$.

Also, using $n > \frac{c-1}{\phi_c-1}$, we have 
\begin{align*}
    -\left(\frac{n+c-1}{n}\right)w\Delta' + \left(\frac{n+c-1}{n}\right)\ln\left(\frac{\lambda_h-w}{\lambda_h}\right) + \ln{\left(\frac{\lambda_a+w}{\lambda_a}\right)} = A_1 + \frac{A_2}{n} > 0
\end{align*}

\end{proof}

\begin{lemma}
\label{lem:PBik_pow}
For $k-i > \frac{\lambda_h(c-1)}{\lambda_a(\phi_c-1)}$, there exists a constant $\gamma>0$ such that
\begin{align}
\label{eqn:PBik}
P(\hat{B}_{ik}) \leq e^{-\gamma (k-i)}
\end{align}
\end{lemma}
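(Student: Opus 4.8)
Write $n = k-i$ and $m^\star = D_h(\tau^h_{k-1}) - D_h(\tau^h_i+\Delta')$. Unfolding \eqref{eqn:hat_Bik}, the event $\hat B_{ik}$ says that the adversarial tree $\hat{\T}_i$, born at time $\tau^h_i$, reaches depth at least $m^\star$ by time $\tau^h_k+\Delta' = \tau^h_i + \sum_{m=i}^{k-1} R_m + \Delta'$. The plan is to show that, once $k-i$ exceeds the stated threshold, this forces a large deviation: the fictitious honest tree $\T_h$ grows at rate $\lambda_h/(1+\lambda_h\Delta')$ (Proposition~\ref{prop:min-rate}), $\hat{\T}_i$ grows only at rate $\phi_c\lambda_a$ (Lemma~\ref{lem:step4-lemma} and Proposition~\ref{prop:prop-c}, the $c-1$ gifted blocks already being baked into the $ss2$ model), and $\phi_c\lambda_a < \lambda_h/(1+\lambda_h\Delta')$ by hypothesis, so the honest lead accumulated over a window of $\Theta(n)$ honest arrivals is linear in $n$.

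\textbf{First I would pin down the window and the honest growth using only the honest arrival process.} By Remark~\ref{rem:R_m} the increments $R_i,\dots,R_{k-1}$ are i.i.d.\ $\mathrm{Exp}(\lambda_h)$, so a Chernoff bound (as in Proposition~\ref{prop:bound-1}) gives $\sum_{m=i}^{k-1} R_m \le (1+\delta)n/\lambda_h$ outside an event of probability $e^{-\Omega(\delta^2 n)}$. For the honest side, setting $a = D_h(\tau^h_i+\Delta')$, from $\sum_{j\le a}X_j \le \tau^h_i+\Delta' < \sum_{j\le a+1}X_j$ and $\sum_{j\le a+m^\star}X_j \le \tau^h_{k-1} < \sum_{j\le a+m^\star+1}X_j$, together with $X_j = \Delta' + Y_j$ and the $Y_j$ i.i.d.\ $\mathrm{Exp}(\lambda_h)$ (Proposition~\ref{prop:min-rate}), I would derive
\[
\sum_{m=i}^{k-2} R_m \;<\; (m^\star+2)\Delta' \;+\; \sum_{j=a+1}^{a+m^\star+1} Y_j ;
\]
since the $Y_j$ are independent of the $R_m$, and for any fixed $\beta < 1/(1+\lambda_h\Delta')$ the right side with $m^\star < \beta n$ has mean strictly below the mean $n/\lambda_h$ of the left side, a further Chernoff bound gives $m^\star \ge \beta n$ outside an event of probability $e^{-\Omega(n)}$ (the random shift $a$ handled by a union bound over its $O(n)$ relevant values, or by the strong Markov property). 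Equivalently one may invoke Propositions~\ref{prop:bound-1}--\ref{prop:bound-2} to convert the $n$ honest arrivals of the window into $\Theta(n)$ honest fictitious-tree levels and into at most $\Theta(n)\lambda_a/\lambda_h$ adversarial $\rvdfeval$ successes.

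\textbf{Then I would close with the adversarial tail bound.} Fixing $\beta$ with $\phi_c\lambda_a/\lambda_h < \beta < 1/(1+\lambda_h\Delta')$ (possible precisely because $\phi_c\lambda_a < \lambda_h/(1+\lambda_h\Delta')$) and $\delta$ small enough that still $\beta > \phi_c\lambda_a(1+\delta)/\lambda_h$, on the good event of the previous step $\hat B_{ik}$ implies that $\hat{\T}_i$ reaches depth at least $\beta n$ within time $t' := (1+\delta)n/\lambda_h + \Delta'$. Conditioning on the honest arrival process makes $\hat{\T}_i$ independent, so Lemma~\ref{lem:step4-lemma} applies with $\phi_c\lambda_a t' + cx = \beta n$, i.e.\ $cx = \beta n - \phi_c\lambda_a t' = \Omega(n)$; since $g(t')$ is bounded by \eqref{eq:bound-g-t} and $\Lambda_c(\theta_c^*) < 0$, the bound \eqref{eq:c-tail} is then $\le e^{-\Omega(n)}$. (Alternatively, rescaling the adversary clock by $\lambda_a/\lambda_h$ — one $\rvdfeval$ arrival per $\lambda_h/\lambda_a$ honest arrivals — and applying Proposition~\ref{prop:bound-3} with parameter $\lfloor\beta n\rfloor$ turns its hypothesis $n > \tfrac{c-1}{\phi_c-1}$ into exactly $k-i > \tfrac{\lambda_h(c-1)}{\lambda_a(\phi_c-1)}$, which is the threshold in the statement and is what makes the relevant Chernoff exponent strictly positive.) Summing the $O(1)$ many $e^{-\Omega(n)}$ contributions yields $P(\hat B_{ik}) \le e^{-\gamma(k-i)}$ for a constant $\gamma > 0$.

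\textbf{The hard part} is the bookkeeping forced by the fact that the adversarial depth and the honest depth must be compared over one and the same \emph{random} window $[\tau^h_i,\tau^h_k+\Delta']$, whereas the available estimates (Lemma~\ref{lem:step4-lemma}, Propositions~\ref{prop:bound-1}--\ref{prop:bound-3}) are stated for deterministic times. The resolution is exactly the two-stage structure above: use only the honest process to pin the window length and the honest growth to their typical $\Theta(n)$ values, then invoke the independent adversarial estimate at the resulting deterministic (depth, time) pair; carrying the $c-1$ gifted blocks through this comparison is precisely what restricts the conclusion to $k-i > \tfrac{\lambda_h(c-1)}{\lambda_a(\phi_c-1)}$.
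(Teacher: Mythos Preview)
Your proposal is correct, and your parenthetical alternative (via Proposition~\ref{prop:bound-3} after rescaling) is essentially the paper's proof. Your primary route, however, is genuinely different. The paper does not invoke the branching-random-walk tail bound (Lemma~\ref{lem:step4-lemma}) in this lemma at all; instead it uses the crude deterministic inequality $D_i(\tau^h_k+\Delta') \le N(\tau^h_i,\tau^h_k+\Delta') + (c-1)$, where $N$ is the raw count of adversarial $\rvdfeval$ successes in the window, and then shows via Proposition~\ref{prop:bound-2} and Proposition~\ref{prop:bound-3} that $N+(c-1)$ is unlikely to exceed the honest growth $D_h(\tau^h_{k-1})-D_h(\tau^h_i+\Delta')$. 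This is exactly where the threshold $k-i > \lambda_h(c-1)/(\lambda_a(\phi_c-1))$ enters: Proposition~\ref{prop:bound-3} needs at least $(c-1)/(\phi_c-1)$ adversarial arrivals, and after the rescaling $N \approx (k-i)\lambda_a/\lambda_h$ this becomes the stated condition, as you correctly identify. Your primary route instead pins the window length and honest growth first and then feeds the resulting deterministic pair into Lemma~\ref{lem:step4-lemma}; this is sharper (it uses the actual tree-depth distribution rather than the trivial depth $\le$ node-count bound) and would in fact yield $e^{-\gamma(k-i)}$ for all sufficiently large $k-i$ without needing the specific threshold in the statement. The paper's route is shorter because it avoids re-opening the BRW machinery, at the cost of a coarser intermediate estimate that happens to suffice here.
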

\begin{proof}

Let $N(\tau^h_i, \tau^h_k + \Delta')$ be the number of adversarial arrivals  for which $\rvdfeval$ in was computed in $ss2$ in the interval $[\tau^h_i, \tau^h_k + \Delta']$. Define
$$
\hat{C}_{ik} = \text{ event that } N(\tau^h_i, \tau^h_k + \Delta')  + (c-1) \geq D_h(\tau^h_{k-1}) - D_h(\tau^h_i + \Delta')
$$
Observe that $D_i(\tau^h_i, \tau^h_k + \Delta') \leq N(\tau^h_i, \tau^h_k + \Delta') + (c-1) $, where $c-1$ is due to the fact that blocks in first $c-1$ levels are gifted to the adversary on proposing the first block in the adversarial tree. Note that $\rvdfeval$ was not computed by the adversary for these $c-1$ blocks. Then, we have $$\hat{B}_{ik} \subseteq \hat{C}_{ik}.$$
\begin{eqnarray}
P(\hat{B}_{ik}) &\leq& P\left(N(\tau^h_i, \tau^h_k + \Delta') < (1-\delta)(k-i)\frac{\lambda_a}{\lambda_h}\right) \nonumber\\
&\quad+& P\left(\hat{C}_{ik} \mid N(\tau^h_i, \tau^h_k + \Delta') \ge (1-\delta)(k-i)\frac{\lambda_a}{\lambda_h} \right) \nonumber \\
&\stackrel{(a)}{\leq}& e^{-\Omega((k-i)\delta^2\lambda_a/\lambda_h)} + P\left(\hat{C}_{ik} \mid N_a(\tau^h_i, \tau^h_k + \Delta') \ge (1-\delta)(k-i)\frac{\lambda_a}{\lambda_h} \right) \nonumber\\
&\stackrel{(b)}{\leq}& e^{-\Omega((k-i)\delta^2\lambda_a/\lambda_h)} + \nonumber \\
&\quad& \sum_{x = (1-\delta)(k-i)\frac{\lambda_a}{\lambda_h}}^{\infty} P\left(D_h(\tau^h_{k-1}) - D_h(\tau^h_i+\Delta') \leq x + c-1 \mid N_a(\tau^h_i, \tau^h_k + \Delta') = x\right) \nonumber \\
&\stackrel{(c)}{=}& e^{-\Omega((k-i)\delta^2\lambda_a/\lambda_h)} + \sum_{x = (1-\delta)(k-i)\frac{\lambda_a}{\lambda_h}}^{\infty}e^{-A_1x}e^{-A_2}\nonumber \\
&\stackrel{(d)}{=}& e^{-\Omega((k-i)\delta^2\lambda_a/\lambda_h)} + e^{-A_2}\frac{1}{1-e^{-A_3}}e^{-A_3(k-i)}\nonumber
\end{eqnarray}
where $(a)$ is due to proposition \ref{prop:bound-2} which says that there are more than $(1-\delta)(k-i)\lambda_a/\lambda_h$  adversarial arrival events in the time period $[\tau^h_i,\tau^h_{k}+\Delta']$  except with probability $e^{-\Omega((k-i)\delta^2\lambda_a/\lambda_h)}$, $(b)$ is by union bound, $(c)$ is by proposition \ref{prop:bound-3} for $k-i > \frac{\lambda_h(c-1)}{\lambda_a(\phi_c-1)}$, (d) is due to $A_3 = 
\frac{A_1(1-\delta)\lambda_a}{\lambda_h}$.

Hence,
\begin{align}
    P(\hat{B}_{ik}) < C_1 e^{-C_2(k-i)} 
\end{align}
for appropriately chosen constants $C_1,C_2,>0$ as functions of the fixed $\delta$. Finally, since $P(\hat{B}_{ik})$ decreases as $k-i$ grows and is smaller than $1$ for sufficiently large $k-i$, we obtain the desired inequality for a sufficiently small $\gamma\leq C_3$.

\end{proof}

\subsection{Proof of Lemma~\ref{lem:step5-lemma-1}}
\label{sec:nakamoto_block}
For notational convenience, we will continue to use $\tau_i^h$ and $\tau_i^a$ as the arrival time of the $i-th$ honest and adversarial blocks in the static system $ss2$, respectively. In this proof, let $r_h := \frac{\lambda_h}{1+\lambda_{h} \Delta'}$. The random processes of interest start from time $0$. To look at the system in stationarity, let us extend them to $-\infty < t' < \infty$. More specifically, define $\tau^h_{-1}, \tau^h_{-2}, \ldots$ such that together with $\tau^h_0, \tau^h_1, \ldots$, we have a double-sided infinite random process. Also, for each $i < 0$, we define an independent copy of a random adversary tree $ \hat{\tt}_i$ with the same distribution as $\hat{\tt}_0$.  And we extend the definition of $\hat{\T}_h(t')$ and $D_h(t')$ to $t' < 0$: the last honest block mined at $\tau^h_{-1} < 0$ and all honest blocks mined within $(\tau^h_{-1}-\Delta',\tau^h_{-1})$ appear in $\hat{\T}_h(t')$ at their respective mining times to form the level $-1$, and the process repeats for level less than $-1$; let $D_h(t')$ be the level of the last honest arrival before $t'$ in $\hat{\T}_h(t')$, i.e., $D_h(t') = \ell$ if $\tau^h_i \leq t' < \tau^h_{i+1}$ and the $i$-th honest block appears at level $\ell$ of $\hat{\T}_h(t)$. 

These extensions allow us to extend the definition of $E_{ij}$ to all $i,j$, $-\infty < i< j < \infty$, and define $E_j$ and $\hat{E}_j$ to be:
$$ E_j = \bigcap_{i < j} E_{ij}$$ and 
$$ \hat{E}_j = E_j \cap V_j.$$

Note that $\hat{E}_j \subset \hat{F}_j$, so to prove that $\hat{F}_j$ has a probability bounded away from $0$ for all $j$, all we need is to prove that $\hat{E}_j$ has a non-zero probability.

Recall that we have defined the events $V_j$ and $\hat{B}_{ik}$ in section~\ref{sec:preliminary-lemmas} of the appendix as:
\begin{align*}
    V_j &= \{\Delta' < R_{j-1}  \} \bigcap \{R_{j} > \Delta'\}\\
    \hat{B}_{ik} &=  \Bigg\{D_i(\tau^h_i+ \sum_{m=i}^{k-1}R_m + \Delta')  \ge  D_h(\tau^h_{k-1}) - D_h(\tau^h_i + \Delta')\Bigg\} \nonumber 
\end{align*}
where $R_m$ are i.i.d exponential random variable with mean $\frac{1}{\lambda_h}$. 

Following the idea in Lemma~\ref{lem:F_j} and using Lemma~\ref{lem:event-reduction-1} and \ref{lem:event-reduction-2}, we have 
\begin{align*}
  P(E_j \cap V_j) = P\left(\bigcap_{i < j} E_{ij} \cap V_j\right) =  P\left(\left(\bigcap_{i < j <k} \hat{B}_{ik}^c\right) \cap U_j\right).  
\end{align*}
where $E_j = \bigcap_{i < j <k} \hat{B}_{ik}^c$ and $\hat{E}_j = E_j \cap U_j$. So, we just need to prove that $\hat{E}_j$ has a non-zero probability. Observe that, due to constant adversarial and honest mining rate  and the growth rate of the adversarial tree being independent of level of its root in the static system $ss2$, $\hat{E}_j $ has a time-invariant dependence on $\{\zz_i\}$, which means that $p = P(\hat{E}_j )$ does not depend on $j$. Then we can just focus on $P(\hat{E}_0)$. This is the last step to prove.
\begin{eqnarray*}
P(\hat{E}_0) &=& P(E_0|U_0)P(U_0)  \\
 &=& P(E_0|U_0)P(R_0>\Delta')P(R_{-1}>\Delta') \\
 &=& e^{-2\lambda_h \Delta'} P(E_0|U_0).
\end{eqnarray*}
where we used Remark~\ref{rem:R_m} in the last step. It remains to show that $P(E_0|U_0)>0$. We have
\begin{eqnarray*}
    E_0 &=& \mbox{event that } D_i(\sum_{m = i}^{k-1} R_m + \Delta' + \tau^h_i)  < D_h(\tau^h_{k-1}) - D_h(\tau^h_i+\Delta')\\
    && \;\;\;\;\;\;\;\;\;\;\;\; \mbox{for all $k > 0$ and $i < 0$},
\end{eqnarray*}
then
\begin{equation}
    (E_0)^c = \bigcup_{k >0, i < 0}  \hat{B}_{ik}.
\end{equation}

Let us fix a particular $n > 2\lambda_h\Delta' > 0$, and define:

\begin{eqnarray*}
    G_n &=& \mbox{event that} D_m(3n/\lambda_h + \zeta^h_m) = 0 \\
    && \;\;\;\;\;\;\;\;\;\;\;\; \mbox{for $m = -n, -n+1, \ldots, -1, 0, +1, \ldots, n-1,n$}
\end{eqnarray*}

Then 
\begin{eqnarray}
P(E_0 | U_0) & \ge & P(E_0|U_0,G_n)P(G_n|U_0) \nonumber\\
& = & \left ( 1 - P(\cup_{k>0,i<0} \hat{B}_{ik}|U_0,G_n) \right) P(G_n|U_0)\nonumber\\
& \ge & \left ( 1 - \sum_{k>0,i<0} P(\hat{B}_{ik}|U_0,G_n) \right) P(G_n|U_0) \nonumber\\
& \ge &  ( 1 - a_n - b_n) P(G_n|U_0) \label{eqn:up_bound}
\end{eqnarray}
where
\begin{eqnarray}
a_n & := & \sum_{(i,k): -n \le i < 0 < k \le n} P(\hat{B}_{ik}|U_0,G_n)\\
b_n & := & \sum_{(i,k):  i < -n \text{~or~} k > n }P(\hat{B}_{ik}|U_0,G_n).
\end{eqnarray}

\noindent Consider two cases:

\noindent
{\bf  Case 1:} $-n \le i <0 <  k \le n$:
\begin{eqnarray*}
P(\hat{B}_{ik}|U_0,G_n) &=& P(\hat{B}_{ik}|U_0, G_n,\sum_{m = i}^{k-1} R_m + \Delta' \leq 3n/\lambda_h) \\
&& \;+\;P(\sum_{m = i}^{k-1} R_m + \Delta' > 3n/\lambda_h |U_0,G_n) \\
& \leq & P(\sum_{m = i}^{k-1} R_m + \Delta' > 3n/\lambda_h |U_0,G_n) \\
& \leq & P(\sum_{m = i}^{k-1} R_m > 5n/(2\lambda_h) |U_0) \\
& \leq & P(\sum_{m = i}^{k-1} R_m  > 5n/(2\lambda_h))/P(U_0)  \\
& \leq & A_5 e^{-\gamma_1 n} 
\end{eqnarray*}
for some positive  constants $A_5, \gamma_1$ independent of $n,k,i$. The last inequality follows from the fact that $R_i$'s are iid 
exponential random variables of  mean $1/\lambda_h$. Summing these terms, we have:
\begin{eqnarray*}
a_n & = & \sum_{(i,k): -n \le i < 0 < k \le n} P(B_{ik}|U_0, G_n) \\
& \leq & \sum_{(i,k): -n \le i < 0 < k \le n} A_5 e^{-\alpha_1 n} : = \bar{a}_n,
\end{eqnarray*}
which is bounded and moreover $\bar{a}_n\rightarrow 0$ as $n \rightarrow \infty$.

\noindent {\bf  Case 2:} $k>n \text{~or~} i<-n$:

For $0<\varepsilon<1$, let us define event $W^{\varepsilon}_{ik}$ to be:
\begin{equation}
\label{eq:honest_growth}
    W^{\varepsilon}_{ik} = \mbox{event that $D_h(\zeta^h_{k-1}) - D_h(\zeta^h_i+\Delta') \geq (1-\varepsilon)\frac{r_h}{\lambda_h}(k-i-1)$}.
\end{equation}
Then we have
\begin{equation*}
    P(\hat{B}_{ik}|U_0,G_n) \leq P(\hat{B}_{ik}|U_0, G_n,W^{\varepsilon}_{ik}) + P({W^{\varepsilon}_{ik}}^c |U_0,G_n).
\end{equation*}

We first bound $P({W^{\varepsilon}_{ik}}^c |U_0,G_n)$:
\begin{eqnarray}
P({W^{\varepsilon}_{ik}}^c |U_0,G_n) &\leq& P({W^{\varepsilon}_{ik}}^c | \zeta^h_{k-1} - \zeta^h_i - \Delta'> \frac{k-i-1}{(1+\varepsilon)\lambda_h}) \nonumber \\
&& \;+\;P(\zeta^h_{k-1} - \zeta^h_i - \Delta' \leq \frac{k-i-1}{(1+\varepsilon)\lambda_h}) \nonumber \\
& \leq & P({W^{\varepsilon}_{ik}}^c | \zeta^h_{k-1} - \zeta^h_i - \Delta' > \frac{k-i-1}{(1+\varepsilon)\lambda_h}) +e^{-\Omega(\varepsilon^2 (k-i-1))} \nonumber \\
& \leq & e^{-\Omega(\varepsilon^4 (k-i-1))} + e^{-\Omega(\varepsilon^2 (k-i-1))}  \nonumber \\
& \leq & A_6 e^{-\gamma_2 (k-i-1)}
\label{eqn:prob_event}
\end{eqnarray}
for some positive  constants $A_6, \gamma_2$ independent of $n,k,i$, where the second inequality follows from the Erlang tail bound ( as $\zeta^h_{k-1} - \zeta^h_i$ is sum of IID exponentials due to time-warping) and the third inequality follows from Proposition \ref{prop:bound-1}.

Meanwhile, we have 
\begin{eqnarray*}
&~&P(\hat{B}_{ik}|U_0, G_n,W^{\varepsilon}_{ik})  \\
&\leq& P(D_i(\sum_{m = i}^{k-1}R_m + \Delta' + \zeta^h_i) \geq (1-\varepsilon)\frac{r_h}{\lambda_h}(k-i-1) | U_0,G_n,W^{\varepsilon}_{ik})  \\
& \leq & P(D_i(\sum_{m = i}^{k-1}R_m + \Delta' + \zeta^h_i) \geq (1-\varepsilon)\frac{r_h}{\lambda_h}(k-i-1) \\
&&\;\;\;\;\;\;| U_0,G_n,W^{\varepsilon}_{ik},\sum_{m = i}^{k-1}R_m + \Delta' \leq (k-i-1)\frac{r_h+\phi_c\lambda_a }{2\phi_c\lambda_a}\frac{1}{\lambda_h}) \\
&& \;+\;P(\sum_{m = i}^{k-1}R_m + \Delta'  > (k-i-1)\frac{r_h+\phi_c\lambda_a }{2\phi_c\lambda_a}\frac{1}{\lambda_h}|U_0,G_n,W^{\varepsilon}_{ik}) \\
&\stackrel{(a)}{\leq}& P(\sum_{m = i}^{k-1}R_m + \Delta' > (k-i-1)\frac{r_h+\phi_c\lambda_a }{2\phi_c\lambda_a}\frac{1}{\lambda_h}|U_0,G_n,W^{\varepsilon}_{ik})\\
&& \;+\; e^{-\theta_c^*(k-i-1)\frac{r_h+\phi_c\lambda_a }{2\phi_c\lambda_a}\frac{1}{\lambda_h} + \left(\frac{(1-\varepsilon)}{c}\frac{r_h}{\lambda_h}(k-i-1)-1\right)\Lambda_c(\theta_c^*)}g\left((k-i-1)\frac{r_h+\phi_c\lambda_a }{2\phi_c\lambda_a}\frac{1}{\lambda_h}\right)\\
&\stackrel{(b)}{=}& P(\sum_{m = i}^{k-1}R_m + \Delta' > (k-i-1)\frac{r_h+\phi_c\lambda_a }{2\phi_c\lambda_a}\frac{1}{\lambda_h}|U_0,G_n,W^{\varepsilon}_{ik})\\
&& \;+\; e^{-\theta_c^*\frac{k-i-1}{\lambda_h}\left[\frac{r_h+\phi_c\lambda_a }{2\phi_c\lambda_a}- (1-\varepsilon)\frac{r_h}{\phi_c \lambda_a}\right]}e^{-\Lambda_c(\theta_c^*)}g\left((k-i-1)\frac{r_h+\phi_c\lambda_a }{2\phi_c\lambda_a}\frac{1}{\lambda_h}\right)
\end{eqnarray*}
where $(a)$ follows from Lemma~\ref{lem:step4-lemma}, 
$(b)$ follows from $\frac{\Lambda_c(\theta^*_c)}{\theta^*_c} = \frac{1}{\lambda_a \eta_c} = \frac{c}{\phi_c\lambda_a}$. The first term can be bounded as:
\begin{eqnarray*}
&~&P(\sum_{m = i}^{k-1}R_m + \Delta' > (k-i-1)\frac{r_h+\phi_c\lambda_a}{2\phi_c\lambda_a }\frac{1}{\lambda_h}|U_0,G_n,W^{\varepsilon}_{ik})\\
&=& P(\sum_{m = i}^{k-1}R_m + \Delta' > (k-i-1)\frac{r_h+\phi_c\lambda_a}{2\phi_c\lambda_a}\frac{1}{\lambda_h}|U_0, W^{\varepsilon}_{ik}) \\
&\leq& P(\sum_{m = i}^{k-1}R_m + \Delta' > (k-i-1)\frac{r_h+\phi_c\lambda_a}{2\phi_c\lambda_a}\frac{1}{\lambda_h})/P(U_0, W^{\varepsilon}_{ik}) \\
&\leq& A_7 e^{-\gamma_3(k-i-1)}
\end{eqnarray*}
for some positive  constants $A_7, \gamma_3$ independent of $n,k,i$. The last inequality follows from the fact that $(r_h+\phi_c\lambda_a)/(2\phi_c\lambda_a) > 1$ and the $R_i$'s have mean $1/\lambda_h$, while $P(U_0, W^{\varepsilon}_{ik})$ is a event with high probability as we showed in (\ref{eqn:prob_event}).
Then we have 
\begin{eqnarray}
\label{eqn:B_hat_bound}
     &&P(\hat{B}_{ik}|U_0,G_n) \nonumber \\
     & \leq& A_6 e^{-\alpha_2 (k-i-1)} \nonumber \\
     &+& e^{-\theta^*_c(k-i-1)\frac{r_h(1-\varepsilon)}{\lambda_h\phi_c\lambda_a}\left[\frac{r_h + \phi_c \lambda_a}{2(1-\epsilon)r_h} - 1\right]}e^{-\Lambda_c(\theta_c^*)}g((k-i-1)\frac{r_h+\phi_c\lambda_a }{2\phi_c\lambda_a}\frac{1}{\lambda_h}) \nonumber \\
     &+& A_7 e^{-\gamma_3(k-i-1)}.
\end{eqnarray}
Summing these terms, we have:
\begin{eqnarray*}
b_n & = & \sum_{(i,k):  i<-n \text{~or~} k > n}P(\hat{B}_{ik}|U_0,G_n) \\
& \le  &  \sum_{ (i,k): i<-n \text{~or~} k > n}  [A_6 e^{-\alpha_2 (k-i-1)} \\
&& \;\;+\; e^{-\theta^*_c(k-i-1)\frac{r_h(1-\varepsilon)}{\lambda_h\phi_c\lambda_a}\left[\frac{r_h + \phi_c \lambda_a}{2(1-\epsilon)r_h} - 1\right]}e^{-\Lambda_c(\theta_c^*)}g\left((k-i-1)\frac{r_h+\phi_c\lambda_a }{2\phi_c\lambda_a}\frac{1}{\lambda_h}\right) \\
&&\;\;+\; A_7 e^{-\gamma_3(k-i-1)}] \\
&:=& \bar{b}_n
\end{eqnarray*}
Here, from (\ref{eq:bound-g-t}), $g(.) \rightarrow \frac{\lambda_a}{- \theta_c^{\star}}$ as $n \rightarrow  \infty$. Therefore, $\bar{b}_n$ is bounded and moreover $\bar{b}_n \rightarrow 0$ as $n \rightarrow \infty$ when we set $\varepsilon$ to be small enough such that $\frac{r_h+\phi_c\lambda_a }{2(1-\varepsilon)r_h}<1$.

Substituting these bounds in (\ref{eqn:up_bound}) we finally get:
\begin{equation}
    P(E_0|U_0) > [1- (\bar{a}_n+ \bar{b}_n)]P(G_n|U_0)
\end{equation}
By setting $n$ sufficiently large such that $\bar{a}_n$ and $\bar{b}_n$ are sufficiently small, we conclude that $P(\hat{E}_0)> 0$.

\subsection{Proof of Lemma~\ref{lem:step5-lemma-2}}
\label{sec:B_s_s_plus_t}

We divide the proof in to two steps. In the first step, we prove for $\varepsilon = 1/2$.
Recall that we have defined event $\hat{B}_{ik}$ as:
\begin{equation*}
    \hat{B}_{ik} = \mbox{event that $D_i(\sum_{m = i}^{k-1} R_m + \Delta' + \zeta^h_i)  \ge  D_h(\zeta^h_{k-1}) - D_h(\zeta^h_i+\Delta')$}.
\end{equation*}


And by Lemma~\ref{lem:event-reduction-1},~\ref{lem:event-reduction-2},~\ref{lem:F_j},  we have
\begin{equation}
    \hat{F}_j^c = F_j^c \cup V_j^c = \left(\bigcup_{(i,k): i < j <k} \hat{B}_{ik}\right) \cup V_j^c.
\end{equation}

For $t' > \max\left\{\left(\frac{2\lambda_h}{1-\eta}\right)^2\left(\frac{c-1}{\phi_c-1}\right)^2, \left[(c-1)\left(\Delta' + \frac{1}{\lambda_{\min}}\right)\right]^2\right\}$, we have $\frac{\sqrt{t'}}{2\lambda_h} > \frac{\lambda_h}{\lambda_a}\left(\frac{c-1}{\phi_c-1}\right)$ and $\sqrt{t'} > (c-1)\left(\Delta' + \frac{1}{\lambda_{\min}}\right)$.

Divide $[s',s'+t']$ into $\sqrt{t'}$ sub-intervals of length $\sqrt{t'}$, so that the $r$ th sub-interval is:
$$\J_r : = [s'+  (r-1) \sqrt{t'}, s'+ r\sqrt{t'}].$$
Now look at the first, fourth, seventh, etc sub-intervals, i.e. all the $r = 1 \mod 3$ sub-intervals. Introduce the event that in the $\ell$-th $1 \mod 3$th sub-interval,
an adversary tree that is rooted at a honest block arriving in that sub-interval or in the previous ($0 \mod 3$) sub-interval catches up with a honest block in that sub-interval or in the next ($2 \mod 3$) sub-interval. 
Formally,
$$C_{\ell}=\bigcap_{j: \zeta^h_j \in \J_{3\ell+1}}
U_j^c \cup \left(\bigcup_{(i,k): \zeta^h_j - \sqrt{t'} < \zeta^h_i < \zeta^h_j, \zeta^h_j < \zeta^h_k +\Delta' < \zeta^h_j +\sqrt{t'} } \hat{B}_{ik} \right).$$
We have
\begin{equation}
    \label{eqn:qq2}
    P(C_{\ell})\leq P(\mbox{no arrival in $\J_{3\ell+1}$}) + 1-p < 1
\end{equation}
for large enough $t'$, where $p$ is a uniform lower bound such that $P(\hat{F}_j) \ge p$ for all $j$. 
Also, we define the following event:
$$
\hat{C}_\ell = \text{ event that the honest fictitious tree grows by $c-1$ levels in sub-interval } \J_{3\ell+2}
$$
Observe that because of $\rs$ being updated at each epoch beginning, for distinct $\ell$, the events $C_\ell \bigcap \hat{C}_\ell$  are independent. Using Chernoff bounds, for $\sqrt{t'} > (c-1)\left(\Delta' + \frac{1}{\lambda_{\min}}\right)$, we have $P(\hat{C}_\ell) \geq 1 - e^{-c_2\sqrt{t'}}$.

Introduce the atypical events:
\begin{eqnarray}
    B &=& \bigcup_{(i,k): \zeta^h_i \in [s',s'+t'] \mbox{~or~} \zeta^h_k + \Delta' \in [s',s'+t'], i < k, \zeta^h_k + \Delta' - \zeta^h_i >  \sqrt{t'}} \hat{B}_{ik}, \nonumber \\ \text{ and } \nonumber\\
    \tilde{B} &=& \bigcup_{(i,k):\zeta^h_i<s',s'+t'<\zeta^h_k+\Delta'} \hat{B}_{ik}. \nonumber
\end{eqnarray}
The events $B$ and $\tilde{B}$ are superset of the events  that an adversary tree catches up with an honest block far ahead. Then we have
\begin{align}
\label{eqn:qqq2}  & P(B^{static}_{s',s'+t'}) \leq P(\bigcap_{j: \zeta^h_j \in [s',s'+t']} U_j^c) + P(B)+P(\tilde B)+P(\bigcap_{\ell=0}^{\sqrt{t'}/3} C_{\ell})\nonumber \\
&\quad\quad \leq  P(\bigcap_{j: \zeta^h_j \in [s',s'+t']} U_j^c) + P(B)+P(\tilde B)+ P(\bigcup_{\ell=0}^{\sqrt{t'}/3} \hat{C}^c_\ell) + P(\bigcap_{\ell=0}^{\sqrt{t'}/3} C_{\ell}\cap \hat{C}_\ell)\nonumber \\
&\quad\quad \leq P(\bigcap_{j: \zeta^h_j \in [s',s'+t']} U_j^c) + P(B)+ P(\tilde B)  +  \sum_{\ell=0}^{\sqrt{t'}/3}P(\hat{C}^c_\ell) +(P(C_{\ell}\cap \hat{C}_\ell))^{\sqrt{t'}/3}\nonumber \\
&\quad\quad \leq  e^{-c_1 t'}+ P(B)+ P(\tilde B) + e^{-c_2 \sqrt{t'}} + (P(C_\ell))^{\frac{\sqrt{t'}}{3}}
\end{align}
for some positive constants $c_1, c_2$ when $t'$ is large. Next we will bound the atypical events $B$ and $\tilde{B}$. Consider the following events
\begin{eqnarray*}
D_1&=&\{\#\{i: \zeta^h_i\in (s'-\sqrt{t'}-\Delta',s'+t'+\sqrt{t'}+\Delta)\} >2\lambda_h t'\} \label{eqn:D1}\\
D_2&=&\{ \exists i,k: \zeta^h_i  \in (s',s'+t'), (k-i)<\frac{\sqrt{t'}}{2\lambda_h}, \zeta^h_k-\zeta^h_i+\Delta'>\sqrt{t'}\} \label{eqn:D2}\\
D_3&=&\{ \exists i,k: \zeta^h_k+\Delta \in (s',s'+t'), (k-i)<\frac{\sqrt{t'}}{2\lambda_h}, \zeta^h_k-\zeta^h_i+\Delta'>\sqrt{t}\} \label{eqn:D3}
\end{eqnarray*}
In words, $D_1$ is the event of atypically many honest arrivals in $(s'-\sqrt{t'}-\Delta',s'+t'+\sqrt{t'}+\Delta')$ while $D_2$ and $D_3$ are the events that there exists an interval of length $\sqrt{t'}$ with at least one endpoint inside $(s',s'+t')$ with atypically small number of arrivals. Since, by time-warping, the number of honest arrivals in $(s',s'+t')$ (in the local clock of the static system) is Poisson with parameter $\lambda_h t'$, we have from the memoryless property of the Poisson process that  $P(D_1)\leq e^{-c_0t'}$ for some constant $c_0=c_0(\lambda_a,\lambda_h)>0$ when $t'$ is large.  
On the other hand, using the memoryless property and a union bound, and decreasing $c_0$ if needed, we have that $P(D_2)\leq e^{-c_0 \sqrt{t'}}$. Similarly, using time reversal, $P(D_3)\leq e^{-c_0\sqrt{t'}}$.
Therefore, again using the memoryless property of the Poisson process,
\begin{eqnarray}
P(B)&\leq & P(D_1\cup D_2\cup D_3)+ P(B\cap D_1^c\cap D_2^c\cap D_3^c)\nonumber\\
&\leq & e^{-c_0 t'} + 2e^{-c_0\sqrt{t'}}+\sum_{i=1}^{2\lambda_h t'} \sum_{k: k-i>\sqrt{t'}/2\lambda_h} P(\hat{B}_{ik}) \\
&\leq & e^{-c_3\sqrt{t'}},
\label{eqn:PB}
\end{eqnarray}
for large $t'$, where $c_3>0$ are constants that may depend on $\lambda_a,\lambda_h$ and the last inequality is due to (\ref{eqn:PBik}). 
We next claim that there exists a constant $\alpha>0$ such that, for all $t'$ large,
\begin{equation}
    \label{eqn:PtB}
    P(\tilde B)\leq e^{- c_6 t'}.
\end{equation}
Consider the following event
$$
D_4 = \{ \exists i,k: \zeta^h_i<s',s'+t'<\zeta^h_k+\Delta', (k-i)<\frac{t'}{2\lambda_h}, \zeta^h_k-\zeta^h_i+\Delta'>t'\}.
$$
Using Poisson tail bounds, we can show that $P(D_4) \leq e^{-c_4t'}$. Now, we have 
\begin{eqnarray}
&&P(\tilde B) \leq P(D_4) + P(\tilde B \cap D^c_4)\nonumber\\
&\leq& e^{-c_4 t'} + \sum_{i,k:k-i >t'/2\lambda_h} \int_0^{s'} P(\zeta^h_i\in d\theta) P(\hat{B}_{ik}, \zeta^h_k-\zeta^h_i+\Delta'>s' + t'-\theta)\nonumber \\
&\leq & e^{-c_4 t'} + \sum_i \int_0^{s'} P(\zeta^h_i\in d\theta) \sum_{k:k-i>t'/2\lambda_h} P(\hat{B}_{ik})^{1/2} P(\zeta^h_k-\zeta^h_i+\Delta'>s' + t'-\theta)^{1/2}.\nonumber\\
\label{eqn:PtB1}
\end{eqnarray}
The tails of the Poisson distribution yield the existence of constants $c',c''>0$ so that
\begin{eqnarray}
    \label{eqn:Poisson_tail}
    &&P(\zeta^h_k-\zeta^h_i + \Delta'>s'+t'-\theta)\\
    &\leq& \left\{
    \begin{array}{ll}
    1,& (k-i)>c'(s'+t'-\theta-\Delta')\\
    e^{-c''(s'+t'-\theta-\Delta')},& (k-i)\leq c'(s'+t'-\theta-\Delta').
    \end{array}\right.
\end{eqnarray} 
(\ref{eqn:PBik}) and \eqref{eqn:Poisson_tail} yield that, for large enough $t'$, there exists a constant $c_5>0$ so that
\begin{equation}
    \label{eqn:PtB2}
    \sum_{k: k-i > t'/2\lambda_h} P(\hat{B}_{i,k})^{1/2}P(\zeta^h_k-\zeta^h_i>s'+t'-\theta-\Delta')^{1/2} \leq e^{-2c_5(s'+t'-\theta-\Delta')}.
\end{equation}
Substituting this bound in \eqref{eqn:PtB1} and using that $\sum_i P(\zeta^h_i\in d\theta)=d\theta$ gives
\begin{eqnarray}
\label{eqn:PtB3}
P(\tilde B)&\leq & e^{-c_4 t'} + 
\sum_{i} \int_0^{s'} 
P(\zeta^h_i\in d\theta) e^{-2c_5 (s'+t'-\theta-\Delta')}\nonumber\\
&\leq& e^{-c_4 t'} +  \int_0^{s'} e^{-2c_5(s'+t'-\theta-\Delta')} d\theta
\leq e^{-c_4 t'} + \frac{1}{2c_5} e^{-2c_5(t'-\Delta')} \nonumber \\
&\leq& e^{-c_6 t'},
\end{eqnarray}
for $t'$ large and $c_6 = \min(c_4,c_5)$, proving \eqref{eqn:PtB}.

Combining (\ref{eqn:PB}), (\ref{eqn:PtB3}) and (\ref{eqn:qqq2}) concludes the proof of step 1.

In step two, we prove for any $\varepsilon > 0$ by recursively applying the bootstrapping procedure in step 1.
Assume the following statement is true: for any $\theta \geq m$ there exist constants $\bar b_\theta,\bar A_\theta$ so that for all $s',t'\geq 0$,
\begin{equation}
\label{eqn:qst_basic}
\tilde{q}[s',s'+t'] \leq \bar A_\theta \exp(-\bar b_\theta t'^{1/\theta}).
\end{equation}
By step 1, it holds for $m = 2$. Also, for specific values of $m$ that 
we will consider, we will have $t'^{\frac{m}{2m-1}} > \sqrt{t'}$.

Divide $[s',s'+t']$ into $t'^{\frac{m-1}{2m-1}}$ sub-intervals of length $t'^{\frac{m}{2m-1}}$, so that the $r$ th sub-interval is:
$$\J_r : = [s'+  (r-1) t'^{\frac{m}{2m-1}}, s'+ rt'^{\frac{m}{2m-1}}].$$

Now look at the first, fourth, seventh, etc sub-intervals, i.e. all the $r = 1 \mod 3$ sub-intervals. Introduce the event that in the $\ell$-th $1 \mod 3$th sub-interval,
an adversary tree that is rooted at a honest block arriving in that sub-interval or in the previous ($0 \mod 3$) sub-interval catches up with a honest block in that sub-interval or in the next ($2 \mod 3$) sub-interval. 
Formally,
$$C_{\ell}=\bigcap_{j: \zeta^h_j \in \J_{3\ell+1}}
U_j^c \cup \left(\bigcup_{(i,k): \zeta^h_j - t'^{\frac{m}{2m-1}} < \zeta^h_i < \zeta^h_j, \zeta^h_j < \zeta^h_k +\Delta' < \zeta^h_j +t'^{\frac{m}{2m-1}} } \hat{B}_{ik} \right).$$
By (\ref{eqn:qst_basic}), we have
\begin{equation}
    \label{eqn:pcl}
    P(C_{\ell})\leq  A_m \exp(-\bar a_m t'^{\frac{1}{2m-1}}).
\end{equation}

Also, we define the following event:
$$
\hat{C}_\ell = \text{ event that the honest fictitious tree grows by $c-1$ levels in sub-interval } \J_{3\ell+2}
$$
Note that for distinct $\ell$, the events $C_\ell \bigcap \hat{C}_\ell$  are independent. Also, from Lemma~\ref{lem:step1-lemma-1}, assuming $t'^{\frac{m}{2m-1}} > \sqrt{t'} > (c-1)\left(\Delta' + \frac{1}{\lambda_{\min}}\right)$, we have $P(\hat{C}_\ell) \geq 1 - e^{-c_2t'^{\frac{m}{2m-1}}}$ for some positive constant $c_2$.

Introduce the atypical events:
\begin{eqnarray}
    B &=& \bigcup_{(i,k): \zeta^h_i \in [s',s'+t'] \mbox{~or~} \zeta^h_k + \Delta' \in [s',s'+t'], i < k, \zeta^h_k + \Delta' - \zeta^h_i >  t'^{\frac{m}{2m-1}}} \hat{B}_{ik}, \nonumber \\ \text{ and } \nonumber\\
    \tilde{B} &=& \bigcup_{(i,k):\zeta^h_i<s',s'+t'<\zeta^h_k+\Delta'} \hat{B}_{ik}.  \nonumber
\end{eqnarray}
The events $B$ and $\tilde{B}$ are the events  that an adversary tree catches up with an honest block far ahead. Following the calculations in step 1, we have
\begin{eqnarray}
    P(B) &\leq& e^{-c_3 t'^{\frac{m}{2m-1}}}\\
    P(\tilde{B}) &\leq& e^{- c_6 t'},
\end{eqnarray}
for large $t'$, where $c_1$ and $c_5$ are some positive constant.

Then we have
\begin{align*}
\label{eqn:qst_induc}
& \tilde{q}[s',s'+t'] \leq P(\bigcap_{j: \zeta^h_j \in [s',s'+t']} U_j^c) + P(B)+P(\tilde B)+P(\bigcap_{\ell=0}^{t'^{\frac{m-1}{2m-1}}/3} C_{\ell})\nonumber \\
&\quad \leq P(\bigcap_{j: \zeta^h_j \in [s',s'+t']} U_j^c) + P(B)+P(\tilde B)+ P(\bigcup_{\ell=0}^{t'^{\frac{m-1}{2m-1}}/3} \hat{C}^c_\ell) + P(\bigcap_{\ell=0}^{t'^{\frac{m-1}{2m-1}}/3} C_{\ell} \cap \hat{C}_\ell)\nonumber \\ 
&\quad \leq P(\bigcap_{j: \zeta^h_j \in [s,s+t]} U_j^c) + P(B)+ P(\tilde B) + \sum_{\ell=0}^{t'^{\frac{m-1}{2m-1}}/3} P(\hat{C}^c_\ell) + (P(C_{\ell} \cap \hat{C}_\ell))^{t'^{\frac{m-1}{2m-1}}/3}\nonumber \\
&\quad \leq  e^{-c_1 t'}+ e^{-c_3 t'^{\frac{m}{2m-1}}}+ e^{- c_6 t'} +  e^{-c_2t'^{\frac{m}{2m-1}}}  + ( A_m \exp(-\bar a_m t'^{1/(2m-1)}))^{t'^{\frac{m-1}{2m-1}}/3} \nonumber \\
&\quad \leq \bar A'_m \exp(-\bar b'_m t'^{\frac{m}{2m-1}})
\end{align*}
for large $t'$, where $A'_m$ and $b'_m$ are some positive constant.

So we know the statement in (\ref{eqn:qst_basic}) holds for all $\theta \geq \frac{2m-1}{m}$. Start with $m_1=2$, we have a recursion equation $m_k = \frac{2m_{k-1}-1}{m_{k-1}}$ and we know (\ref{eqn:qst_basic}) holds for all $\theta \geq m_k$. It is not hard to see that $m_k = \frac{k+1}{k}$ and thus $\lim_{k\rightarrow\infty} m_k = 1$.  Now observe that for $m_k = \frac{k+1}{k}$, we have $t'^{\frac{m_k}{2m_k-1}} > \sqrt{t'}$ for $k > 1$. 

So, for some constant $\bar a_\theta$ which is a function of $\Delta'$, we can rewrite (\ref{eqn:qst_basic}) as
\begin{align*}
    \tilde{q}[\alpha(s),\alpha(s+t)] \leq \bar A'_m \exp(-\bar a_\theta t^{1/\theta})
\end{align*}
which concludes the lemma.

\section{Proof of Lemma~\ref{lem:step6-lemma}}
\label{sec:step6-lemma-proof}
Let $U_j$ be the event in $ss1$ that the $j-$th honest block $b_j$
is a loner, i.e.,
\begin{align*}
    U_j = \{\tau^h_{j-1} < \tau^h_{j} - \Delta\} \bigcap \{\tau^h_{j+1} > \tau^h_{j} + \Delta\}
\end{align*}
\noindent Let $\hat{F}_j = U_j \bigcap F_j$ be the event that $b_j$ is a Nakamoto block. 
We define the following ``potential" catch up event in $ss1$: 
\begin{equation}
    \hat{A}_{ik} = \{D_i(\alpha(\tau^h_{k} + \Delta))  \ge  D_h(\alpha(\tau^h_{k-1})) - D_h(\alpha(\tau^h_i+\Delta))\},
\end{equation}
which is the event that the adversary launches a private attack starting from honest block $b_i$ and catches up the fictitious honest chain right before honest block $b_k$ is proposed. 

Next, define the following events
\begin{align}
    V^{ss1}_j &= \{\alpha(\tau^h_{j-1}) < \alpha(\tau^h_{j}) - \frac{\lambda_{\max}}{\lambda_h}\Delta\} \bigcap \{\alpha(\tau^h_{j+1}) > \alpha(\tau^h_{j}) + \frac{\lambda_{\max}}{\lambda_h}\Delta\}\\
    \hat{B}^{ss1}_{ik} &=  \{D_i(\alpha(\tau^h_k) + \frac{\lambda_{\max}}{\lambda_h}\Delta)  \ge  D_h(\alpha(\tau^h_{k-1})) - D_h(\alpha(\tau^h_i) + \frac{\lambda_{\max}}{\lambda_h}\Delta)\}
\end{align}

\begin{lemma}
\label{lem:event-reduction-1}
For any pair of $i,k$,
$$
\hat{A}_{ik} \subseteq \hat{B}^{ss1}_{ik}. 
$$
\end{lemma}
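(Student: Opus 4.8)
The plan is to establish the inclusion $\hat A_{ik}\subseteq\hat B^{ss1}_{ik}$ \emph{pathwise}: on every realization lying in $\hat A_{ik}$ I will check that the inequality defining $\hat B^{ss1}_{ik}$ also holds. Both events are phrased in terms of the depth $D_i(\cdot)$ of the adversary tree rooted at $b_i$ and the depth $D_h(\cdot)$ of the fictitious honest tree, evaluated at $ss1$-clock times, and they differ only in how the network delay $\Delta$ is accounted for: $\hat A_{ik}$ uses the warped value $\alpha(\tau^h_\ell+\Delta)$ while $\hat B^{ss1}_{ik}$ uses the constant offset $\alpha(\tau^h_\ell)+\frac{\lambda_{\max}}{\lambda_h}\Delta$, for $\ell\in\{i,k\}$.

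The ingredients I would collect first are: (i) $\alpha(\cdot)$ is nondecreasing (noted right after (\ref{eq:time-relations})); (ii) for any honest index $\ell$, the $dyn2$ network delay seen in the $ss1$ clock satisfies $\alpha(\tau^h_\ell+\Delta)-\alpha(\tau^h_\ell)=\Delta^{ss0}(\tau^h_\ell)\le\frac{\lambda_{\max}}{\lambda_h}\Delta$ by (\ref{eq:delta-inequality}), so $\alpha(\tau^h_\ell+\Delta)\le\alpha(\tau^h_\ell)+\frac{\lambda_{\max}}{\lambda_h}\Delta$; (iii) $t\mapsto D_i(t)$ is nondecreasing, since blocks are only ever appended to an adversary tree; and (iv) $t\mapsto D_h(t)$ is nondecreasing, since the fictitious honest tree only grows.

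The chaining is then immediate. On $\hat A_{ik}$, combining (iii) with (ii) at index $k$,
\[
D_i\left(\alpha(\tau^h_k)+\frac{\lambda_{\max}}{\lambda_h}\Delta\right)\ge D_i\big(\alpha(\tau^h_k+\Delta)\big)\ge D_h\big(\alpha(\tau^h_{k-1})\big)-D_h\big(\alpha(\tau^h_i+\Delta)\big),
\]
and applying (iv) with (ii) at index $i$ gives $D_h(\alpha(\tau^h_i+\Delta))\le D_h(\alpha(\tau^h_i)+\frac{\lambda_{\max}}{\lambda_h}\Delta)$, so the right-hand side above is at least $D_h(\alpha(\tau^h_{k-1}))-D_h(\alpha(\tau^h_i)+\frac{\lambda_{\max}}{\lambda_h}\Delta)$. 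The two displays together give exactly the inequality defining $\hat B^{ss1}_{ik}$, hence the inclusion.

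I do not expect a genuine obstacle here: the one point to watch is the sign bookkeeping, namely that enlarging the delay to $\frac{\lambda_{\max}}{\lambda_h}\Delta$ helps the adversary on the left-hand side (more elapsed time for $D_i$) and also relaxes the right-hand side, since the term $D_h(\alpha(\tau^h_i)+\cdots)$ enters with a minus sign, so both modifications push in the same direction and the inclusion (rather than its reverse) is what holds. This lemma is the bookkeeping device that lets the later steps replace the true, clock-dependent delay by the constant $\Delta'=\frac{\lambda_{\max}}{\lambda_h}\Delta$ used in the system $ss2$.
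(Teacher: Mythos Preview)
Your proof is correct and follows essentially the same route as the paper: both establish $\alpha(\tau^h_\ell+\Delta)\le\alpha(\tau^h_\ell)+\frac{\lambda_{\max}}{\lambda_h}\Delta$ (you via (\ref{eq:delta-inequality}), the paper directly from (\ref{eq:time-relations})) and then use monotonicity of $D_i(\cdot)$ and $D_h(\cdot)$ to chain the inequalities. The sign bookkeeping you flag is exactly the point, and your remark about why this lemma lets later steps replace the variable delay by the constant $\Delta'=\frac{\lambda_{\max}}{\lambda_h}\Delta$ is accurate.
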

\begin{proof}
Using equation~\ref{eq:time-relations}, we have
\begin{align*}
    \alpha(\tau^h_k + \Delta) &= \int_{0}^{\tau^h_k + \Delta}\frac{\lambda^c_h(u)}{\lambda_h}du = \int_{0}^{\tau^h_k}\frac{\lambda^c_h(u)}{\lambda_h}du + \int_{\tau^h_k}^{\tau^h_k + \Delta}\frac{\lambda^c_h(u)}{\lambda_h}du\\
    &\leq \alpha(\tau^h_k) + \frac{\lambda_{\max}}{\lambda_h}\Delta
\end{align*}
Similarly, $\alpha(\tau^h_i + \Delta) \leq \alpha(\tau^h_i) + \frac{\lambda_{\max}}{\lambda_h}\Delta$. Because $D_h(.)$ and $D_i(.)$ are increasing functions over their domain, we have 
\begin{align*}
    D_i(\alpha(\tau^h_k + \Delta)) &\leq D_i(\alpha(\tau^h_k) + \frac{\lambda_{\max}}{\lambda_h}\Delta) \text{ and }\\
    D_h(\alpha(\tau^h_i + \Delta)) &\leq D_h(\alpha(\tau^h_i) + \frac{\lambda_{\max}}{\lambda_h}\Delta)
\end{align*}
\end{proof}

\begin{lemma}
\label{lem:event-reduction-2}
For all $j$,
$$V^{ss1}_j \subseteq U_j.$$ 
\end{lemma}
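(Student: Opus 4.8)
The plan is to unwind both events into inequalities on consecutive honest arrival times and transfer them across the time change $\alpha(\cdot)$ using nothing more than the uniform bound $\lambda^c_h(u)\le\lambda_{\max}$. Recall from (\ref{eq:time-relations}) that $\alpha(t)=\int_0^t \frac{\lambda^c_h(u)}{\lambda_h}\,du$, so for any $t_1<t_2$,
\[
\alpha(t_2)-\alpha(t_1)=\int_{t_1}^{t_2}\frac{\lambda^c_h(u)}{\lambda_h}\,du\;\le\;\frac{\lambda_{\max}}{\lambda_h}\,(t_2-t_1),
\]
which is just the upper half of (\ref{eq:delta-inequality}) applied to an arbitrary interval, together with the fact that $\alpha$ is nondecreasing (since $\lambda^c_h\ge 0$).

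First I would apply this estimate with $t_1=\tau^h_{j-1}$ and $t_2=\tau^h_j$: on the event $V^{ss1}_j$ we have $\alpha(\tau^h_j)-\alpha(\tau^h_{j-1})>\frac{\lambda_{\max}}{\lambda_h}\Delta$, and combining with the displayed inequality yields $\frac{\lambda_{\max}}{\lambda_h}(\tau^h_j-\tau^h_{j-1})>\frac{\lambda_{\max}}{\lambda_h}\Delta$, hence $\tau^h_{j-1}<\tau^h_j-\Delta$. Next I would repeat the identical argument with $t_1=\tau^h_j$, $t_2=\tau^h_{j+1}$ to obtain $\tau^h_{j+1}>\tau^h_j+\Delta$. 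These two inequalities are exactly the defining conditions of $U_j$, so every sample point in $V^{ss1}_j$ lies in $U_j$, i.e. $V^{ss1}_j\subseteq U_j$.

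There is essentially no obstacle in this lemma; the only points that need care are (i) that the constant appearing is $\lambda_{\max}$ rather than $\lambda_{\min}$ — we need an \emph{upper} bound on the warped network delay in order to shrink the ``loner'' window, and $\lambda_{\max}$ furnishes one uniformly in time — and (ii) that $\alpha$ is monotone so the integral comparison preserves the direction of the inequality. This inclusion, paired with the analogous $\hat A_{ik}\subseteq\hat B^{ss1}_{ik}$ from Lemma~\ref{lem:event-reduction-1}, is what lets one replace the events $\hat A_{ik}$ and $U_j^c$ by their warped-clock counterparts $\hat B^{ss1}_{ik}$ and $V^{ss1}_j{}^c$ in the reduction of $\hat F_j^c$ used to prove Lemma~\ref{lem:step6-lemma}.
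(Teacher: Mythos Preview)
Your proof is correct and follows essentially the same approach as the paper: both exploit the bound $\lambda^c_h(u)\le\lambda_{\max}$ on the integrand of $\alpha$ to transfer the warped-clock loner inequalities back to the original clock. The paper's sketch integrates over $[\tau^h_{j-1},\tau^h_{j-1}+\Delta]$ and $[\tau^h_{j},\tau^h_{j}+\Delta]$ and then invokes monotonicity of $\alpha$, whereas you integrate directly over $[\tau^h_{j-1},\tau^h_{j}]$ and $[\tau^h_{j},\tau^h_{j+1}]$, which is marginally cleaner since it yields the desired inequalities on the $\tau^h$'s immediately without a separate monotonicity step.
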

\begin{proof}
    This can be proved using the fact that $\int_{\tau^h_{j-1}}^{\tau^h_{j-1} + \Delta} \frac{\lambda^c_h(u)}{\lambda_h}du \leq \frac{\lambda_{\max}}{\lambda_h}\Delta$ and $\int_{\tau^h_{j}}^{\tau^h_{j} + \Delta} \frac{\lambda^c_h(u)}{\lambda_h}du \leq \frac{\lambda_{\max}}{\lambda_h}\Delta$.
\end{proof}

By time-warping, $R_m$ is an IID exponential random variable with rate $\lambda_h$. Let $\zeta^h_j = \alpha(\tau^h_j)$, that is, $\zeta^h_j$ is the time of mining of $j-$th honest block in the local clock of static system $ss1$. Similarly, we define $\zeta^a_j = \alpha(\tau^a_j)$ for the $j-$th adversarial block. Then, we can rewrite the event $\hat{B}_{ik}$ as:
\begin{align}
    \hat{B}^{ss1}_{ik} &=  \Bigg\{D_i(\zeta^h_k + \frac{\lambda_{\max}}{\lambda_h}\Delta)  \ge  D_h(\zeta^h_{k-1}) - D_h(\zeta^h_i +\frac{\lambda_{\max}}{\lambda_h}\Delta)\Bigg\}. \nonumber
\end{align}

\begin{lemma}  
\label{lem:ss0_F_j}
In the static system $ss1$, for each $j$
\begin{equation}
\label{eq:ss0_F_j}
    P(\hat{F}_j^c) = P(F_j^c \cup U_j^c) \leq P\left(\left(\bigcup_{(i,k): 0 \leq i < j <k} \hat{B}^{ss1}_{ik}\right) \cup (V^{ss1}_j)^c\right).
\end{equation}
\end{lemma}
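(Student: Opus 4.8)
The plan is to replay, in the time-warped static system $ss1$, the argument used for Lemma~\ref{lem:F_j}, the only new ingredient being that the warped network delay is replaced throughout by its uniform upper bound $\tfrac{\lambda_{\max}}{\lambda_h}\Delta$; this replacement is precisely what Lemmas~\ref{lem:event-reduction-1} and~\ref{lem:event-reduction-2} supply. Note that the statement is a purely set-theoretic / measurability fact about the depth processes $D_i(\cdot)$ and $D_h(\cdot)$, so nothing about the adversary's distribution (in particular the ``gift of $c-1$ blocks'' enjoyed by the adversary in $ss1$) enters the proof.

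First I would unfold $\hat{F}_j = F_j\cap U_j = \big(\bigcap_{0\le i<j}E_{ij}\big)\cap U_j$, with all quantities read in the local clock of $ss1$. Exactly as in the proof of Lemma~\ref{lem:F_j}, I would shift the time argument of $E_{ij}$ and then use that $D_h$ is a nondecreasing pure-jump process whose jumps occur only at honest arrival epochs while $D_i$ is nondecreasing: hence the continuous-time condition defining $E_{ij}$ need only be checked at the left limits just before the honest arrival times $\tau^h_k$, $k>j$, where $D_h$ equals $D_h(\alpha(\tau^h_{k-1}))$. Carrying this out gives $\bigcap_{0\le i<j}E_{ij}=\bigcap_{(i,k):0\le i<j<k}\hat{A}_{ik}^c$, and therefore $\hat{F}_j=\big(\bigcap_{(i,k):0\le i<j<k}\hat{A}_{ik}^c\big)\cap U_j$.

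Next I would bring in the two inclusions already established earlier in this section: $\hat{A}_{ik}\subseteq\hat{B}^{ss1}_{ik}$ (Lemma~\ref{lem:event-reduction-1}), which yields $(\hat{B}^{ss1}_{ik})^c\subseteq\hat{A}_{ik}^c$, and $V^{ss1}_j\subseteq U_j$ (Lemma~\ref{lem:event-reduction-2}). Intersecting over all admissible $(i,k)$ and combining with the previous display gives
\[
\Big(\bigcap_{(i,k):0\le i<j<k}(\hat{B}^{ss1}_{ik})^c\Big)\cap V^{ss1}_j\ \subseteq\ \hat{F}_j ,
\]
and taking complements yields $\hat{F}_j^c\subseteq\big(\bigcup_{(i,k):0\le i<j<k}\hat{B}^{ss1}_{ik}\big)\cup(V^{ss1}_j)^c$. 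The equality $P(\hat{F}_j^c)=P(F_j^c\cup U_j^c)$ in the statement is just the definition $\hat{F}_j=F_j\cap U_j$, and applying monotonicity of probability to the last inclusion gives the claimed bound.

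The only genuinely delicate step is the reduction of the ``for all $t'$'' clause in $E_{ij}$ to the countable collection of left limits at warped honest arrival times: one must argue that this passage neither loses nor introduces sample paths, using right-continuity and monotonicity of $D_h$ and $D_i$ together with the fact that the jump set of $D_h$ in the fictitious honest tree is contained in the honest arrival epochs. This is identical to the corresponding manipulation in the proof of Lemma~\ref{lem:F_j} (and ultimately Theorem~3.2 of \cite{dembo2020everything}), so no new probabilistic estimate is needed; the content of the present lemma is bookkeeping together with the two inclusion lemmas.
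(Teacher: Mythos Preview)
Your proposal is correct and follows essentially the same route as the paper, which simply says the result is proved ``in a similar way as Lemma~\ref{lem:F_j} and using Lemma~\ref{lem:event-reduction-1},~\ref{lem:event-reduction-2}.'' You have spelled out exactly that: replay the discretization of the $E_{ij}$ condition at honest arrival epochs to land on the $\hat{A}_{ik}$ events, then invoke the two inclusion lemmas and take complements. One minor remark: in the paper's proof of Lemma~\ref{lem:F_j} the reduction of $E_{ij}$ to the discrete $\hat{B}_{ik}$ conditions is carried out only after intersecting with the loner event, so your unconditional equality $\bigcap_{i}E_{ij}=\bigcap_{i,k}\hat{A}_{ik}^c$ is slightly stronger than what the paper writes; however, since you immediately intersect with $U_j$ anyway, this makes no difference to the conclusion.
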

This can be proved in a similar way as Lemma~\ref{lem:F_j} and using Lemma~\ref{lem:event-reduction-1},~ \ref{lem:event-reduction-2}. Furthermore, defining $X_d$, $d > 0$, as the time it takes in the local clock of static system $ss1$ for $D_h$ to reach depth $d$ after reaching depth $d-1$, we have
\begin{proposition}
\label{prop:ss1-bound-1}
Let $Y_d$, $d \geq 1$, be i.i.d random variables, exponentially distributed with rate $\lambda_h$. Then, each random variable $X_d$ is less than $\Delta'+Y_d$, where $\Delta' = \frac{\lambda_{\max}}{\lambda_h}\Delta$.
\end{proposition}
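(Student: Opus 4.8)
The statement is the $ss1$ counterpart of Proposition~\ref{prop:min-rate} (Proposition C.1 of \cite{dembo2020everything}), and the plan is to prove it the same way, the one new feature being that in $ss1$ the effective network delay is time-varying rather than equal to the constant $\Delta'$. Recall from Definition~\ref{def:fictitious} how $\T_h$ grows: once the first honest block of level $d-1$ is proposed, level $d-1$ stays open for a window of length $\Delta$ in real time (during which later honest blocks attach to it), and the first honest proposal after that window opens level $d$. Hence $X_d$ --- the local-clock time between the first block of level $d-1$ and the first block of level $d$ --- decomposes as $X_d = \delta_{d-1} + Y_d$, where $\delta_{d-1}$ is the \emph{local-clock} length of the level-$(d-1)$ window and $Y_d$ is the local-clock wait, measured from the end of that window, until the next honest proposal.

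First I would bound $\delta_{d-1}$. Since in real time the window has length $\Delta$, translating through the time-warp $\alpha(\cdot)$ of (\ref{eq:time-relations}) and invoking (\ref{eq:delta-inequality}) gives $\delta_{d-1} \le \frac{\lambda_{\max}}{\lambda_h}\Delta = \Delta'$ surely. Next I would identify $Y_d$: by the time-warp the honest proposals form a homogeneous Poisson process of rate $\lambda_h$ in the local clock of $ss1$ (the content of Remark~\ref{rem:R_m}), so $Y_d$, being the residual inter-arrival time of this process measured from the end of the level-$(d-1)$ window, is exponential with rate $\lambda_h$ by memorylessness. Because the half-open intervals over which successive $Y_d$'s are measured are pairwise disjoint, independent increments together with the strong Markov property make $\{Y_d\}_{d\ge 1}$ i.i.d. Combining the two bounds, $X_d = \delta_{d-1} + Y_d \le \Delta' + Y_d$, which is the assertion (the inequality being strict unless $\lambda^c_h \equiv \lambda_{\max}$ on the relevant real-time window, in which case $\delta_{d-1}=\Delta'$).

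The one point requiring care --- and the step I expect to be the main obstacle --- is that the endpoint of the level-$(d-1)$ window is \emph{not} a stopping time for the honest-arrival filtration on its own: its local-clock location involves the warp $\alpha$, hence the stake-arrival process of $dyn2$, which is not part of the block process. This is handled exactly as in \cite{dembo2020everything}: the stake-arrival trajectory of $dyn2$ is by construction independent of the block dynamics (recall that $\lambda^c_h(\cdot)$ depends only on stake arrivals), so conditioning on it renders $\alpha$ and each $\delta_{d-1}$ deterministic; the window endpoint is then a genuine stopping time for the (still homogeneous-after-warping) honest Poisson process, the strong Markov property yields that the conditional law of $Y_d$ is exponential of rate $\lambda_h$ regardless of the conditioning, and the unconditional statement follows by averaging. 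One should also record that the adversary's history-dependent message delays play no role here, since the fictitious tree $\T_h$ is defined purely from honest proposal times.

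For context, Proposition~\ref{prop:ss1-bound-1} is what lets the rest of the proof of Lemma~\ref{lem:step6-lemma} dominate $ss1$ (with its time-varying delay) by the static system $ss2$ with constant delay $\Delta'$ --- via Lemma~\ref{lem:ss0_F_j} and Lemmas~\ref{lem:event-reduction-1}--\ref{lem:event-reduction-2} --- apply Lemma~\ref{lem:step5-lemma-2} to $ss2$, and then chain back through Lemmas~\ref{lem:step3-lemma}, \ref{lem:step2-lemma}, \ref{lem:step1-lemma}, using $\alpha(s+t)-\alpha(s)=\int_s^{s+t}\frac{\lambda^c_h(u)}{\lambda_h}\,du \ge \frac{\lambda_{\min}}{\lambda_h}\,t$, to obtain the stated bound on $P(B_{s,s+t})$.
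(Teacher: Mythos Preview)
Your proposal is correct and follows essentially the same approach as the paper: decompose $X_d$ into the local-clock length of the $\Delta$-window plus the residual wait $Y_d$, bound the former by $\Delta'=\frac{\lambda_{\max}}{\lambda_h}\Delta$ via the time-warp integral, and invoke memorylessness of the warped Poisson process for the latter. The paper's own argument is terser and glosses over the stopping-time/independence subtlety you flag (the window endpoint depending on the stake-arrival trajectory through $\alpha$); your conditioning-on-$\lambda^c_h(\cdot)$ step is the right way to make that rigorous, and is a welcome addition rather than a departure.
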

\begin{proof}
Let $h_i$ be the first block that comes at some depth $d-1$ within $\T_h$. Then, in the local clock of static system, every honest block that arrives within interval $[\alpha(\tau^h_i),\alpha(\tau^h_i +\Delta)]$ will be mapped to the same depth as $h_i$, i.e., $d-1$. Hence, $\T_h$ will reach depth $d$ only when an honest block arrives after time $\alpha(\tau^h_i +\Delta)$. Now, due to time warping, in the local clock of static system $ss1$, we know that the difference between $\alpha(\tau^h_i +\Delta)$ and the arrival time of the first honest block after $\alpha(\tau^h_i +\Delta)$ is exponentially distributed with rate $\lambda_h$ due to the memoryless property of the
exponential distribution. This implies that for each depth d, $X_d = \alpha(\tau^h_i +\Delta) - \alpha(\tau^h_i) + Y_d = \int_{\tau^h_i}^{\tau^h_i + \Delta}\frac{\lambda^c_h(u)}{\lambda_h}du + Y_d \leq \Delta' + Y_d$ for some random variable $Y_d$ such that $Y_d, d\geq 1,$ are IID and exponentially distributed with rate $\lambda_h$.
\end{proof}
Thus, for $\Delta' = \frac{\lambda_{\max}}{\lambda_h}\Delta$, Proposition~\ref{prop:ss1-bound-1} implies that both Proposition~\ref{prop:bound-1} and Proposition~\ref{prop:bound-3} are satisfied for the static system $ss1$. Therefore,  for $\Delta' = \frac{\lambda_{\max}}{\lambda_h}\Delta$, a similar result holds for the event $\hat{B}^{ss1}_{ik}$ as in Lemma~\ref{lem:PBik_pow}. Additionally, Lemma~\ref{lem:step4-lemma} holds for $ss1$. Then, substituting $\Delta' = \frac{\lambda_{\max}}{\lambda_h}\Delta$ and using Lemma~\ref{lem:ss0_F_j}, we have both Lemma~\ref{lem:step5-lemma-1} and Lemma~\ref{lem:step5-lemma-2} satisfy for the static system $ss1$. For a time $t > 0$ in the local clock of the dynamic available system $dyn2$, we have $\alpha(t) \geq \frac{\lambda_{\min}}{\lambda_h}t$. Then, using Lemma~\ref{lem:step1-lemma}, Lemma~\ref{lem:step2-lemma}, Lemma~\ref{lem:step3-lemma}, we conclude the proof.











\end{document}